\theoremstyle{plain}
\newtheorem{thm}{\protect\theoremname}
\theoremstyle{definition}
\newtheorem{example}[thm]{}
\theoremstyle{plain}
\newtheorem{prop}[thm]{Proposition}
\theoremstyle{definition}
\newtheorem*{defn*}{}
\theoremstyle{plain}
\newtheorem{cor}[thm]{Corollary}
\theoremstyle{plain}
\newtheorem{lem}[thm]{Lemma}
\theoremstyle{plain}
\let\myTOC\tableofcontents
\renewcommand\tableofcontents{  \frontmatter
  \pdfbookmark[1]{\contentsname}{}
  \myTOC
  \mainmatter }
\begin{document}

		\title{Set Identified Dynamic Economies and Robustness to Misspecification}

		\author{Andreas Tryphonides\\ Humboldt University}

		\address{Spandauerstr 1, 10178,Berlin,Germany. }

		\email{andreas.tryphonides@hu-berlin.de}

		\date{10/12/2017}

		\thanks{\tiny{This paper is based on chapter 2 of my PhD thesis (European University Institute). I thank Fabio Canova for his advice and help with earlier versions of this paper, and the rest of the thesis committee:  Peter Reinhard Hansen, Frank Schorfheide and Giuseppe Ragusa for comments and suggestions. This paper benefited also from discussions with George Marios Angeletos,  Eric Leeper, Ellen McGrattan, Juan J. Dolado, Thierry Magnac, Manuel Arellano, Eleni Aristodemou, Raffaella Giacomini, and comments from  participants at the 2nd IAAE conference, the 30th EEA Congress, T2M -BdF workshop, the ES Winter Meetings, the seminars at the University of Cyprus, Royal Holloway, Humboldt University, Universidad Carlos III, University of Glasgow, University of Southampton, the Econometrics and Applied Macro working groups at the EUI. Any errors are my own}}

	\setcounter{page}{1}
	\pagenumbering{arabic}
	\pagestyle{plain}
	\thispagestyle{empty}

\singlespacing
\begin{abstract}
We propose a new inferential methodology for dynamic economies that is robust to misspecification of the mechanism generating frictions. Economies with frictions are treated as perturbations of a frictionless economy that are consistent with a variety of mechanisms. We derive a representation for the law of motion for such economies and we characterize parameter set identification. We derive a link from model aggregate predictions to distributional information contained in qualitative survey data and specify conditions under which the identified set is refined. The latter is used to semi-parametrically estimate distortions due to frictions in macroeconomic variables. Based on these estimates, we propose a novel test for complete models. Using consumer and business survey data collected  by the European Commission, we apply our method to estimate distortions due to financial frictions in the  Spanish economy. We investigate the implications of these estimates for the adequacy of the standard model of financial frictions SW-BGG (\citet{10.1257/aer.97.3.586}, \citet{Bernanke_Gertler}). \newline \newline
\bigskip 
{\small 	Keywords:  Frictions, Heterogeneity, Set Identification, Surveys, Testing \newline JEL Classification: C13, C51, C52, E10, E60} 
\end{abstract}

	\maketitle

\newpage
\normalsize
\doublespacing


\section{Introduction}

Incorporating frictions in dynamic macroeconomic models is undoubtedly essential, both in terms of their theoretical and empirical implications. They are being routinely used to explain both the observed persistence of macroeconomic time series and several stylized facts obtained from firm and household level data. However, at the macroeconomic level, the selection between different types of
frictions and the specification of their corresponding mechanism is a complicated process
featuring arbitrary aspects. This arbitrariness implies that different studies
may find support for alternative types of frictions, depending on other
assumptions made. For example, the choice of nominal rigidities might depend
on which real rigidities are included in the model or whether there is
variable capital utilization or not\footnote{%
See the analysis of \citet{christiano2005nominal} and their
comparison to the results of \citet{2000}.}. More importantly, unless frictions are
micro-founded, policy conclusions may become whimsical as different
mechanisms imply different relationships between policy parameters and
economic outcomes.

In this paper we propose a new inferential methodology that is robust to
misspecification of the mechanism generating frictions in a dynamic stochastic economy. The approach treats
economies with frictions as perturbations of a frictionless economy, which are not uniquely pinned down, and are consistent with different structural specifications.  From a hypothesis testing point of view, the set of alternatives considered are no longer arbitrary. 

The paper makes several contributions. First, we derive a characterization for the law of motion of an economy with frictions that imposes identifying restrictions on the
solution of the model. \citet{ECTA:ECTA768}, CKM hereafter, identify wedges
in the optimality conditions of a frictionless model that produce the same
equilibrium allocations in economies with specific parametric choices for
the frictions. We also take a frictionless model as a benchmark but contrary
to CKM we construct a general representation for the wedge in the decision
rules. We illustrate through examples that the sign of the conditional mean
of the decision rule wedge is typically known, even when the exact mechanism
generating the friction is unknown. We utilize knowledge of the sign of the conditional mean
to set-identify the parameters of the benchmark model. 

Moment inequality restrictions
have been used to characterize frictions in specific markets, see for
example \citet{1996} and \citet{ECTA:ECTA1214}. To our knowledge, we are the first to characterize
such restrictions in dynamic stochastic macroeconomic models and to show their
relationship with the literature on wedges in equilibrium models. We are also the first to characterize set identification 
in this class of models. The econometric theory we develop can accommodate inequality restrictions of general form.

Due to set identification, many models with frictions are likely to be consistent with the robust identifying restrictions. Thus, additional data other than macroeconomic time series can be potentially useful in order to further constrain the set of admissible models. As a second contribution, our methodology permits the introduction of additional restrictions. We show how qualitative survey data can be useful 
for this purpose, since they provide distributional information which is highly relevant in models featuring  frictions and heterogeneity. More specifically, qualitative survey data are linked to current and future beliefs of agents in the model. If survey data reflect subjective conditional expectations, they
contain important indications about agents' actions through the behavioral equations. In the literature\footnote{%
We mainly refer to the treatment of survey data in the most recent "modern"
DSGE literature. The treatment of survey data in Rational Expectations
models date back to the work of \citet{Pesaran} and others.}, survey data are
typically linked to model based random variables using additional observation equations
with additive measurement error (see e.g. \citet{DelNegro2013}). However, this paper follows a different approach as the model is incomplete and there is no well defined model quantity that can be linked to the data. This is where the law of motion representation we derive proves useful, as we can link the macroeconomic distortions to aggregated qualitative surveys through additional moment inequality restrictions. 

Regarding possible applications of the methodology, we show how it contributes to model selection, where the
object of interest is not the set-identified parameter vector itself, but the implied semi-parametric estimates of frictions. To that end we propose a novel Wald test that compares the estimates of distortions implied by a candidate model to the robust set obtained using our method.   
We also derive large sample theory for this type of test.  Beyond restricting the set of observationally equivalent distortions, the use of additional data i.e. surveys prevents the test statistic from degenerating when the data generating process is such that the incomplete model nests the candidate model. 

We apply the methodology to estimate the distortions present in the Spanish
economy due to financial frictions. We estimate a small open economy version
of the \citet{10.1257/aer.97.3.586} model using qualitative survey data
collected by the Commission on the financial constraints of firms and consumers. We contrast the implied estimates of distortions to macroeconomic aggregates due to these frictions to those identified using a complete model that incorporates the \citet{Bernanke_Gertler} financial accelerator. 

We have already referred to how our paper relates to some strands of literature, namely the literature
on wedges and frictions (i.e. \citet{ECTA:ECTA768}, %
\citet{christiano2005nominal}, \citet{2000}) and the literature
on including survey data in DSGE models (i.e. \citet{DelNegro2013}). We contribute to the literature that deals with partial identification in
structural macroeconomic models (e.g. \citet{lubik,coroneo2011testing}) and the literature on applications of moment inequality models, see \citet{ECTA:ECTA1480} and references therein. 
We postpone illustrating more detailed connections to the vast partial identification literature for later sections, when we deal with estimation and inference using our methodology.

The rest of the paper is organized as follows. Section 2 provides a motivating example of the methodology in a partial equilibrium context, while section 3 presents  the prototype economy which will be used as an experimental lab to illustrate our methodology. Section 4 examines the distortions
present in the decision rules and their observable aggregate implications.
Section 5 provides the formal treatment of identification without additional information while Section 6 analyzes the case with additional information, including qualitative survey data. Section 7 discusses inference issues and provides the test. Section 8 contains the application
to Spanish data and Section 9 concludes and provides avenues for future
work. Appendix A contains proofs and the empirical results. Appendix B (online) contains more examples of moment inequality restrictions, details on the relation to the model uncertainty literature and the algorithm to compute frictions, an example on identification using our method, an illustration of the validity of bootstrap for the proposed test and details on the survey data used in the application. 

Throughout the paper we refer to three different conditional probability measures. The
objective probability measure, $\mathbb{P}_{t}$, the probability measure
determined by the frictionless model $M_{f}$, $\mathbb{P}^{M_{f}}_{t}(.|,.)$,
and the subjective probability measure $\mathcal{P}_{t,i}$ where $i$
identifies agent $i$ and $t$ the timing of the conditioning set. All three
are absolutely continuous with respect to Lebesgue measure. The
corresponding conditional expectation operators are $\mathbb{E}_{t}(.)$, $%
\mathbb{E}_{t}^{M_{f}}(.|.)$ and $\mathcal{E}_{i,t}(.)$. We distinguish
between the first and the second conditional expectation as the model will be correctly specified
only if the econometrician employs the right DGP. Moreover, $T$ denotes the length of both
the aggregate and average survey data, and $L$ the number of agents. We
denote by $\theta \in \Theta $ the parameters of interest with $\Theta^{CS}_{\alpha}$ the corresponding $1-\alpha$ level confidence set, and by $%
q_{j}(;\theta )$ a measurable function. Bold capital letters e.g. $\mathbf{Y}%
_{t}$ denote a vector of length $\tau$ containing $\{Y_{j}\}_{j\leq \tau}$. The
operator $\rightarrow _{p}$ signifies convergence in probability and the
operator $\rightarrow _{d}$ convergence in distribution; $\mathcal{N}(.,.)$
is the Normal distribution whose cumulative is $\Phi(.,.)$; $||.||$ is the Euclidean norm unless otherwise
stated; $\perp $ signifies the orthogonal complement and $\emptyset $ the
empty set. We denote by $(\Omega ,\mathcal{S}_{y},\mathbb{P})$ the
probability triple for the observables to the econometrician, where $%
\mathcal{S}_{y}=\sigma (Y)$, is the sigma-field generated by $Y$ and $%
(\Omega ,\mathcal{F},\{\mathcal{F}\}_{t\geq 0},\mathbb{P})$ the
corresponding filtered probability space.  Finally we denote by $\odot$ and $\oslash$ the 
Hadamard product and division respectively.

\section{Robustness in Partial Equilibrium}
We first motivate our approach by illustrating the special case of observing a single household 
receiving  a random exogenous endowment $y_{i,t}$ and making consumption-savings decisions $(c_{i,t},s_{i,t})$ in a partial equilibrium context. As in \citet{Zeldes1989},  wealth $w_{i,t}$ earns a riskless return $R$ and there is a borrowing limit at zero:
\begin{eqnarray*}
&&\max_{\{c_{i,t}\}_{t=1}^{\infty }}\mathbb{E}_{0}\sum_{t=1}^{\infty }\beta
^{t}\frac{c_{i,t}^{1-\omega }-1}{1-\omega }\\
s.t. &y_{i,t}&=s_{i,t}+c_{i,t}\\
&w_{i,t+1} &= Rw_{i,t}+s_{i,t}\\
&w_{i,t+1} &\geq 0
\end{eqnarray*}

The corresponding consumption Euler equation is distorted by the non-negative Lagrange multiplier on the
occasionally binding liquidity constraint, denoted by $\lambda_{i,t}$:
\begin{eqnarray}
U'(c_{i,t};\theta) & = & \beta\mathbb{E}_{t}(1+r_{t+1})U'(c_{i,t+1};\theta)+\lambda_{i,t}\label{eq:zeldes}
\end{eqnarray}
Whether the borrowing constraint is binding depends on the  household's expectations about future income, and therefore on its information set. Consequently, condition \eqref{eq:zeldes} is a conditional moment inequality, and for any conformable variable $z_{i,t}$ that belongs to the household's information set, the following unconditional moment inequality holds:
\begin{eqnarray}
\mathbb{E}\left[(U'(c_{i,t};\theta)-\beta(1+r_{t+1})U'(c_{it+1};\theta))\otimes z_{i,t}\right] & \geq & 0 \label{eq:unczeldes}
\end{eqnarray}
 Given the inequality, there is no unique vector $(\theta,\beta)$ that satisfies it, and we therefore no longer have 
point identification. In order to derive explicit identification regions, we adopt the approximation of \citet{Hall1978}, which for CRRA utility and constant interest rate implies
the following law of motion for consumption:
\begin{eqnarray}
c_{i,t+1}&=&\mu c_{i,t}+\epsilon_{i,t+1}+\tilde{\lambda}_{i,t}\label{eq:HallZeldes}
\end{eqnarray} 
where $\tilde{\lambda}_{i,t}\equiv-(U''(c_{i,t+1}))^{-1}\lambda_{i,t}$. Since $\mu$ may be exceed one if $\beta R>1$ and vice verca\footnote{In fact, $\mu$ is equal to $\left(\beta(1+r)\right)^{-\frac{u'(c)}{cu''(c)}}>0$, and it will be constant for CRRA utility.}, we re-express \ref{eq:HallZeldes}
in terms of consumption growth, $\Delta c_{i,t+1}=\tilde{\mu}c_{i,t}+\epsilon_{t+1}+\tilde{\lambda}_{i,t}$. 

Since the sign of $\mathbb{E}_{t}\tilde{\lambda}_{t}$ is still positive, the identified set for $\mu_{0}$ is : 
\begin{eqnarray}
\mu_{ID,1}&:=& \left(0,1+\frac{\mathbb{E}z_{i,t}\Delta c_{i,t+1}}{\mathbb{E}z_{i,t}c_{i,t}}\right] = \left(0,1+\hat{\mu}\right] \label{eq:bound1}
\end{eqnarray}
As we can readily see from \eqref{eq:bound1}, the inherent unobservability of $\tilde{\lambda}_{i,t}$ is the root of the loss of point identification of $\mu_{0}$. 
An interesting question is whether additional data can be informative about, or even point identify, $\mu_{0}$. The answer in this case is affirmative. 

Suppose that we observe the dichotomous response of the household over time to a survey question that asks whether the household is (or expects to be) financially constrained. An honest household will answer positively whenever $\lambda_{i,t}>0$. What this implies is that the time series of responses $\{\tilde{\chi}_{i,t}\}_{t\leq N}$ for $\tilde{\chi}_{i,t}\equiv\mathbf{1}('\lambda_{i,t}>0')$ can be used to estimate $\mathbb{P}_{t}(\lambda_{i,t}>0)$. The conditional probability distribution of $\Delta c_{i,t+1}$, $\mathbb{P}_{t}(\Delta c_{i,t+1}<u)$ for any $u\in \mathbb{R}$ is equal to
\begin{eqnarray} 
\mathbb{P}_{t}( \epsilon_{i,t+1}<u -\tilde{\mu}c_{i,t})\mathbb{P}_{t}(\lambda_{i,t}=0)+\mathbb{P}_{t}( \epsilon_{i,t+1}<u -\tilde{\mu}c_{i,t}-\lambda_{i,t})\mathbb{P}_{t}(\lambda_{i,t}>0) &&\label{eq:probit1}
\end{eqnarray}
For $\epsilon_{t+1}\sim N(0,\sigma_{\epsilon}^{2})$, \eqref{eq:probit1} simplifies further to,
\begin{eqnarray} 
\Phi_{0,\sigma_{\epsilon}^{2}}(u -\tilde{\mu}c_{i,t})\mathbb{P}_{t}(\lambda_{i,t}=0)+\Phi_{0,\sigma_{\epsilon}^{2}}(u -\tilde{\mu}c_{i,t}-\lambda_{i,t})\mathbb{P}_{t}(\lambda_{i,t}>0)&&\label{eq:probit2}
\end{eqnarray} 
A key observation is that the distortion $\lambda_{i,t}$ is positive only when the state variables $(w_{i,t},y_{i,t})$ lie in a particular area of their domain $\mathcal{A}$ which is determined by an unknown endogenous threshold $(w^{*}_{i},y_{i}^{*})$. A \textit{complete} model would pin down, analytically or numerically this domain. Observing $\tilde{\chi}_{i,t}$ dispenses us with the need to characterize $\mathcal{A}$. Once the threshold condition is satisfied, $\lambda_{i,t}$ will be a linear(ized) function of $(w_{i,t},y_{i,t})$. 

 Moreover, it is also reasonable to assume that $\lambda_{i,t}$ is a function of the unobserved exogenous shocks $\epsilon_{t+1}$. We can therefore substitute $\lambda_{i,t}$ for $\lambda_{1}c_{i,t}+\lambda_{2}\epsilon_{i,t+1}$, in \eqref{eq:probit2}, which yields the following quantile restriction:\small 
\begin{eqnarray} 
\nonumber \quad \mathbb{P}_{t}(\Delta c_{i,t+1}<u) &=&\Phi_{0,\sigma_{\epsilon}^{2}}(u -\tilde{\mu}c_{i,t})\mathbb{P}_{t}(\lambda_{i,t}=0)+\Phi_{0,\sigma_{\epsilon}^{2}(1+\lambda_{2})^{2}}(u -\tilde{\mu}^{c}_{ols}c_{i,t})\mathbb{P}_{t}(\lambda_{i,t}>0)\\
 &\geq&\Phi_{0,\sigma_{\epsilon}^{2}}(u -\tilde{\mu}c_{i,t})\mathbb{P}_{t}(\lambda_{i,t}=0)+\Phi_{0,\sigma_{ols}^{2}}(u -\tilde{\mu}_{ols}c_{i,t})\mathbb{P}_{t}(\lambda_{i,t}>0)\label{eq:probit4}
\end{eqnarray} 
\normalsize where  $\tilde{\mu}^{c}_{ols}$ is equivalent to the least squares estimate in the case of borrowing constraints with probability one. The inequality is derived for a class of functions $\mathbb{P}_{t}(\lambda_{i,t}>0)$.
 
 How does this additional inequality refine the identified set in \eqref{eq:bound1}? Denote by $\mu_{ID,2}$ the set of $\mu$ consistent with the unconditional moment inequality constructed using $z_{t}$ and  \eqref{eq:probit4}.  The simplest way to show the refinement of \eqref{eq:bound1} is to show that there exists a $\mu\in\mu_{ID,1}$ that does not belong to $\mu_{ID,2}$. For simplicity, let $\tilde{\mu}_{0}=0$, which implies that $\Delta c_{i,t+1}= \tilde{\chi}_{i,t}(\lambda_1c_{i,t}+\lambda_2\epsilon_{i,t+1})+(1-\tilde{\chi}_{i,t})\epsilon_{i,t+1}$. This is equivalent to considering an economy in which $\beta(1+r)<1$ and the consumer has very high risk aversion ($\omega\to \infty$). In Appendix A we show that using $\mu_{ols}$ as the test point, \eqref{eq:probit4} is not satisfied for $p_t=1(c_{i,t}<0)$, and therefore $\mu_{ols}\notin \mu_{ID,2}$. The upper bound in $\mu_{ID}$ should thus be lower than $\mu_{ols}$. In words, as long as the agent has some positive unconditional probability of not being constrained (here $\frac{1}{2}$), observing her responses provides additional information. 

The identified set for the reduced form, $\mu_{ID}=\mu_{ID,1}\cap\mu_{ID,2}$, is informative in different ways. \textit{First}, as already mentioned, given $\mu_{ID}$, we can recover the implied identified set for the risk aversion parameter, $\omega_{ID}$. As is typical in the empirical macro literature,  $(r,\beta)$ are not identified from the dynamic properties of the model but rather from other information, i.e steady states. In the case when $\beta(1+r)=1$, risk aversion is unidentified, $\omega_{ID} =\mathbb{R}^{+}$, as consumption follows a random walk. For $\beta(1+r)\neq 1$,
\begin{eqnarray*}
\omega_{ID} & := & \left\{\omega \in \mathbb{R}^{+}\cap\left[ \frac{\log(\beta(1+r))}{\log\left(1+\mu^{u}_{ID}\right)}, \frac{\log(\beta(1+r))}{\log\left(1+\mu^{l}_{ID}\right)}\right]\right\}
 \end{eqnarray*}
This interval has a very intuitive interpretation, and actually reflects restrictions on preferences implied by the presence of non-diversifiable income risk and observed behavior. For the sake of illustration let us focus on the identified set implied by \eqref{eq:bound1}. If $\beta(1+r)<1$, the household is impatient and does not accumulate wealth indefinitely.  Using $z_{i,t}=y_{i,t}$, since income and consumption growth are negatively correlated, which is usually the case when liquidity constraints are present\footnote{\citet{Deaton}}, $\omega_{ID,1}= \left\{\omega \in \mathbb{R}^{+}: \omega < \frac{\|\log(\beta(1+r))\|}{\log \mathbb{E}y_{i,t}c_{i,t}-\log\left({\mathbb{E}y_{i,t}c_{i,t}-{\|\mathbb{E}y_{i,t}\Delta c_{i,t+1}\|}{}}\right)}\right\}$. The stronger the negative correlation,  the lower the upper bound on risk aversion, indicating the fact that the less risk averse household is not accumulating enough wealth to fully insure against income risk. On the contrary, weak negative correlation permits higher estimates of risk aversion. Similar to \cite{ARELLANO2012256}, the degree of under-identification therefore depends on this correlation, which is itself a function of $(\omega,\beta,r)$.

\textit{Second}, the upper and lower bounds for $\mu_{ID}$ can be used to do inference on other interesting quantities, including the model itself. For example, we can construct set estimates of distortions in consumption implied by liquidity constraints by plugging-in the identified set and averaging over the observations. In population, this produces an identified set for distortions: 
\begin{eqnarray}
\mathbb{E}\lambda_{i,t} &\in &\left[\mathbb{E}c_{i,t+1}-\mu^{u}_{ID}c_{i,t},\mathbb{E}c_{i,t+1}-\mu^{l}_{ID}c_{i,t}\right] \label{eq:testing}
\end{eqnarray}
Another key observation is that although we have motivated the sign of $\mathbb{E}_{t}\lambda_{i,t}$ by looking at the exogenous constraint of no borrowing, the sign is robust to different mechanisms, as long as they prevent the household from smoothing consumption.  For example,  a non zero constraint on next period wealth, i.e. $w_{i,t+1}\geq \underbar{b}$ or a constraint that is endogenous, i.e. $w_{i,t+1}\geq \underbar{b}(y_{i,t})$ where $y_{i,t}$ is the relevant state variable, lead to an Euler equation which is distorted in the same direction. Therefore, the identified set of distortions will be a priori consistent with different mechanisms generating liquidity constraints. 

A mechanism will not be rejected as long as it generates distortions that will statistically belong to the identifed set for $\mathbb{E}\lambda_{t}$. Moreover, we will see that in general equilibrium, survey data will generate additional moment restrictions which can potentially add information. The identified set is expected to be refined, and this refinement can potentially provide power to reject alternative models that cannot be rejected using bound \eqref{eq:bound1}.

In the rest of the paper we will gradually characterize the case for general equilibrium models and what information aggregated surveys can provide, based on similar assumptions to those made in the simple example. To do so, we will provide below the benchmark general equilibrium model.
\section{Towards General Equilibrium}
In the last section we considered liquidity constraints for a single household in a partial equilibrium context. In this section,  we consider additional types of frictions
by including capital and a representative firm in the economy and set up the general equilibrium framework. Building up from the previous section, consider a simple Real Business Cycle (RBC) model, in which dynastic households with Constant Relative Risk Aversion (CRRA)
preferences form expectations about key state variables and make
consumption - savings decisions. We denote by $\Lambda_{t}(i)$ the distribution of agents at time $t$, and the corresponding aggregate
variables by capital letters i.e. $X_{t}\equiv \int {x_{i,t}d\Lambda_{t}(i)}$.
Households rent capital $(k_{i,t})$ to a representative
firm, which is used for production, and receive a share $(\eta _{i,t})$ of profits made by the firm. Profits 
are simply the production of output using capital intensive technology with random productivity $(Z_{t}K_{t}^{\alpha })$ minus the rents paid to all households $(R_{t}K_{t})$, that is,  $pr_{i,t}=\eta _{i,t}(Z_{t}K_{t}^{\alpha }-R_{t}K_{t})$. Household income is therefore $y_{i,t}=R_{t}k_{i,t}+pr_{i,t}$. Individual investment decisions $( \iota_{i,t})$ increase the availability of capital for next period, up to a certain level of depreciation. The household's optimization problem is therefore as follows:
\begin{eqnarray*}
&&\max_{\{c_{i,t},k_{i,t+1},w_{i,t+1},\iota_{i,t}\}_{1}^{\infty }}\mathcal{E}_{i,0}\sum_{t=1}^{\infty }\beta
^{t}\frac{c_{i,t}^{1-\omega }-1}{1-\omega }\\
s.t.  &w_{i,t+1} &= R_{t}w_{i,t} + y_{i,t}-c_{i,t}-\iota_{i,t} \\
      &k_{i,t+1} &=(1-\delta )k_{i,t}+  \iota_{i,t}
\end{eqnarray*}

Note that individual expectations $\mathcal{E}_{i,0}$ are not
necessarily formed with respect to the objective probability measure, nor with respect to the same information set. Denote by $\xi _{t}$ the aggregation residual, and by $\tilde{X}$ the percentage deviations of any aggregate variable $X_{t}$ from aggregate steady state $x_{ss}$. Then the system of aggregate linearized\footnote{Note that here we follow the linear approximation to the first order conditions as widely used in the DSGE literature, while in the last section we followed \citet{Hall1978}, where the variables are not in percentage deviation from steady state. The analysis is nevertheless largely unaffected.}
 equilibrium and market clearing conditions are as follows:  
 \begin{eqnarray*}
-\omega \tilde{C}_{t}-\xi _{C,t}-\beta r_{ss}\bar{\mathcal{E}_{t}}\tilde{R}%
_{t+1}+\omega \bar{\mathcal{E}_{t}}\tilde{C}_{t+1} &=&0 \\
y_{ss}\tilde{Y_{t}}-c_{ss}\tilde{C}_{t}-i_{ss}\tilde{I}_{t} &=&0 \\
\tilde{K}_{t+1}-(1-\delta )\tilde{K}_{t}-\delta \tilde{I}_{t} &=&0 \\
\tilde{Y}_{t}-\tilde{Z}_{t}-\alpha \tilde{K}_{t}-\xi _{Y,t} &=&0 \\
\tilde{R_{t}}-\tilde{Z}_{t}+(1-\alpha )\tilde{K}_{t}-\xi _{R,t} &=&0
\end{eqnarray*}
Under approximate linearity, that is, when $\xi _{t}$ is negligible 
\footnote{Absence of frictions is one of the reasons for which approximate linearity holds.}, 
we can obtain the aggregate decision rules for X$_{t}$, which will depend on predetermined capital $\tilde{k%
}_{i,t}$ and on aggregate conditional
expectations on productivity for some $j$ periods ahead, $\bar{\mathcal{E}}_{t}\tilde{Z}_{t+j}$.

In the previous section, we focused on the consumption decision of agents by taking the interest rate as exogenous and fixed, that is $R_{t}=R$. In this section, $R_{t}$ is endogenous as it is determined by the marginal productivity of capital, and is subject to aggregate risk. 
With regard to the aggregate implications of the borrowing constraints introduced in section 2, the characterization of individual behavior is identical. In general equilibrium, what needs to be taken into account is the determination of $R_{t}$. We will formally deal with general equilibrium in the next section.  Moreover, in this section the idiosyncratic risk to income is $\eta_{i,t}$ while there is also aggregate risk, $Z_{t}$. Compared to the last section, the introduction of aggregate risk does not alter the observable implications on individual consumption behavior. This is for the reason that although aggregate risk complicates forecasting for households as predicting $R_{t}$ requires tracking $\Lambda_{t}(i)$\footnote{\cite{krusellsmith}}, the inequality in \eqref{eq:unczeldes} is valid irrespective of the presence of aggregate uncertainty.  

In what follows we will analyze the equilibrium law of motion for investment , where we can introduce additional types of frictions, in particular those that arise from the production side of the economy. As in the case of consumption, we first analyze the 
aggregate investment decision rule in the frictionless case, which should satisfy the following second order difference equation: 
\begin{equation}
\frac{\bar{s}_{0}}{\alpha}\bar{\mathcal{E}_{t}}\tilde{I}_{t+1}-\left(1+\frac{(1-\alpha)(1-\bar{s}_{0})+\bar{s}_{0}\omega}{\alpha\omega}\right)\tilde{I}_{t}+\tilde{I}_{t-1}=\frac{1}{\alpha}\tilde{Z}_{t}
\label{eq:agg_inv_lag}
\end{equation}
where $\bar{s}_{0}\equiv\frac{I_{ss,0}}{Y_{ss,0}}$, the frictionless savings ratio.

A sufficient condition for a saddle point solution for the latter is $\alpha>\bar{s}$, that is, the steady state savings rate should be smaller than the elasticity of output with respect to capital. Denoting by $\rho_{1,0},\rho_{2,0}$ the two roots of the corresponding lag polynomial and letting $\rho_{2,0}$ be the unstable root, the aggregate frictionless investment has the following law of motion:
\begin{eqnarray}
\tilde{I}_{t}&=&\rho_{1,0}\tilde{K}_{t}+\frac{1}{\rho_{2,0}\bar{s}_{0}}\sum_{j=1}^{\infty}\bar{\mathcal{E}_{t}}\rho_{2,0}^{-j}\tilde{Z}_{t+j}
\label{eq:agg_inv}
\end{eqnarray}
Consider now the case in which the true model has no real or nominal frictions and agents
have rational expectations ($\mathcal{P}_{t,i} =\mathbb{P}_{t,i}$) where the corresponding information set is $\mathcal{F}_{i}\equiv(\eta_{i,t},k_{i,t},Z_{t})$.
Following the literature, we assume that only a subset of $\left\{\mathcal{F}_{i}\right\}_{i\leq N}$ is observed by the econometrician, i.e. $\mathcal{F}_{t}=K_{t}$.  The decision rule used by
the econometrician can then be rewritten as: 
\begin{equation}
\tilde{I}_{t}=\rho_{1,0}\tilde{K}_{t}+\frac{1}{\rho_{2,0}\bar{s}_{0}}\sum_{j=0}^{\infty}\rho_{2,0}^{-j}{\mathbb{E}_{t}}(\tilde{Z}_{t+j}|\sigma(\mathcal{F}_{t}))+\frac{1}{\rho_{2,0}\bar{s}_{0}}e_{t}  \label{eq:agg_inv_metr}
\end{equation}%
where 
\begin{eqnarray*}
e_{t}&=&\sum_{j=0}^{\infty }\rho_{2,0}^{-j}\int\mathbb{E}(\tilde{Z}_{t+j}|\mathcal{F}_{i,t})d\Lambda_{t}(i)-\sum_{j=0}^{\infty }\rho_{2,0}^{-j}\mathcal{\mathbb{E}}(\tilde{Z}_{t+j}|\mathcal{F}_{t})\\
&=&\sum_{j=0}^{\infty }\rho_{2,0}^{-j}\mathbb{E}(\tilde{Z}_{t+j}|\tilde{Z}_{t},\tilde{K}_t)-\sum_{j=0}^{\infty }\rho_{2,0}^{-j}\mathcal{\mathbb{E}}(\tilde{Z}_{t+j}|\tilde{K}_{t})
\end{eqnarray*}

The second equality holds due to the linearity of conditional expectations and aggregation. Then, for any $\tilde{K}_{t-j,j\geq0}$-measurable function $\phi (.)$, and using the law of total expectations,  the following moment equality holds\footnote{Note that we have not used any knowledge for the exogeneity of $\tilde{Z}_{t}$ for this result to hold i.e. it would hold trivially in the case of an exogenous productivity shock.}:
\begin{eqnarray*}
\mathbb{E}\tilde{K}_{t-j}e_{t}&=&0
\end{eqnarray*}%
The case of interest in this paper is therefore when $\int\mathcal{E}(.|\mathcal{F}_{i,t})d\Lambda_{t}(i)\neq 
\mathbb{E}^{M_{f}}(.|\mathcal{F}_{t})$, where $M_{f}$ is the frictionless probability model- or the benchmark model in the general case\footnote{To avoid confusion, I use frictionless and benchmark model interchangeably.}.  Here the agents and the econometrician not only
have different information, but they also have a different structure of the
economy in mind. $\mathbb{E}^{M_{f}}(.|X_{t})$ is consistent with frictionless behavior and
we thus implicitly assume that the frictionless part of the model is well specified.
The mismatch between the agents' expectations and the econometrician's
prediction could be due to differences in the models and/or information
sets, which can lead to different decision rules, and these differences do not vanish on average.
\vspace{-0.5 cm}

Therefore, denoting aggregate optimal investment in the presence of frictions by $\tilde{I}_{con,t}$, which is the data generating process, and by $I_{t}^{\ast }$ the frictionless investment rule used by the econometrician, we can represent these differences in terms of the econometrician's observables as follows: 
\begin{eqnarray*}
	\tilde{I}_{con,t}&=&\rho_{1,0}\tilde{K}_{t}+\frac{1}{\rho_{2,0}\bar{s}}\sum_{j=0}^{\infty}\rho_{2,0}^{-j}{\mathbb{E}_{t}}(\tilde{Z}_{t+j}|\sigma(\mathcal{F}_{t}))+\frac{1}{\rho_{2,0}\bar{s}}e_{t} +\tilde{\lambda}_{t} \\
	\tilde{\lambda}_{t} &=&\tilde{I}_{con,t}-I_{t}^{\ast }\label{eq:agg_inv_lambda}
\end{eqnarray*}  
or equivalently, as
\begin{eqnarray}
	\mathbb{E}\left( \tilde{I}_{con,t}-\rho_{1,0}\tilde{K}_{t}-\frac{1}{\rho_{2,0}\bar{s}}\sum_{j=0}^{\infty}\rho_{2,0}^{-j}\tilde{Z}_{t+j}-e_{t} -\tilde{\lambda}_{t}|\sigma(\mathcal{F}_{t})\right)&=& \mathbb{E}\left(\tilde{\lambda}_{t}|\sigma(\mathcal{F}_{t})\right)\label{eq:agg_inv_nest}
\end{eqnarray}
Similar to the case of consumption with liquidity constraints, real, nominal or informational frictions therefore generate a
\textquotedblleft wedge\textquotedblright\ $\tilde{\lambda}_{t}$ with non zero conditional mean and therefore a moment inequality. We will consider a general example of constraints in adjusting capital, which can be rationalized by ad hoc adjustment costs in capital accumulation, occasionally binding constraints i.e. capital irreversibility and financial frictions. The specific examples together with an example on non-rational expectations can be found in the Appendix B (online). Below, we present the general example.

\begin{subsubsection}{\textbf{{Capital Adjustment Constraints}}} 

Assuming full capital depreciation, \cite{Wang2012} show that collateral constraints can be represented by an aggregate function $\psi(\iota_{t})$ of the investment rate, $\iota_{t}\equiv\frac{I_{t}}{K_{t}}$. Note that this representation can be rationalized with different setups, including those in which there is a distribution of investment efficiency shocks across firms. Moreover, such setups can be further microfounded by different mechanisms i.e. financial frictions à la \citet{Bernanke_Gertler}. Focusing on exogenous collateral constraints, i.e. the borrowing limit does not depend on the asset value of the collateral\footnote{Endogenous collateral constraints have a different representation than exogenous constraints. Nevertheless, as \cite{Wang2012} note, endogenous collateral constraints imply a form of aggregate investment externality which implies insufficient level of investment relative to the case of exogenous constraints. This simply reinforces the distortions we derive based on exogenous constraints.},  they imply the following subsystem of linearized aggregate equilibrium conditions:
\begin{eqnarray}
\quad \tilde{K}_{t+1}&=& \psi_{ss}'\tilde{I}_{t}+(1- \psi'_{ss})\tilde{K}_{t}\\
\tilde{Q}_{t}&=&- \psi_{ss}^{'-1}\psi_{ss}''(\tilde{I}_{t}-\tilde{K}_{t})\\
\tilde{Q}_{t}&=& \omega\tilde{C_{t}}-\omega\mathbb{E}_{t}\tilde{C_{t+1}}+\Psi_{ss,1}\mathbb{E}_{t}\tilde{R}_{t+1}+\mathbb{E}_{t}\tilde{Q}_{t+1}+(\psi_{ss}'-1)\mathbb{E}_{t}(\tilde{I}_{t+1}-\tilde{K}_{t+1}) 
\end{eqnarray}
where $\Psi_{ss,1}\equiv 1+\beta(\psi_{ss}'-1)$. 

Denoting by $\Psi_{ss,2}\equiv \omega(\psi_{ss}'\bar{s}(\alpha-\bar{s})+\bar{s})-\bar{s}(1-\bar{s})\psi_{ss}^{'-1}\psi_{ss}''$, $\gamma_1\equiv \Psi_{ss,2}^{-1}(\omega\bar{s}- \psi_{ss}^{'-1}\psi_{ss}''\bar{s}(1-\bar{s}))$, $\gamma_2\equiv \Psi_{ss,2}^{-1}\omega$ and $\gamma_3\equiv \Psi_{ss,2}^{-1}\Psi_{ss,1}(1-\bar{s})$ the first order condition \eqref{eq:agg_inv_lag} becomes 
\begin{eqnarray}
	\gamma_1\mathbb{E}_{t}\tilde{I}_{t+1}-(1+\gamma_1+(1-\alpha)\gamma_3)\tilde{I}_{t}+\tilde{I}_{t-1}=\gamma_2\tilde{Z}_{t} && \label{eq:agg_inv_1}
\end{eqnarray} whose solution generalizes to:  
\begin{eqnarray*}
\tilde{I}_{con,t}&=&\rho_{1}(\gamma)\tilde{K}_{t}+\frac{1}{\rho_{2}(\gamma)\bar{s}(\gamma)}\tilde{Z}_{t}+\frac{1}{\rho_{2}}e_{t}+\tilde{\lambda}_{t}
\end{eqnarray*}
For $\rho_{i}(\gamma_0)\equiv\rho_{i,0},\bar{s}(\gamma_0)\equiv\bar{s}_{0}$, the corresponding distortion due to capital adjustment constraints is equivalent to
\begin{eqnarray*}
	\tilde{\lambda}_{t}&=& \left(\rho_{1}(\gamma)-\rho_{1}(\gamma_{0})\right)\tilde{K}_{t}+\left(\frac{1}{\rho_{2}(\gamma)\bar{s}(\gamma)}-\frac{1}{\rho_{2}(\gamma_0)\bar{s}(\gamma_0)}\right)\tilde{Z}_{t}+\left(\frac{1}{\rho_{2}(\gamma)}-\frac{1}{\rho_{2}(\gamma_0)}\right)e_{t}
\end{eqnarray*}
  
  In order to determine the sign of $\mathbb{E}\left(\tilde{\lambda}_{t}|\sigma(\mathcal{F}_{t})\right)$, it is sufficient to look at the sign of the coefficient of $\tilde{K}_{t}$. Using a mean value expansion around zero, the conditional mean is  \vspace{0.3 cm}
 	$\mathbb{E}\left(\tilde{\lambda}_{t}|\sigma(\mathcal{F}_{t})\right)=\rho_1'(\bar{\gamma})\tilde{K}_{t}$
  where $ \rho_1(\bar{\gamma})'\equiv\underset{(-)}{\rho_1(\bar{\gamma})'}\underset{(+)}{(\gamma_1-\gamma_{1,0})}
 	+\underset{(+)}{\rho_1'}(\bar{\gamma})\underset{(-)}{{(\gamma_3-\gamma_{3,0})}}\leq0 $.

 Frictions in capital accumulation therefore generate negative endogenous distortions, which implies that 
$\mathbb{E}\tilde{\lambda}_{t}\tilde{K}_{t}\leq 0$\footnote{The signs hold uniformly over $\gamma$, for any twice differentiable function $\psi$. This is straightforward for the derivatives, while for $\gamma_{i}-\gamma_{i,0}$, this gives rise to a quadratic inequality in $\psi'_{ss}$ which holds for all plausible $\psi'_{ss}\in(0,1]$.}. 
\end{subsubsection}

The preceding example dealt with distortions in the decision rules, which directly affect- and are therefore informative for-
the transmission of shocks and welfare. Moreover, they can be easily linked to additional data. Nevertheless, frictions are always justified using the first order necessary conditions for agents' optimal decisions. In the next
section, we present the general case together with a representation result, which proves useful in translating distortions to the first order conditions
into observationally equivalent general equilibrium distortions to decision rules. One can therefore directly use the latter. Because we work with linearized models, second or higher order effects
will be ignored. However, moment inequalities, would also appear in
nonlinear models. The way we treat frictions does not depend on assumptions regarding
the approximation error.

\section{Perturbing the Frictionless Model}

The general framework involves a system of expectational equations . Denote by $x_{i,t}$ the endogenous individual state, by $z_{i,t}$ the
exogenous individual state, and by $X_{t}=\int x_{i,t}d\Lambda (i)$ and $%
Z_{t}=\int z_{i,t}d\Lambda (i)$ the corresponding aggregate states. The optimality conditions characterizing the individual
decisions are as follows: 
\begin{eqnarray}
{G}(\theta )x_{i,t} &=&F(\theta )\mathcal{E}_{i,t}\left( \left( 
\begin{array}{c}
x_{i,t+1} \\ 
X_{t+1}%
\end{array}%
\right) |x_{i,t},z_{i,t},X_{t},Z_{t}\right) +L(\theta )z_{i,t}
\label{eq:exp_eq_ind} \\
z_{i,t} &=&R(\theta )z_{i,t-1}+\epsilon _{i,t}  \notag
\end{eqnarray}%
where $\mathbb{E}(\epsilon _{i,t})=0$. 
We assume that the
coefficients of the behavioural equations are common across agents. Relaxing this
assumption would make the notation more complicated, but would not change
the essence of our argument. We could also specify equilibrium conditions
that involve past endogenous variables but this is unnecessary as we can
always define dummy variables of the form $\tilde{x}_{i,t}\equiv x_{i,t-1}$
and enlarge the vector of endogenous variables to include $\tilde{x}_{i,t}$.
Aggregating across individuals, the economy can be characterized by the following system: 

\begin{eqnarray}
{G}(\theta )X_{t} &=&F(\theta )\int \mathcal{E}_{i,t}\left( \left( 
\begin{array}{c}
x_{i,t+1} \\ 
X_{t+1}%
\end{array}%
\right) |x_{i,t},z_{i,t},X_{t}, Z_{t}\right) d\Lambda_{t}(i)+L(\theta )Z_{t}
\label{eq:exp_eq_agg} \\
Z_{t} &=&R(\theta )Z_{t-1}+\epsilon _{t}
\end{eqnarray}%

We will refer to the economy with frictions as the triple $(H(\theta
),\Lambda ,\mathcal{E}_{i})$ where 

\begin{eqnarray*}
H(\theta)&\equiv& (vec(G(\theta )^{T}),vec(F(\theta ))^{T},vec(L(\theta
))^{T},vec(R(\theta ))^{T},vech(\Sigma _{\epsilon })^{T}).
\end{eqnarray*}
We partition the vector $\theta $ into two subsets, $(\theta _{1},\theta
_{2})$ where $\theta _{2}$ collects the parameters characterizing the presence and intensity of frictions. Thus,
setting $\theta_2=0$, reduces the model to the frictionless economy and this is without loss of generality. Furthermore, in an economy with no frictions, prices efficiently aggregate all the
information. Thus there is no need to distinguish between individual and
aggregate information when predicting aggregate state variables. 
\vspace{-0.2 cm}

For $w_{i,t}\equiv(x_{i,t},X_{i,t}, z_{i,t}, Z_{i,t})$, when agents are rational, and the model is linear ( or linearized) aggregate expectations for $(x_{i,t+1}^T,X_{t+1}^T)^T$ become as follows: \vspace{-0.2 cm}

\begin{eqnarray*}
\mathbb{E}x_{i,t+1}&=&\iint x_{i,t+1}p_{i}(x_{i,t+1} ,X_{t+1}|w_{i,t})d(x_{i,t+1}
,X_{t+1})d\Lambda_{t}(i)\\
&=&\iint x_{i,t+1}p(x_{i,t+1}|w_{i,t})d(x_{i,t+1})d\Lambda_{t}(i) \\
&=&\int P_{i,1}x_{i,t} d\Lambda(i) +P_2 X_{t} +\int P_{i,3}z_{i,t}
d\Lambda_{t}(i) +P_4 Z_{t}
\end{eqnarray*}
\begin{eqnarray*}
\mathbb{E}X_{t+1}&=&\iint X_{t+1}p_{i}(x_{i,t+1} ,X_{t+1}|w_{i,t})d(x_{i,t+1}
,X_{t+1})d\Lambda_{t}(i)\\
&=&\int X_{t+1}p(X_{t+1}|X_t,Z_t)d(X_{t+1}) \\
&=&P_5 X_{t} +P_6 Z_{t} \equiv\mathbb{E}(X_{t+1}|X_t,Z_t)
\end{eqnarray*} 

where $P_{j,j={1..6}}$ are the coefficients of the linear projection. By
Rational expectations and since the coefficients $(G,F,L)$ are common across i, equilibrium consistency requires $P_{i,1}=P_1,P_{i,2}=P_2$ and therefore $P_1+P_2=P_5$ and $P_3+P_4=P_6$. Thus, as expected, aggregate conditional expectations collapse to $\mathbb{E}(X_{t+1}|X_t,Z_t)$, and the frictionless economy, $(H(\theta _{1},0),\Lambda ,\mathbb{E})$, can be
summarized by the equilibrium conditions: 
\begin{eqnarray}  \label{eq:exp_sys}
{G}(\theta _{1},0)X_{t} &=&F(\theta _{1},0)\mathbb{E}%
_{t}(X_{t+1}|X_{t},Z_{t})+L(\theta _{1},0)Z_{t}  \label{eq:exp_eq_no_frict}
\\
Z_{t} &=&R(\theta )Z_{t-1}+\epsilon _{t}
\end{eqnarray}

Using the decision rule in the expectational
system \eqref{eq:exp_sys} and solving for the undetermined coefficients \footnote{See
	for example \citet{Marimon_Comp}.}, a Rational Expectations equilibrium for $(H(\theta _{1},0),\Lambda ,\mathbb{E})$ holds under the following conditions:
\begin{defn*}\textbf{ASSUMPTION-EQ}\newline  
There exist unique matrices $\underset{n_{x}\times n_{x}}{P^{\ast }(\theta
_{1},0)},\underset{n_{x}\times n_{z}}{Q^{\ast }(\theta _{1},0)}$ satisfying: \small
\begin{eqnarray*}
(F(\theta_{1},0)P^{\ast }(\theta _{1},0)-G^{\ast }(\theta _{1},0))P^{\ast
}(\theta _{1},0)&=&0 \\
({R(\theta )^{T}}\otimes F(\theta _{1},0)+I_{z}\otimes (-F(\theta
_{1},0)P^{\ast }(\theta _{1},0)+G^{\ast }(\theta _{1},0)))vec(Q(\theta
_{1},0))&=&-vec(L(\theta _{1},0))
\end{eqnarray*}%
\normalsize
such that $X_{t}=P^{\ast }(\theta _{1},0)X_{t-1}+Q^{\ast }(\theta
_{1},0)Z_{t}$ is a competitive equilibrium.
\end{defn*}
Since we know the model up to $\theta _{2}=0$, we
rearrange the equations of the economy with frictions into the known and the unknown part of the specification. Adding and subtracting the first order conditions of the frictionless economy we get that: \small
\begin{eqnarray}
G(\theta _{1},0)X_{t}&=&F(\theta _{1},0)\mathbb{E}_{t}(X_{t+1}|X_{t})+L(\theta
_{1},0)Z_{t}+{\mu _{t}} \label{eq:exp_sys_pert}
\end{eqnarray} 
\begin{eqnarray*}
\mu _{t} &\equiv &-(G(\theta )-G(\theta _{1},0))X_{t} +(L(\theta )-L(\theta _{1},0))Z_{t}\\
&&+(F(\theta )-F(\theta _{1},0))\int \mathcal{E}_{it}\left( \left( 
\begin{array}{c}
x_{i,t+1} \\ 
X_{t+1}%
\end{array}%
\right) |x_{i,t},X_{t}\right) d\Lambda_{t}(i) \\
&&+F(\theta _{1},0)(\int \mathcal{E}_{it}\left( \left( 
\begin{array}{c}
x_{i,t+1} \\ 
X_{t+1}%
\end{array}%
\right) |x_{i,t},X_{t}\right) d\Lambda_{t}(i)-\mathbb{E}_{t}(X_{t+1}|X_{t})) \normalsize
\end{eqnarray*} 
 \normalsize
This system of equations cannot be solved without knowing $\mu _{t}$. Nevertheless, we characterize the relationship between $\mu _{t}$
and a set of candidate decision rules that depend on the endogenous
states and some unobserved process, $\lambda _{t}$. Proposition 1
states sufficient conditions such that decision rules are
consistent with $\mu _{t}$. 
\bigskip
\begin{prop}
Given:
\begin{enumerate}
\item The perturbed system of equilibrium conditions \eqref{eq:exp_sys_pert}
where $\mathbb{E}_{t}\mu_{t}\geq 0$
\item A distorted aggregate decision rule $X_{t}^{\ast
}=X_{t}^{f,RE}+\lambda_{t}$ where $X_{t}^{f,RE}=\int x_{i,t}^{RE}d\Lambda
(i) $ is the Rational Expectations equilibrium of $(H(\theta_1,0),\Lambda,%
\mathbb{E})$ and
\item A ${\lambda }_{t}$ vector process such that ${\lambda }_{t}={\lambda }%
_{t-1}\Gamma +\nu _{t}$ for some real-valued $\Gamma $:
\end{enumerate}
If there exists a non-empty subset of $\Theta _{1}$ that satisfies 
\begin{eqnarray}
\mathbb{E}_{t}(F(\theta _{1},0)\Gamma -G(\theta _{1},0)){\lambda }_{t}&=&-%
\mathbb{E}_{t}\mu _{t}  \label{eq:repres}
\end{eqnarray}%
The following condition is satisfied for almost all subsets of $\sigma
(Y_{t-1})$ : 
\begin{eqnarray*}
\mathbb{E}_{t}(G(\theta _{1},0)X_{t}^{\ast }-F(\theta _{1},0)(\theta
)X_{t+1}^{\ast }-L(\theta _{1},0)Z_{t})&\geq& 0 
\end{eqnarray*}
\end{prop}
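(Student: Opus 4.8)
The plan is to substitute the candidate decision rule directly into the frictionless optimality residual and show that, in conditional mean, all that survives is $\mathbb{E}_t\mu_t$, which is nonnegative by hypothesis~(1). Write $X_t^{\ast}=X_t^{f,RE}+\lambda_t$ and $X_{t+1}^{\ast}=X_{t+1}^{f,RE}+\lambda_{t+1}$. The first step is the purely algebraic decomposition
\[
G(\theta_1,0)X_t^{\ast}-F(\theta_1,0)X_{t+1}^{\ast}-L(\theta_1,0)Z_t
=\underbrace{\bigl(G(\theta_1,0)X_t^{f,RE}-F(\theta_1,0)X_{t+1}^{f,RE}-L(\theta_1,0)Z_t\bigr)}_{(\mathrm{I})}
+\underbrace{\bigl(G(\theta_1,0)\lambda_t-F(\theta_1,0)\lambda_{t+1}\bigr)}_{(\mathrm{II})}.
\]

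For term~$(\mathrm{I})$ I would invoke hypothesis~(2): since $X_t^{f,RE}$ is the Rational Expectations equilibrium of $(H(\theta_1,0),\Lambda,\mathbb{E})$, it satisfies the frictionless equilibrium conditions \eqref{eq:exp_eq_no_frict}, i.e.\ $G(\theta_1,0)X_t^{f,RE}-L(\theta_1,0)Z_t=F(\theta_1,0)\mathbb{E}_t(X_{t+1}^{f,RE}\mid X_t,Z_t)$; substituting, $(\mathrm{I})=F(\theta_1,0)\bigl(\mathbb{E}_t(X_{t+1}^{f,RE}\mid X_t,Z_t)-X_{t+1}^{f,RE}\bigr)$, i.e.\ $F(\theta_1,0)$ times a one-step RE forecast error. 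Because the econometrician's information in the conclusion (generated by $Y_{t-1}$ and the predetermined states) is nested inside the conditioning set with respect to which this is a genuine forecast error, the law of iterated expectations gives $\mathbb{E}_t(\mathrm{I})=0$ almost surely; this is exactly the content of the phrase ``for almost all subsets of $\sigma(Y_{t-1})$''. For term~$(\mathrm{II})$, substitute $\lambda_{t+1}$ from hypothesis~(3) and use that $\nu_{t+1}$ is a mean-zero innovation, so that $\mathbb{E}_t F(\theta_1,0)\nu_{t+1}=0$ and $\mathbb{E}_t(\mathrm{II})=-\mathbb{E}_t\bigl[(F(\theta_1,0)\Gamma-G(\theta_1,0))\lambda_t\bigr]$. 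By the maintained representation \eqref{eq:repres}, valid for some $\theta_1\in\Theta_1$, the bracket equals $-\mathbb{E}_t\mu_t$, hence $\mathbb{E}_t(\mathrm{II})=\mathbb{E}_t\mu_t$. Adding the two pieces, $\mathbb{E}_t\bigl(G(\theta_1,0)X_t^{\ast}-F(\theta_1,0)X_{t+1}^{\ast}-L(\theta_1,0)Z_t\bigr)=\mathbb{E}_t\mu_t\ge 0$ by hypothesis~(1), which is the assertion.

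The main obstacle is not the algebra but the filtration bookkeeping behind the two ``conditional mean equals zero'' steps. I would need to fix precisely the information set underlying $\mathbb{E}_t$ in the conclusion, verify that it is coarser than (i) the set with respect to which $X_{t+1}^{f,RE}-\mathbb{E}_t(X_{t+1}^{f,RE}\mid X_t,Z_t)$ is an RE forecast error and (ii) the filtration with respect to which $\nu_{t+1}$ is a martingale difference, so that both terms vanish after one application of iterated expectations; and I would check that $X_t^{f,RE}$ and $\lambda_t$ are integrable so these manipulations are legitimate. A secondary point to nail down is consistency of the conditioning operator across \eqref{eq:exp_sys_pert}, \eqref{eq:repres} and the conclusion: $\mu_t$ is built from agents' subjective expectations $\mathcal{E}_{i,t}$ over heterogeneous information sets, so $\mathbb{E}_t\mu_t$ in \eqref{eq:repres} should be read as its projection onto the econometrician's information, which may require one further iterated-expectations step to align it with the conditioning in the final inequality. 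Everything else reduces to linear substitution.
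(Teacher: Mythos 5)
Your proof is correct and follows essentially the same route as the paper: substitute the candidate rule $X_t^{\ast}=X_t^{f,RE}+\lambda_t$, cancel the frictionless part using the Rational Expectations equilibrium conditions, use the AR(1) law of motion for $\lambda_t$ together with $\mathbb{E}_t\nu_{t+1}=0$, and invoke \eqref{eq:repres} to reduce the conditional mean to $\mathbb{E}_t\mu_t\geq 0$. The only difference is that the paper additionally motivates the AR(1) representation of $\lambda_t$ by projecting $\mu_{t+1}$ on $\mu_t$, which is supplementary to the stated implication and does not change the argument.
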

\begin{proof}
See Appendix
\end{proof}

Proposition 1 states that if for an admissible parameter
vector $\theta _{1}\in \Theta _{1}$ condition (24) is true, the
decision rule $X_{t}^{\ast }=X_{t}^{f,RE}+\lambda _{t}$ generates the same
restrictions as those implied using the perturbed (by $\mu _{t}$) first
order equilibrium conditions. We focus on parameter vectors
that yield determinate and stable equilibria in the frictionless economy, which implies a restriction on the stochastic behavior of $\lambda _{t}$.

While it is true that in many cases the sign of $\mathbb{E}_{t}\lambda_t$ can be directly deduced from the distortions to equilibrium conditions, Proposition 1 can also be useful in practice, as we can deduce the sign of the distortions to decision rules without solving for the otherwise unknown expectational system using \eqref{eq:repres}\footnote{In more complicated cases where knowledge of $\Gamma$ is strictly required, condition \eqref{eq:repres} can always be checked ex post.}. Moreover, Proposition 1 states conditions under which we can guarantee that the moment inequality restrictions are going to be consistent with the rest of the model, and therefore the implied reduced form. \vspace{-0.3 cm}
\begin{example}\textbf{Analytical example for Capital Adjustment constraints }\newline
	As illustrated in the previous section, distortions arise in the capital accumulation equation, Tobin's Q and the Euler equation. Reducing the system to its minimal representation in terms of capital and investment i.e. $X_{t}\equiv (\tilde{I}_{t},\tilde{K}_{t+1}$), we have that:
	\begin{equation*}
	\left[ 
	\begin{array}{cc}
	1& \gamma_{1,0}+(1-\alpha)\gamma_{3,0}\\
	1 & -1\\
	\end{array}%
	\right] \left[ 
	\begin{array}{c}
	\tilde{I}_{t}\\
	\tilde{K}_{t+1}\\
	\end{array}%
	\right] =\left[ 
	\begin{array}{cc}
	1&0\\
	0 & 0\\
	\end{array}%
	\right] \left[ 
	\begin{array}{c}
	\mathbb{E}_{t}\tilde{I}_{t+1}\\
	\mathbb{E}_{t}\tilde{K}_{t+2}\\
	\end{array}%
	\right]  +\left[ 
	\begin{array}{c}
	\gamma_{2,0}\\
	0 \\
	\end{array}%
	\right] \tilde{Z}_{t} +\left[ 
	\begin{array}{c}
	-\mu_{I,t}\\
	0\\
	\end{array}%
	\right]
	\end{equation*}%
		where the Investment Euler equation has a positive distortion as the marginal cost of deferring consumption increases. We
	can then derive $(\tilde{\lambda}_{I,t},\tilde{\lambda}_{K,t+1})$, and consequently distortions to all other endogenous variables i.e. consumption.
    Since $\tilde{K}_{t+1}=\tilde{I}_{t}$, $\Gamma_{11}=\Gamma_{22}<1$. For  $\Omega:=1-\Gamma_{11}+\gamma_{1,0}+(1-\alpha)\gamma_{3,0}$,  computing $\mathbb{E}_{t}(F(\theta _{1},0)\Gamma
	-G(\theta _{1},0))\tilde{\lambda}_{t}+\tilde{\mu}_{t})=0$ for the reduced system and solving for the equilibrium distortions, leads to the following result: \small
	\begin{equation*}
	\left[ 
	\begin{array}{c}
	\tilde{\lambda}_{I,t} \\ 
	\tilde{\lambda}_{K,t+1} \\ 
	\end{array}%
	\right] =-\frac{1}{\Omega }\left[\begin{array}{c}
	\tilde{\mu}_{I,t}\\ 
	\tilde{\mu}_{I,t} \\
	\end{array}%
	\right]
	\end{equation*}\normalsize
  Although trivial, this example illustrates how cross equation restrictions determine the sign of all distortions, can then be utilized to construct moment inequalities.
\end{example} \vspace{-0.3 cm}
It is useful to note that in certain situations one may be interested in characterizing
frictions over time (and not just on average), and therefore we need to obtain a
conditional model that generates $\lambda _{t}$. This requires imposing more
restrictions on the stochastic behavior of $\lambda _{t}$. In the supplementary material, we adopt a model uncertainty approach in the spirit of \citet{HansenSargent2005,Hansen_nobel} and we show that 
restricting the class of distributions for aggregate shocks is enough to obtain a unique conditional model for $\lambda _{t}$. For the rest of this paper we focus on the identification using a set of unconditional moment restrictions.

\section{Identification And Estimation} \vspace{-0.4 cm}
In this section we
provide a formal treatment of identification in a linear(ized) Dynamic
Stochastic General Equilibrium (DSGE) model based on moment inequality restrictions.
First, we illustrate how our statistical representation of a DSGE model
relates to the state space representation that is typically used for
estimation. Building on \citet{ECTA1171}, we will show necessary and
sufficient conditions for partial identification of the model arising from
the theoretical moment inequalities. We also show conditions under which \textit{conditionally} over-identifying inequalities provide a more informative (smaller) identified set.

We base the analysis on the innovation representation of the solution to the
linear(ized) DSGE model. This is the natural representation to use when
there are differences in information between economic agents and the
econometrician, as it takes into account that not all the state variables
relevant the decision of agents are observable. We consider the following
class of models: 
\begin{eqnarray*}
\hat{X}_{t+1|t} &=&\underset{n_{X}\times n_{X}}{A(\theta )}\hat{X}_{t|t-1}+%
\underset{n_{X}\times n_{X}}{K_{t}(\theta )}a_{t} \\
Y_{t}^{o} &=&\underset{n_{Y}\times n_{x}}{C(\theta )}\hat{X}_{t|t-1}+a_{t}
\end{eqnarray*}%
where $K_{t}(\theta )$ is the Kalman gain and $a_{t}$ is the one-step ahead
forecast error which could be derived from the corresponding state space representation: 
\begin{eqnarray*}
X_{t+1} &=&\underset{n_{X}\times n_{X}}{A(\theta )}X_{t}+\underset{%
n_{X}\times n_{X}}{B(\theta )}\epsilon _{t+1} \\
Y_{t} &=&\underset{n_{Y}\times n_{X}}{C(\theta )}X_{t}+\lambda _{t}=\underset%
{n_{Y}\times n_{X}}{C(\theta )}\underset{n_{X}\times n_{X}}{A(\theta )}%
X_{t-1}+\underset{n_{Y}\times n_{X}}{C(\theta )}\underset{n_{X}\times n_{X}}{%
B(\theta )}\epsilon _{t}+\lambda _{t} \\
&=&\underset{n_{Y}\times n_{X}}{\tilde{C}(\theta )}X_{t-1}+\underset{%
n_{Y}\times n_{X}}{D(\theta )}\epsilon _{t}+\lambda _{t}
\end{eqnarray*}
where $\epsilon _{t}$ is the innovation to the shock vector $Z_{t}$.  

By construction, 
\begin{eqnarray*}
a_{t}&=&\lambda _{t}+C(\theta )A(\theta )({X}_{t-1}-\hat{X}_{t-1|t-1})+C(%
\theta )B(\theta )\epsilon _{t}
\end{eqnarray*}
Therefore, the forecast error is a combination of the true aggregate innovations to
the information sets of the agents, $\epsilon _{t}$, the estimation error of
the state variable, ${X}_{t-1}-\hat{X}_{t-1,t-1}$, and the frictions, $%
\lambda _{t}$. Let $N(\theta )\equiv vec(A(\theta )^{\prime },B(\theta
)^{\prime },C(\theta )^{\prime })$, and assume that $\mathbb{E}(\epsilon
_{t}|\sigma (\mathcal{F}_{t}))=0$, and $\mathbb{E}(\epsilon _{t}\epsilon
_{s}^{\prime }|\sigma (\mathcal{F}_{t}))=\mathbf{1}(s=t)\Sigma _{\epsilon
_{t}}$, where $\Sigma _{\epsilon _{t}}\succ 0$.
Given $\mathbb{E}(\lambda _{t}|\sigma (\mathcal{F}_{t-1}))\geq 0$, we define
the following conditional moment restriction: 
\begin{equation}
\mathbb{E}(Y_{t}^{o}-\underset{n_{Y}\times n_{x}}{C(\theta )}\hat{X}%
_{t|t-1}|\sigma (\mathcal{F}_{t-1}))\geq 0 \label{eq:mom_ineq}
\end{equation}%
 For any inequality preserving function $\phi (.)$ of a random vector $\mathbf{Y}_{t-1}$ that
belongs to the information set of the econometrician, the following holds: 
\begin{equation*}
\mathbb{E}(Y_{t}^{o}-\underset{n_{Y}\times n_{x}}{C(\theta )}\hat{X}%
_{t|t-1})\phi (\mathbf{Y}_{t-1})=\mathbb{E}\mathcal{V}(\mathbf{Y}_{t-1})\phi
(\mathbf{Y}_{t-1})\geq 0
\end{equation*}%
for a random function $\mathcal{V}(\mathbf{Y}_{t-1})\in \lbrack 0,\infty ]$.

In order to study identification through estimating equations we need to
make assumptions about the local identification of $\Theta _{0},$ given the
value of $\mathbb{E}\mbox{\ensuremath{\mathcal{V}}(\ensuremath{%
\mathbf{Y}_{t-1}})\ensuremath{\phi}(\ensuremath{\mathbf{Y}_{t-1}})}$. We
resort to sufficient conditions that make the mapping from $\theta $ to the
solution of the model regular, and thus assume away population
identification problems (see, for example \citet{CanovaSala}). We assume
that $\Theta $ belongs to a compact subset of $\mathbb{R}^{n_{\theta }}$.
Since certain parameters are naturally restricted, e.g. discount factors,
persistence parameters or fractions of the population, and others cannot take
excessively high or low values, assuming compactness is innocuous. We also
need to acknowledge that due to \textsl{cross - equation restrictions},
which we denote by $\mathfrak{L}(\theta )=0$, the number of observables used
in the estimation need not be equal to the cardinality of $\Theta $, i.e., $%
n_{y}<n_{\theta }$ . \citet{ECTA1171} provide the necessary and sufficient
conditions for local identification of the DSGE model from the
auto-covariances of the data. We reproduce them below, with the minor
modification that Assumption \textbf{LCI-6} holds for any element of the
identified set $\Theta _{0}$.
\begin{defn*}{\textbf{\textsl{ASSUMPTION -LCI (Local Conditional Identification)}}}
	\normalsize
\begin{enumerate}
\item $\Theta$ is compact and connected
\item (Stability) For any $\theta\in\Theta$ and for any $z\in\mathbb{C}$, $%
det(zI_{n_{X}}-A(\theta))=0$, implies $|z|<1$
\item For any $\theta\in\Theta$, $D(\theta)\Sigma_{e}D(\theta)^{\prime }$ is
non-singular
\item For any $\theta\in\Theta$, (i) The matrix $(K(\theta)\,
A(\theta)K(\theta)\,..,A(\theta)^{n_{X}-1}K(\theta))$ has full row rank and $%
(C(\theta)^{\prime }\, A(\theta)^{\prime }C(\theta)^{\prime
}\,..,A(\theta)^{\prime n_{X}-1}C(\theta)^{\prime })^{\prime }$ has full
column rank.
\item For any $\theta\in\Theta$, the mapping $N:\theta\mapsto N(\theta)$ is
continuously differentiable
\item Rank of $\Delta ^{NS}(\theta )$ \footnote{ \small For $\delta ^{NS}(\theta ,T) =(vec(TA(\theta )T^{-1})^{T},vec(TK(\theta))^{T},vec(C(\theta )T^{-1})^{T},vech(\Sigma _{\alpha }(\theta ))^{T})^{T}$, $\Delta ^{NS}(\theta _{0}) =(\frac{\partial \delta ^{NS}(\theta ,I_{n_{x}})}{\partial \theta },\frac{\partial \delta ^{NS}(\theta ,I_{n_{x}})}{\partial vecT})|_{\theta =\theta _{0}}$} is constant in a neighborhood of $%
\theta _{0}\in \Theta _{I}$ and is equal to $n_{\theta }+n_{x}^{2}$ 
\end{enumerate}
\normalsize
\end{defn*}
\begin{lem}
Given $\mathcal{V}_{i}(.)\in \lbrack \underbar{\ensuremath{\mathcal{V}}}(.),%
\bar{\mathcal{V}}(.)]$, and Assumption \textbf{LCI}, $\theta $ is locally
conditionally identified at any $\theta _{0}$ in $\Theta _{I}$ from the
auto-covariances of $Y_{t}$. 
\end{lem}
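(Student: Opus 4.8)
The plan is to reduce the statement to the local identification theorem of \citet{ECTA1171} by showing that, once the value of the wedge moment $\mathbb{E}\mathcal{V}(\mathbf{Y}_{t-1})\phi(\mathbf{Y}_{t-1})$ is held fixed, the innovation representation above is a minimal linear state-space system to which their argument applies essentially verbatim. Fix an admissible $\mathcal{V}_i(\cdot)\in[\underline{\mathcal{V}}(\cdot),\bar{\mathcal{V}}(\cdot)]$; together with $\theta_0$ this pins down a DGP and a point $\theta_0\in\Theta_I$. The one-step forecast error $a_t$ carries a predictable component, $\mathbb{E}(a_t\mid\sigma(\mathcal F_{t-1}))=\mathbb{E}(\lambda_t\mid\sigma(\mathcal F_{t-1}))$, which projects onto the econometrician's lagged observables as $\mathcal{V}(\mathbf{Y}_{t-1})$ and is in general nonzero, so I would first recenter: set $\tilde{a}_t\equiv a_t-\mathbb{E}(a_t\mid\sigma(\mathcal F_{t-1}))$, a martingale difference with conditional variance $\Sigma_{\tilde a}(\theta)\succ 0$ (inherited from $\Sigma_{\epsilon_t}\succ 0$ and LCI-3). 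Since $C(\theta)\hat{X}_{t|t-1}$ is $\sigma(\mathcal F_{t-1})$-measurable, the de-meaned observable process $Y_t^o-\mathbb{E}(Y_t^o\mid\sigma(\mathcal F_{t-1}))=\tilde{a}_t$ is driven by the white noise $\tilde{a}_t$ through the same transition matrix $A(\theta)$, steady-state Kalman gain, and loading $C(\theta)$ as in the undistorted model; its second-order structure is therefore governed entirely by $N(\theta)=vec(A(\theta)',B(\theta)',C(\theta)')$ and $\Sigma_{\tilde a}(\theta)$, exactly the objects appearing in \citet{ECTA1171}.

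Second, I would pass to the steady state of the filter. Under LCI-2 (eigenvalues of $A(\theta)$ strictly inside the unit disk), the minimality conditions LCI-4(i) (controllability of $(A,K)$, observability of $(A,C)$), and the nonsingularity LCI-3, the time-varying gain $K_t(\theta)$ converges to the unique stabilizing steady-state gain $K(\theta)$, so the centered observable process is covariance stationary with autocovariances---equivalently, spectral density---determined by $(A(\theta),K(\theta),C(\theta),\Sigma_{\tilde a}(\theta))$. This places the problem in the framework of \citet{ECTA1171}: two parameter values are observationally equivalent from the autocovariances of $Y_t$ if and only if there exists a nonsingular $T$ with $\delta^{NS}(\theta,T)=\delta^{NS}(\theta_0,I_{n_x})$, i.e. their minimal realizations differ only by a similarity transformation of the state.

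Third, I would invoke the rank condition. LCI-5 supplies continuous differentiability of $N$, hence of $\delta^{NS}(\cdot,\cdot)$, and LCI-6 states that $\Delta^{NS}(\theta_0)$, the Jacobian of $(\theta,vec\,T)\mapsto\delta^{NS}(\theta,T)$ at $(\theta_0,I_{n_x})$, has full column rank $n_\theta+n_x^2$ in a neighborhood of $\theta_0$. By the constant-rank (inverse-function) theorem this map is locally injective near $(\theta_0,I_{n_x})$, so the only parameter value near $\theta_0$ whose minimal realization is a similarity transform of that of $\theta_0$ is $\theta_0$ itself; combined with the observational-equivalence characterization recalled in the previous step, $\theta_0$ is locally unique among the parameters generating its autocovariances, i.e. locally identified. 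Because the modified LCI-6 imposes the rank condition at every element of $\Theta_I$, this holds at each $\theta_0\in\Theta_I$, which is the claim.

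The substantive departure from \citet{ECTA1171}, and the step needing the most care, is the nonzero conditional mean of the forecast error induced by the wedge $\lambda_t$: the reduction hinges on $\lambda_t$ entering only additively and on its conditional mean being a function of lagged observables, so that \emph{conditioning} on the value of $\mathbb{E}\mathcal{V}(\mathbf{Y}_{t-1})\phi(\mathbf{Y}_{t-1})$ (equivalently, treating the predictable drift as a known function of $\mathbf{Y}_{t-1}$) leaves intact the white-noise innovation $\tilde{a}_t$ that carries all the identifying content. A secondary point is that set identification does not interfere: the claim is pointwise over $\Theta_I$, so it suffices that each $\theta_0$ associated with an admissible $\mathcal{V}_i$ satisfy Assumption LCI, which is precisely what the strengthened LCI-6 guarantees.
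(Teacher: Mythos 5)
Your proposal is correct and follows essentially the same route as the paper: recenter the observables by the conditional mean of the wedge so that the remaining innovation is white noise, then apply the Komunjer--Ng local-identification result (their Proposition 2-NS) under Assumption LCI, with LCI-6 holding at every point of $\Theta_I$ delivering the conclusion pointwise over the identified set. The paper's proof simply cites that proposition, whereas you unpack its content (minimality, similarity-transformation equivalence, rank condition); the substance is the same.
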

\begin{proof}
See Appendix
\end{proof}
Given the maintained assumptions, we next characterize the identified set implied by the restrictions using macroeconomic data. \bigskip

\subsubsection{\textbf{Characterization of the Identified Set using Macroeconomic Data}}
As already shown, the identified set is defined by the conditional moment inequalities in \ref{eq:mom_ineq},
\begin{eqnarray*}
\Theta_{I}&\equiv&\left\{\theta\in\Theta: \mathbb{E}_{t}(Y_{t}^{o}-{C(\theta )}\hat{X}_{t|t-1}|\sigma (\mathcal{F}_{t-1}))\geq 0\right\}
\end{eqnarray*}

In all of the identification results we assume the existence of appropriate
random variables to construct unconditional moment restrictions from the
conditional moment inequalities. Such instruments can be either past data or past state
variables constructed with the Kalman filter. By construction, the latter
are uncorrelated with current information, but they might be noisy. In the case of additional moment conditions, a.k.a \textit{supernumerary}\footnote{We borrow this term from \citet{ECTA:ECTA1218}.}, we need to show conditions under which
they can further reduce the identified set, as $\Theta_{I}'$ will now satisfy multiple moment inequalities. Although the work of  \citet{ECTA:ECTA1218} deals with linear models, we cannot use their results 
here for several reasons. First, the characterization of the identified set is
done through the support function of $\Theta _{I}$\footnote{%
The support function, which is $\sup (q^{T}\Theta ),\forall q\in \mathbb{R}%
^{n_{\theta }}$, can fully characterize any convex set $\Theta $.}, which
requires the identified set to be convex\footnote{One can alternatively work with the Aumann expectation of the non-convex random set, which is always convex. Nevertheless, this requires minimizing the support function of the random set for any value of $\theta$, which is costly in high dimensional settings like ours \citep{doi:10.1146/annurev-economics-063016-103658}.}. In our case, since the stability conditions and
the cross equation restrictions introduce on $\mathfrak{L}%
(\theta )$ and therefore nonlinear restrictions on $\Theta$, the identified
set is not necessarily convex and its geometry is difficult to be known a priori. Second, we deal with moment
conditions of general form and therefore additional conditions arising from more instruments in an IV setting is just a special case. We nevertheless also provide an adapted \textit{Sargan} condition. Third, point identification cannot in general occur despite the moment inequalities as we deal with typically continuously distributed random variables.

Recall that the number of moment conditions we use for estimation depend on
the number of observables. Assumption \textbf{LCI-6} requires that there has
to be enough (or the right kind) of observables such that a rank condition
is satisfied. In our case, the number of observables used determines the number
of first order conditions used for estimation. The minimum number ($r$) of
observables required such that conditional identification is achieved (
Lemma 4 is satisfied) maps to the necessary first order conditions. For
example, if we have $Y_{1}$ and $Y_{2}$ to estimate the model, and we only
need $Y_{1}$ to conditionally identify $\theta $, then the $n_{\theta
}\times 1$ first order conditions arising from $Y_{1}$ will be the necessary
conditions. The rest of the conditions, i.e. those arising from $Y_{2}$ are then supernumerary.

Notice that given the structure of the class of models we consider in this paper, it is straightforward to find a re-parameterization that restores moment equalities, that is, for every observable $Y^{o}_t$, $\exists U_{t}:\mathbb{E}U_t\phi(\mathbf{Y_{t-1}})\equiv U\in\mathbb{R}^{+}$ that satisfies the restriction $\mathbb{E}(Y_{t}^{o}-{C(\theta )^{*}}\hat{X}_{t|t-1})\phi(\mathbf{Y_{t-1}})-U=0$.  Then, for $\mathbf{U}\equiv(U_1,U_2..U_r)'$, the following set determines the map from $U_{I}$ to $\Theta_{I}$:  
$  $ \begin{eqnarray}
 (\Theta_{I},\mathbf{U}_{I})&:=&\left\{(\theta,u)\in (\Theta,\underset{{r}}{\bigtimes}\mathbb{R}^{+}):\mathbb{E}(Y_{t}^{o}-{C(\theta )^{*}}\hat{X}_{t|t-1})\phi(\mathbf{Y_{t-1}})-U=0\right\} \label{eq:joint}
 \end{eqnarray}

\begin{prop} {\textbf{$\Theta_{I}$ when $n_{y}=r$}}\\
 Given correct specification of $M_{f}$, Lemma 4 and optimal instrument $\phi^{*}(\mathbf{Y_{t-1}})$, definition \ref{eq:joint} implies the following:
  \begin{itemize}
  	\item  $\exists!$ invertible mapping $\mathcal{G}:\Theta_{I}=\mathcal{G}^{-1}(U)\cap \Theta$
  	\item $\Theta_{I}$ is sharp 
  \end{itemize}
\end{prop}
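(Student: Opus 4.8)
The plan is to treat the slack--reparameterised system in \eqref{eq:joint} as a square system of estimating equations in $\theta$, apply the inverse function theorem to extract $\mathcal{G}$, and then argue separately that the instrument $\phi^{*}$ discards no information relative to the conditional inequality, which is what delivers sharpness.

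\emph{Existence and invertibility of $\mathcal{G}$.} Fix $\phi^{*}(\mathbf{Y}_{t-1})$ and set
\[
\mathcal{G}(\theta)\;:=\;\mathbb{E}\!\left(Y_{t}^{o}-C(\theta)^{*}\hat{X}_{t|t-1}\right)\phi^{*}(\mathbf{Y}_{t-1}),
\]
so that the graph of $\mathcal{G}$, intersected with $\{u\in\bigtimes_{r}\mathbb{R}^{+}\}$ and with $\mathfrak{L}(\theta)=0$, is exactly $(\Theta_{I},\mathbf{U}_{I})$. For each fixed $u$, the requirement $\mathcal{G}(\theta)=u$ together with the cross--equation restrictions $\mathfrak{L}(\theta)=0$ is a system whose number of equations equals the $n_{\theta}$ coordinates of $\theta$ (this is where $n_{y}=r$ enters: exactly the $r$ observables used in Lemma 4 are in play, and the cross--equation restrictions supply the remaining $n_{\theta}-n_{y}$ equations). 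I would first check that the Jacobian of $(\mathcal{G},\mathfrak{L})$ with respect to $\theta$ is nonsingular at every $\theta_{0}\in\Theta_{I}$: under correct specification of $M_{f}$, $C(\theta)^{*}$ and $\hat{X}_{t|t-1}$ are the model--consistent objects and $\Theta_{I}$ is nonempty, and choosing $\phi^{*}$ as the instrument efficient for the equality model ensures this Jacobian has the same rank as the Jacobian of the autocovariance map, which is full by Assumption \textbf{LCI-6} (Lemma 4). The inverse function theorem then makes $\theta\mapsto u$ a $C^{1}$ local diffeomorphism near each $\theta_{0}$; patching these local inverses, using connectedness of $\Theta$ (Assumption \textbf{LCI-1}) and the full--rank property on a neighbourhood of $\Theta_{I}$, produces a single bijection $\mathcal{G}$ between $\Theta_{I}$ and $\mathbf{U}_{I}=\{u\in\bigtimes_{r}\mathbb{R}^{+}:u\in\operatorname{Im}\mathcal{G}\}$, so $\Theta_{I}=\mathcal{G}^{-1}(\mathbf{U}_{I})\cap\Theta$; uniqueness is immediate, since $\mathcal{G}$ is pinned down by the model matrices and the fixed $\phi^{*}$.

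\emph{Sharpness.} I would show that $\Theta_{I}$ coincides with the set of $\theta\in\Theta$ that can rationalise the law of $(Y_{t}^{o},\mathbf{Y}_{t-1})$. From $Y_{t}^{o}-C(\theta)\hat{X}_{t|t-1}=a_{t}=\lambda_{t}+C(\theta)A(\theta)(X_{t-1}-\hat{X}_{t-1|t-1})+C(\theta)B(\theta)\epsilon_{t}$ and the conditional--mean--zero properties of the last two terms, $\mathbb{E}(a_{t}(\theta)\mid\sigma(\mathbf{Y}_{t-1}))=\mathbb{E}(\lambda_{t}\mid\sigma(\mathbf{Y}_{t-1}))=:\mathcal{V}_{\theta}(\mathbf{Y}_{t-1})$, so the entire testable content of the model is $\mathcal{V}_{\theta}\geq0$ a.s.; by the tower property this holds whenever $\mathbb{E}(\lambda_{t}\mid\sigma(\mathcal{F}_{t-1}))\geq0$, and conversely a DGP in which the agents' information set is generated by $\mathbf{Y}_{t-1}$ rationalises any $\theta$ with $\mathcal{V}_{\theta}\geq0$, so the sharp set is $\{\theta:\mathcal{V}_{\theta}\geq0\ \text{a.s.}\}$. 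Inclusion of this set in the one defined by \eqref{eq:joint} is again the tower property ($\mathcal{V}_{\theta}\geq0\Rightarrow\mathbb{E}\,a_{t}(\theta)\phi^{*}=\mathbb{E}\,\mathcal{V}_{\theta}\phi^{*}=:U\in\mathbb{R}^{+}$). For the reverse I would use optimality of $\phi^{*}$ for this one--sided restriction: the efficient instrument saturates the sign information, so that with $\phi^{*}$ taken, at each candidate $\theta$, as the negative--part detector $\mathbf{1}\{\mathcal{V}_{\theta}<0\}$, $\mathbb{E}\,\mathcal{V}_{\theta}\phi^{*}\geq0$ forces $\mathcal{V}_{\theta}\geq0$ a.s.; because $n_{y}=r$ the system is just identified and no instrument direction is discarded, so no proper relaxation of the conditional restriction is tested. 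Hence $\Theta_{I}$ from \eqref{eq:joint} is sharp.

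\textbf{The main obstacle} is the reverse implication in the sharpness step: showing that a single optimal instrument---rather than a continuum of positive instruments, as in Andrews--Shi--type characterisations---recovers the whole conditional identified set rather than an outer relaxation of it. This relies on the additive structure of $a_{t}$, which reduces the model's testable content to the sign of one conditional mean function, and on a careful formalisation of ``optimal'' as the information--saturating instrument in the admissible class (and on checking that it is $\sigma(\mathbf{Y}_{t-1})$--measurable and compatible with the differentiability used in the first step). A lesser difficulty is upgrading the local diffeomorphisms of the inverse function theorem to one global bijection $\mathcal{G}$ on $\Theta_{I}$, which needs the Jacobian to remain full rank on a neighbourhood of $\Theta_{I}$ together with connectedness and compactness of $\Theta$.
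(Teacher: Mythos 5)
Your argument for the first bullet is essentially the paper's: the paper also obtains $\mathcal{G}$ by stacking the $r$ necessary first-order conditions, invoking Lemma 4 (the \textbf{LCI-6} rank condition) to make $\mathcal{G}(\theta)=U$ "well behaved" and invertible when $n_{y}=r$, and concluding $\Theta_{I}=\Theta\cap\theta^{*}(U)$. Your extra care about patching local inverses via connectedness is a reasonable elaboration of what the paper leaves implicit.

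For sharpness you take a genuinely different route, and it has a genuine gap. The paper's sharpness argument is a rationalizability construction: it exhibits, for each admissible $U$, an explicit family of joint distributions $F^{s}_{(Y_{t},Y_{t-1},Z_{t})}=F_{(Y_{t}|Y_{t-1},Z_{t})}(v-\lambda_{s}(Y_{t-1},Z_{t}))F_{Y_{t-1}}F_{Z_{t}}$, built by shifting the frictionless conditional density using the additive separability of $\lambda_{s}$, and then argues that the admissible set of $U$ is the smallest one compatible with $\lambda_{s}\neq0$. You instead reduce sharpness to (i) equivalence of the unconditional and conditional inequality sets via an information-saturating instrument and (ii) an asserted converse rationalizability. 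Step (ii) is exactly the content the paper has to construct, and you assert it in one clause ("a DGP in which the agents' information set is generated by $\mathbf{Y}_{t-1}$ rationalises any $\theta$ with $\mathcal{V}_{\theta}\geq0$") without exhibiting the admissible structure $(F^{s},\lambda_{s})$; without that construction the claim that $\{\theta:\mathcal{V}_{\theta}\geq0\}$ is the sharp set is unproven. Step (i) also creates an internal inconsistency with your first bullet: if $\phi^{*}$ is the negative-part detector $\mathbf{1}\{\mathcal{V}_{\theta}<0\}$, then $U=\mathbb{E}\,\mathcal{V}_{\theta}\phi^{*}=0$ for every $\theta\in\Theta_{I}$, so $\mathbf{U}_{I}$ collapses to a point and the map $\mathcal{G}$ cannot be a bijection between $\Theta_{I}$ and $\mathbf{U}_{I}$; it also destroys the paper's intended interpretation of $U$ as nonzero precisely when a positive mass of agents is constrained. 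The efficiency-optimal instrument of the first part and the sign-saturating instrument of the second part are different objects, and the proposition needs a single fixed $\phi^{*}$.
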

\begin{proof} See Appendix
	\end{proof}

Note that the proposition applies to the infeasible case of using optimal instruments. If the case of non-optimal but valid instruments, the set is again as sharp as possible, up to the information loss implied by the non-optimality of the instrument\footnote{Recent work in the literature proposes constructing instrument functions to avoid this information loss, see for
	example \citet{andrews2013inference}. We nevertheless do not pursue this in
	this paper.}. It is also important to note that our characterization is based on the fact that the unobserved endogenous variables are integrated out using the Kalman filter. This implies that the only unobservables we need to deal with in the characterization of $\Theta_{I}$ is the aggregate shocks in the economy. Moreover, the set is robust to individual heterogeneity as long as it vanishes on aggregate. More importantly, given correct specification, $U$ is non zero if and only if there is a positive mass of agents that indeed face frictions.  

Having established the determination of $\Theta_{I}$ using the minimum number of observables, we turn to the case of using a larger number of observables, including non-macroeconomic variables.
Let $m_{\alpha ,t}(\theta )$ and $m_{\beta ,t}(\theta )$ denote the
necessary and supernumerary moment functions for identifying $\Theta $,
where $\mathbb{E}(m_{\alpha ,t}(\theta )|\mathbf{Y}_{t-1})=\mathcal{V}%
_{\alpha }(\mathbf{Y}_{t-1})\in \lbrack \underbar{\ensuremath{\mathcal{V}}}%
_{\alpha }(\mathbf{Y}_{t-1}),\bar{\mathcal{V}}_{\alpha }(\mathbf{Y}_{t-1}))$
and $\mathbb{E}(m_{\beta ,t}(\theta )|\mathbf{Y}_{t-1})=\mathcal{V}_{\beta }(%
\mathbf{Y}_{t-1})\in \lbrack \underbar{\ensuremath{\mathcal{V}}}_{\beta }(%
\mathbf{Y}_{t-1}),\bar{\mathcal{V}}_{\beta }(\mathbf{Y}_{t-1}))$. For notational brevity, we drop the dependence on $\mathbf{Y}_{t-1}$ hereafter.

Comparing these general bounds to the ones implied by the model equilibrium restrictions, $\underbar{\ensuremath{\mathcal{V}}}_{\alpha }=\underbar{\ensuremath{\mathcal{V}}}_{\beta }=0$
and $\bar{\mathcal{V}}_{\alpha }=\bar{\mathcal{V}}_{\beta
}=\infty $. Due to the boundedness of $\Theta $ and the
cross-equation and stability restrictions, the effective lower and upper bounds
are likely to lie strictly within $[0,\infty)$ for every moment condition.
Let $\phi (.)$ be any $\mathbf{Y_{t-1}}-$ measurable function for which $%
\hat{m}_{\alpha ,t}(\theta ):=m_{\alpha ,t}(\theta )\phi_{t} $
and $\hat{m}_{\beta ,t}(\theta ):=m_{\beta ,t}(\theta )\phi_{t}$, $\mathbf{\hat{m}}_{\alpha }(\theta )$ and $\mathbf{\hat{m}}_{\beta
}(\theta )$ the corresponding vectors, and $\mathbf{\bar{m}}_{\alpha
}(\theta )$ and $\mathbf{\bar{m}}_{\beta }(\theta )$ the vector means.
 
Let $W$ be a real
valued, possibly random weighting matrix, diagonal in $(W_{\alpha },W_{\beta
})$. Furthermore, denote by $Q_{\alpha }$ the conditional expectation operator when conditioning on $W_{\alpha
}^{\frac{1}{2}T}\mathbf{\hat{m}}_{\alpha }(\theta )$ and by  $Q_{\alpha}^{\bot}$ the residual. 

The following proposition specifies the additional restrictions that need to be satisfied by the additional moment conditions such that the size of the identified set becomes smaller. This is a general result, and it applies both to conventional cases where $n_{Y}>r$ or when data on non-macroeconomic variables is used. 

\begin{prop}
\textbf{The Identified Set with Multiple Conditions}
Given
\begin{enumerate}
	\item $\Theta_{I}\neq \emptyset $	
	\item $U_t\in [\underbar{U},\bar{U}] $, with $\underbar{U}\equiv(W_{\alpha}^{\frac{1}{2}T}\underbar{\ensuremath{\mathcal{V}}}_\alpha +W_{\beta}^{\frac{1}{2}T} \underbar{\ensuremath{\mathcal{V}}}_\beta )\phi_{t} $, $\bar{U}\equiv(W_{\alpha}^{\frac{1}{2}T}\bar{\mathcal{V}}_{\alpha _{t}}+W_{\beta}^{\frac{1}{2}T}\bar{\mathcal{V} }_\beta )\phi_{t}$
	\end{enumerate}

Then $\Theta^{\prime }_{I}\subset\Theta_{I}$ \text{iff}
\begin{equation}
	\mathbb{E} Q_{\alpha}^{\bot}\left(W_{\beta}^{\frac{1}{2}T}\hat{m}%
	_{\beta,t}(\theta)-\hat{U}_t\right)  =  0  \label{eq:sargan}
\end{equation}
\end{prop}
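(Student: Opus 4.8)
The plan is to pass to the reparametrized, equality‑restricted representation \eqref{eq:joint} and then to separate, by means of the projection $Q_{\alpha}$, the part of the supernumerary moment that merely re‑encodes the macroeconomic restrictions already defining $\Theta_{I}$ from the part that can genuinely bind; the residual orthogonality that remains is exactly \eqref{eq:sargan}, which is why it reads as an adapted Sargan condition.

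First I would rewrite the two identified sets in terms of the distortion $U_{t}$. Using the reparametrization preceding \eqref{eq:joint}, stacking the weighted moments $W_{\alpha}^{\frac{1}{2}T}\hat{m}_{\alpha,t}(\theta)$ and $W_{\beta}^{\frac{1}{2}T}\hat{m}_{\beta,t}(\theta)$, and writing $U_{t}=U_{\alpha,t}+U_{\beta,t}$ for the corresponding split of the bound in hypothesis (2), one has $\theta\in\Theta_{I}$ iff there is an admissible $U_{\alpha,t}$ with $\mathbb{E}\big(W_{\alpha}^{\frac{1}{2}T}\hat{m}_{\alpha,t}(\theta)-U_{\alpha,t}\big)=0$, and $\theta\in\Theta'_{I}$ iff there are admissible $(U_{\alpha,t},U_{\beta,t})$, with $U_{\alpha,t}+U_{\beta,t}$ in the interval of hypothesis (2), solving both blocks. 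Because the $\beta$‑block is not used to identify $\theta$ via Lemma 4, $\Theta'_{I}\subseteq\Theta_{I}$ is automatic, and the content of the proposition is a characterization, for $\theta$ already in $\Theta_{I}$, of when $\theta$ also solves the $\beta$‑block --- equivalently, of $\Theta'_{I}$ itself, hence of when the inclusion is proper.

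Next I would introduce the decomposition $W_{\beta}^{\frac{1}{2}T}\hat{m}_{\beta,t}(\theta)=Q_{\alpha}\big(W_{\beta}^{\frac{1}{2}T}\hat{m}_{\beta,t}(\theta)\big)+Q_{\alpha}^{\bot}\big(W_{\beta}^{\frac{1}{2}T}\hat{m}_{\beta,t}(\theta)\big)$. By construction of $Q_{\alpha}$ the first term is a measurable transform of $W_{\alpha}^{\frac{1}{2}T}\hat{m}_{\alpha,t}(\theta)$, so once the $\alpha$‑block restriction is imposed its contribution to $\mathbb{E}\big(W_{\beta}^{\frac{1}{2}T}\hat{m}_{\beta,t}(\theta)\big)$ is already determined by, and consistent with, membership in $\Theta_{I}$ and cannot shrink $\Theta_{I}$ further. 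Subtracting the admissible distortion $\hat{U}_{t}$ that is compatible with the $\alpha$‑block and the bounds of hypothesis (2), and projecting off $W_{\alpha}^{\frac{1}{2}T}\hat{m}_{\alpha,t}(\theta)$, the $\beta$‑block restriction becomes equivalent to $\mathbb{E}\, Q_{\alpha}^{\bot}\big(W_{\beta}^{\frac{1}{2}T}\hat{m}_{\beta,t}(\theta)-\hat{U}_{t}\big)=0$, which is \eqref{eq:sargan}. For the ``if'' direction I would verify that, given \eqref{eq:sargan} and $\theta\in\Theta_{I}$, setting $U_{\beta,t}=\hat{U}_{t}-U_{\alpha,t}$ yields admissible blocks, so $\theta\in\Theta'_{I}$; for ``only if'', any $\theta\in\Theta'_{I}$ solves the $\beta$‑block with some admissible $U_{\beta,t}$, and taking conditional expectations given $W_{\alpha}^{\frac{1}{2}T}\hat{m}_{\alpha,t}(\theta)$ and using the law of iterated expectations to cancel the $Q_{\alpha}$‑measurable part reproduces \eqref{eq:sargan}.

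The step I expect to be the main obstacle is not the projection algebra but reconciling the interval constraint $U_{t}\in[\underbar{U},\bar{U}]$ with the equality in \eqref{eq:sargan}: one must argue that on $\Theta_{I}$, under Lemma 4 and the optimal instrument $\phi^{*}$, the $\alpha$‑block effectively pins down $U_{\alpha,t}$ so that $\hat{U}_{t}$ is a well‑defined quantity rather than an interval, which is precisely what turns the residual condition into an equality instead of a two‑sided inequality, and that the $Q_{\alpha}$‑component carries no information that is either lost or double‑counted relative to the restrictions already in $\Theta_{I}$. A secondary technical point is that $\hat{U}_{t}$ may itself be partly $W_{\alpha}^{\frac{1}{2}T}\hat{m}_{\alpha,t}(\theta)$‑measurable, so $Q_{\alpha}^{\bot}\hat{U}_{t}$ must be tracked carefully in both directions of the equivalence.
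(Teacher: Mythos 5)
Your proposal is correct and follows essentially the same route as the paper: decompose the weighted supernumerary moments via $Q_{\alpha}$ and $Q_{\alpha}^{\bot}$, invoke the one-to-one map $\mathcal{G}$ from $U$ to $\Theta_{I}$ delivered by the necessary ($\alpha$) block, and read the residual orthogonality condition as the adapted Sargan restriction that further constrains the admissible domain of $U$ and hence shrinks the identified set. The points you flag as obstacles (the $\alpha$-block pinning down $U_{\alpha,t}$, and the $Q_{\alpha}$-measurable part of $\hat{U}_{t}$) are exactly the steps the paper resolves via the invertibility of $\mathcal{G}$ established in Proposition 4.
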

\begin{proof}
See the Appendix
\end{proof}
The main argument behind Proposition 5 is the following. Suppose that the
necessary moment conditions have no common information with the
supernumerary conditions and that $W=I_{n_{\alpha }+n_{\beta }}$. From the
minimization of $\mathbb{E}\frac{1}{2}(\mathbf{\bar{m}}-\bar{U}_{t})^{T}(%
\mathbf{\bar{m}}(\theta )-\bar{U}_{t})$ where $\mathbf{\bar{m}}(\theta
)\equiv (\mathbf{\bar{m}}_{\alpha }(\theta ),\mathbf{\bar{m}}_{\beta
}(\theta ))^{T}$, the first order condition is 
\begin{equation*}
\mathbb{E}(\mathbf{\bar{m}}_{\alpha }(\theta )+\mathbf{\bar{m}}_{\beta
}(\theta )-\bar{U}_{t})=0
\end{equation*}%
which for $\bar{U}_{\alpha }\equiv\overline{Q_{\alpha }^{\bot }U_{t}}$
 can be rewritten as $\mathbb{E}((\mathbf{\bar{m}}_{\alpha }(\theta )-\bar{U}_{\alpha })+(\mathbf{\bar{m}}_{\beta }(\theta )-\bar{U}^{\bot}_{\alpha }))=0$.
By construction the two parts of the left hand side of the expression are
independent, and therefore both have to be zero. 
\begin{eqnarray}
\mathbb{E}(\mathbf{\bar{m}}_{\alpha }(\theta )-\bar{U}_{\alpha }%
)=0 &&  \label{eq:ma} \\
\mathbb{E}(\mathbf{\bar{m}}_{\beta }(\theta )-\bar{U}^{\bot}_{\alpha })=0 &&
\end{eqnarray}%
Notice that, by construction, the set of necessary moment conditions in 5.3.
must have full rank, and this establishes a one-to-one mapping from $\Theta $
to the domain of variation of $U_{t}$, $[\underbar{U}(\mathbf{Y}_{t-1}),\bar{%
U}(\mathbf{Y}_{t-1})]$. Thus, there exists an inverse operator $\mathcal{G}%
_{\alpha }$ such that $\theta =\mathcal{G}_{\alpha }(U_{t},\mathbb{P})$.
Plugging this expression for $\theta $ in 5.4, we get that 
\begin{equation*}
\mathbb{E}\mathbf{\bar{m}}_{\beta }(\mathcal{G}_{\alpha }(U_{t},\mathbb{P}))=%
\mathbb{E}\bar{U}^{\bot }_{\alpha }
\end{equation*}
This is a restriction on the values that $U_{t}$ can take in addition to the
ones implied by the necessary conditions. A restriction on $U_{t}$ implies a
restriction on the admissible $\Theta _{I}$ given the one-to-one
relationship in 5.3. Notice that when the supernumerary conditions do not
add any additional information, i.e. $m_{\alpha ,t}(\theta )\equiv m_{\beta
,t}(\theta )$, the restriction collapses to $Q_{\alpha }=Q_{\alpha }^{\bot }=%
\frac{1}{2}$.
Given these identification results, we can analyze identification arising from any inequality restriction, and therefore any additional type of information can be potentially analyzed. Below, we discuss and formally show how qualitative survey data can provide additional restrictions that are informative about aggregative models of economies with frictions. Such information constrains further the stochastic properties of $\lambda_{t}$, and therefore the size of $\Theta_{I}$. 

\section{The use of additional information}

Additional data can be potentially linked to dynamic equilibrium models by augmenting the observation equation 
with moment restrictions. The latter can be motivated by the fact that not all data are explicitly modeled using structural equations, but they can nevertheless be used to sharpen economic and econometric inference, especially for unobserved processes. An example of such a process is the distortion $\lambda_{t}$.  We focus on qualitative survey data because, as we illustrate, they contain distributional information that can be linked to the aggregate model and are informative about  $\lambda_{t}$. This is especially true for economies with frictions, as the proportion of agents whose behavior is  distorted is an important statistic. An example of a distributional statistic that has been routinely used to judge the validity of \textit{complete} models is the proportion of firms which cannot change prices in New Keynesian models featuring a Calvo adjustment mechanism. Our focus is on more general information that is contained in qualitative surveys, and given that we deal with \textit{incomplete}  models, this information is able to discard a set of economic mechanisms, and not just one. We thus view our method as a significant generalization of the treatment of microeconomic information in dynamic equilibrium models.

Qualitative survey data are usually available in the form of aggregate statistics,
where aggregation is performed over categories of answers to particular
questions. As in micro-econometric studies, the categorical variable is a function of a
continuous latent variable. In a structural context, there is a measurable mapping from the categories of answers to
the random variables relevant to the decision of each agent. 
For example, if the question is of the type "How do you expect
your financial situation to change over the next quarter" and the answer is
trichotomous, i.e "Better", "Same" and "Worse", then the answers map to a set of
partitions of the end of period assets $a_{t+2}$: $a_{t+2}\in
\lbrack a_{t+1}-\epsilon ,\infty )$, $a_{t+2}\in (a_{t+1}-\epsilon
,a_{t+1}+\epsilon )$ and $a_{t+2}\in (-\infty ,a_{t+1}-\epsilon ]$.
For this interpretation to hold, we need
to assume that agents report their states or beliefs truthfully. Furthermore, denote by $\{S_{i,k,t}\}_{i\leq N}$ the survey sample over a period of
length $T,$ for the $i$th respondent. Let $C_{l}^{k}$ be the $l^{\text{th}}
$ categorical answer to question $k$ and $\hat{\xi}_{i,t,k}$ the
respondent's choice. Given some weights on each category $w_{l}$, the available statistics are of the form: 
$
\hat{\mathcal{B}}_{t}^{k}=\sum_{l\leq L}w_{l}\sum_{i\leq N}1(\hat{\xi}%
_{i,t}\in C^{k})
$.  Given truth telling, we can map the answer to agent
beliefs i.e. there exists a mapping $h:\hat{\xi}_{i,t}\rightarrow \xi _{i,t}\equiv \mathcal{E}%
_{i,t}(x_{i,t+1}|x_{i,t},X_{t})$. Moreover,  since the conditional
expectation is a function of the information set, let ${B}_{l}\equiv \{(x_{i,t},X_{t})\in \mathbb{R}%
^{2n_{x}}:h(x_{i,t},X_{t})\in C_{l}^{k}\}$, that is, ${B}_{l}$ belongs to the
partition of the support of individual $x_{i,t}$ (or aggregate $X_{t})$ that
corresponds to category $C_{l}^{k}$. Consequently, the survey statistic has the following theoretical form, which can be linked to the model: 
\begin{eqnarray}\hat{\mathcal{B}}_{t}^{k}  &=& \sum_{l\leq L}w_{l}\sum_{i\leq N}w_{i}1(\mathcal{E}_{i,t}(x_{t+1,i}|x_{i,t},X_{t})\in B_{l}) \label{eq:repr_survey}\end{eqnarray}

For every $Y_{t}^{o}$ with a corresponding  model based conditional expectation $Y_{t}^{m}\equiv \mathbb{E}(Y_{t}^{o}|M,\mathcal{F}_{t})$,  Proposition 6 illustrates the additional restrictions implied by $\hat{\mathcal{B}}_{t}^{k} $. For these to hold, we need to make the following assumptions: 
\begin{defn*}{\textbf{\textsl{ASSUMPTION -S}}}
	\begin{enumerate} \small \item Let $\hat{%
			\mathcal{B}}_{t}^{k}$ defined as in \eqref{eq:repr_survey} and an aggregate
		shock vector $Z_{t}$ of length $p>0$, with innovation $\epsilon _{t}|\mathcal{F}_{t-1}\sim (0,\Sigma_{e})$.
		\item  $\lambda_{i,t}$ is conditionally linear.
		\item Truthful response such that $\hat{\mathcal{B}}_{t}^{k}\to B_{t}\in (0,1]$.
		\item $\hat{R}^{m}_{t}:=\left(\frac{\mathbb{E}(x^{m}_{i,t-1}|x_{i,t}\in\mathcal{X}^{\star,C}_{t})}{\mathbb{E}(x^{m}_{i,t-1}|x_{i,t}\in\mathcal{X}^{\star}_{t})}-1\right)\mathbb{P}(x_{i,t}\in \mathcal{X}^{\star,C}_{t})$ is monotone in  $\mathcal{X}^{\star,C}_{t}$ for all $m=1..n_{x}$.
	\end{enumerate}
\end{defn*} \normalsize

The assumption of conditional linearity for $\lambda_{i,t}$ enables us to claim that once the distortion is activated, it is a linear function of the states and the shocks, e.g. $\lambda_{i,t}:=\lambda_{1}x_{i,t-1}+\lambda_{2}z_{i,t}$. We motivate assumption S-4 after presenting the following proposition:
\bigskip
\begin{prop} Under Assumption-$\mathbf{S}$, the following restriction holds, $\mathbb{P}_{X_{t-1}}-a.s.$ 
	\begin{eqnarray*}
		\mathbb{E}(Y_{t}^{o}|X_{t-1})& \leq & P(\theta_{1},0)X_{t-1}+  \lambda^{1}X_{t-1}\odot \mathbb{E}( B_{t}|X_{t-1})
	\end{eqnarray*}	
where $P(\theta_{1},0)\equiv C(\theta)$.
\end{prop}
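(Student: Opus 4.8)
The plan is to reduce the statement to an upper bound on $\mathbb{E}(\lambda_{t}\mid X_{t-1})$, where $\lambda_{t}=\int \lambda_{i,t}\,d\Lambda_{t}(i)$ is the aggregate distortion, and then to extract that bound from the four parts of Assumption-S. First I would use the innovation representation of Section 5 together with correct specification of $M_{f}$: write $Y_{t}^{o}$ as the frictionless one-step-ahead prediction (a linear function $P(\theta_{1},0)X_{t-1}$ of the aggregate state, with $P(\theta_{1},0)\equiv C(\theta)$ as in the statement), plus a forecast-error/innovation component, plus $\lambda_{t}$. Since $X_{t-1}$ is $\mathcal{F}_{t-1}$-measurable and, by S-1, $\epsilon_{t}\mid \mathcal{F}_{t-1}$ has mean zero, taking $\mathbb{E}(\cdot\mid X_{t-1})$ annihilates the innovation component and leaves
\[\mathbb{E}(Y_{t}^{o}\mid X_{t-1})=P(\theta_{1},0)X_{t-1}+\mathbb{E}(\lambda_{t}\mid X_{t-1}).\]
It therefore suffices to prove $\mathbb{E}(\lambda_{t}\mid X_{t-1})\le \lambda^{1}X_{t-1}\odot \mathbb{E}(B_{t}\mid X_{t-1})$, $\mathbb{P}_{X_{t-1}}$-a.s.

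Second I would aggregate and impose conditional linearity (S-2): on the activation region the distortion is the linear function $\lambda^{1}x_{i,t-1}+\lambda^{2}z_{i,t}$ and vanishes off it, so $\lambda_{i,t}=\mathbf{1}(x_{i,t}\in \mathcal{X}^{\star,C}_{t})\big(\lambda^{1}x_{i,t-1}+\lambda^{2}z_{i,t}\big)$, with $\lambda^{1}$ acting coordinatewise as in the statement. Integrating over $i$, conditioning on $X_{t-1}$, and using Fubini, cross-sectional exchangeability, and the law of iterated expectations,
\[\mathbb{E}(\lambda_{t}\mid X_{t-1})=\lambda^{1}\,\mathbb{E}\big(x_{i,t-1}\mid x_{i,t}\in \mathcal{X}^{\star,C}_{t},X_{t-1}\big)\odot \mathbb{P}\big(x_{i,t}\in \mathcal{X}^{\star,C}_{t}\mid X_{t-1}\big)+\lambda^{2}\,\mathbb{E}\big(\mathbf{1}(x_{i,t}\in \mathcal{X}^{\star,C}_{t})z_{i,t}\mid X_{t-1}\big).\]
Using the enlarged-state convention of Section 4 the predetermined part of $z_{i,t}$ is absorbed into $x_{i,t-1}$, leaving $\lambda^{2}$ loading on the current innovation; by S-1 its aggregate against the activation indicator is non-positive, so it may be dropped from an upper bound. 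Finally, S-3 replaces the activation mass by the survey limit: truthful reporting maps the relevant response category exactly onto $\{x_{i,t}\in \mathcal{X}^{\star,C}_{t}\}$, so $\hat{\mathcal{B}}^{k}_{t}\to B_{t}$ with $B_{t}=\mathbb{P}(x_{i,t}\in \mathcal{X}^{\star,C}_{t}\mid \mathcal{F}_{t})$, whence $\mathbb{E}(B_{t}\mid X_{t-1})=\mathbb{P}(x_{i,t}\in \mathcal{X}^{\star,C}_{t}\mid X_{t-1})$ by iterated expectations.

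What remains is to sign, coordinate by coordinate, the vector
\[\lambda^{1}\odot \Big(\mathbb{E}\big(x_{i,t-1}\mid x_{i,t}\in \mathcal{X}^{\star,C}_{t},X_{t-1}\big)-X_{t-1}\Big)\odot \mathbb{P}\big(x_{i,t}\in \mathcal{X}^{\star,C}_{t}\mid X_{t-1}\big).\]
Writing $X_{t-1}=\mathbb{E}(x_{i,t-1}\mid X_{t-1})$ and splitting this expectation along the partition $\{\mathcal{X}^{\star,C}_{t},\mathcal{X}^{\star}_{t}\}$, the $m$-th entry equals $\lambda^{1}_{m}\,\hat{R}^{m}_{t}$ times the factor $\big(1-\mathbb{P}(x_{i,t}\in \mathcal{X}^{\star,C}_{t})\big)\,\mathbb{E}(x^{m}_{i,t-1}\mid x_{i,t}\in \mathcal{X}^{\star}_{t})$, where $\hat{R}^{m}_{t}$ is exactly the object appearing in S-4. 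Here I invoke S-4: $\hat{R}^{m}_{t}$ is monotone in $\mathcal{X}^{\star,C}_{t}$, and in the degenerate case of an empty activation region $\mathbb{P}(x_{i,t}\in \mathcal{X}^{\star,C}_{t})=0$ forces $\hat{R}^{m}_{t}=0$; monotonicity from that boundary value keeps $\hat{R}^{m}_{t}$ of one sign, and, consistently with the sign of $\lambda^{1}$ fixed by the cross-equation restrictions of Sections 4 and 5 (Proposition 1 and the accompanying examples), the displayed vector is $\le 0$, $\mathbb{P}_{X_{t-1}}$-a.s. Together with the first two paragraphs this is precisely the asserted inequality.

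The main obstacle is this last step: converting the purely qualitative monotonicity of S-4 into a definite sign rests on the boundary/limiting argument that ties the unobserved activation region to the observed survey mass and excludes a sign reversal — without it the conditional mean of the lagged state over constrained agents need not lie on the correct side of the population mean. A secondary difficulty is the measure-theoretic book-keeping for the continuum of agents, i.e.\ making the exchangeability/Fubini passage from integrals over $\Lambda_{t}$ to single-agent conditional expectations rigorous, and giving a genuine (not merely heuristic) verification that the $\lambda^{2}z_{i,t}$ term cannot push $\mathbb{E}(\lambda_{t}\mid X_{t-1})$ above the survey bound.
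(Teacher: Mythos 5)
Your proposal follows essentially the same route as the paper's proof: decompose $Y_{t}^{o}$ into the frictionless prediction $P(\theta_{1},0)X_{t-1}$ plus the aggregate distortion, use conditional linearity (S-2) to write that distortion as $\lambda^{1}$ loading on the integral of $x_{i,t-1}$ over the activation region (dropping the $\lambda^{2}$ innovation term by S-1), identify the activation mass with $B_{t}$ via truthful reporting (S-3), and sign the residual share through the ratio $\hat{R}^{m}_{t}$ using the monotonicity of S-4 anchored at the boundary value $\inf \hat{R}^{m}_{t}=0$ --- exactly the chain the paper uses to bound $\int_{\mathcal{X}^{\star}_{t}}x_{i,t-1}\,d\Lambda_{t}(i)\oslash X_{t-1}$ by $B_{t}$. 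The only discrepancy is notational: you label the activation region $\mathcal{X}^{\star,C}_{t}$ where the paper uses $\mathcal{X}^{\star}_{t}$, which interchanges the numerator and denominator roles in $\hat{R}^{m}_{t}$ relative to Assumption S-4 as stated, but under a consistent relabeling the argument is the paper's.
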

\begin{proof}
	See Appendix A
\end{proof}

 Notice that in the proof, we define  $(\mathcal{X}^{\star}_{t},\mathcal{Z}^{\star}_{t})$ as the time dependent subset of the support such that $\lambda_{i,t}\neq 0$. In models of heterogeneous agents these boundaries usually depend on aggregate states and shocks. Therefore, this inequality will in general be strict unless $B_{t}=0$.

Moreover, in the proof we establish two
facts. First, since the equilibrium conditions of the model with frictions
depend on subjective conditional expectations, and the model with frictions
is a smooth perturbation of the frictionless model, any probability
statement on the subjective expectations translates to a probability
statement on $\mu_{i,t}$. Second, any probability statement on $\mu _{i,t}$
is a probability statement on the solution of the model, and therefore on $%
\lambda _{i,t}$. Given representation \eqref{eq:repr_survey}, qualitative
survey data have information on the quantiles of subjective conditional
expectations of the agents. Therefore, survey data relate directly to the
conditional probability of observing a friction, $\mathbb{P}_{t}(\lambda _{i,t}\geq 0)$, which implies the above restrictions. The latter cannot in general deliver point identification, as the vector $\hat{\mathbf{R}}_{t}=(\hat{R}^{1}_{t},\hat{R}^{2}_{t}...\hat{R}^{n_{x}}_{t})$ is hard to pin down unless specific assumptions are made. Monotonicity, which is a mild assumption, provides a bound. 

Given Proposition 6,  we can also establish the following corollary result regarding identification using survey data. 
\begin{cor}{Identification with Survey Data }
\begin{enumerate}
	\item If $\hat{%
		\mathcal{B}}_{t}^{k}\to B_{t}\in(0,1)$, $\Theta_{I}'\subset\Theta_{I}$.
	\item Impossibility of point identification: When $B_{t}\neq 0$, $\Theta_{I}$ is not a singleton
\end{enumerate}
\end{cor}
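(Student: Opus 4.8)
The plan is to obtain both statements from Proposition 6 by feeding its conclusion into the multiple-conditions machinery of Propositions 3 and 5. First I would regard the survey restriction $\mathbb{E}(Y_t^o\mid X_{t-1})\le P(\theta_1,0)X_{t-1}+\lambda^1X_{t-1}\odot\mathbb{E}(B_t\mid X_{t-1})$ as defining a supernumerary moment block: set $m_{\beta,t}(\theta):=P(\theta_1,0)X_{t-1}+\lambda^1X_{t-1}\odot B_t-Y_t^o$, so that $\mathbb{E}(m_{\beta,t}(\theta)\mid X_{t-1})\ge 0$ is exactly Proposition 6, and keep $m_{\alpha,t}$ equal to the necessary macroeconomic moment functions that define $\Theta_I$ through Lemma 4 and Proposition 3. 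With this identification of objects, $\Theta_I'=\Theta_I\cap\{\theta:\mathbb{E}\,\hat m_{\beta,t}(\theta)\ge 0\}$, and by Proposition 5 the refinement is governed by whether the Sargan-type orthogonality \eqref{eq:sargan} imposes a genuinely new restriction on the reparameterising variable $U_t$ — equivalently, whether the residual projection $Q_\alpha^\bot$ fails to annihilate $W_{\beta}^{\frac{1}{2}T}\hat m_{\beta,t}(\theta)-\hat U_t$. An alternative, more elementary route for part (1) is to exhibit directly a $\theta\in\Theta_I\setminus\Theta_I'$: a near-frictionless $\theta$ with a small admissible $\lambda^1$ still satisfies the macro lower bound $\mathbb{E}(Y_t^o\mid\mathcal F_{t-1})\ge C(\theta)\hat X_{t|t-1}$, yet its survey upper bound stays close to the frictionless prediction and is therefore violated once $B_t>0$ pushes $\mathbb{E}(Y_t^o)$ strictly above that prediction.

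For part (1) the decisive step is to show that $\hat{\mathcal B}^k_t\to B_t\in(0,1)$ forces $Q_\alpha\ne\tfrac12$, i.e. that the survey block carries information not spanned by the macro block. Both blocks share the term $P(\theta_1,0)X_{t-1}=C(\theta)X_{t-1}$, but the survey block carries the additional multiplicative weight $X_{t-1}\odot\mathbb{E}(B_t\mid X_{t-1})$ whose entries lie strictly between $0$ and $1$ on a set of positive $\mathbb{P}_{X_{t-1}}$-measure; by Assumption S and the conditional linearity of $\lambda_{i,t}$ this weight enters only through the proportion of constrained agents, a statistic that is not a measurable function of the one-step-ahead forecast errors appearing in $m_{\alpha,t}$. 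Hence $\hat m_{\beta,t}(\theta)$ is not $\sigma(\hat m_{\alpha,t}(\theta))$-measurable, $Q_\alpha^\bot$ is non-degenerate, and \eqref{eq:sargan} excises the values of $\theta$ requiring a distortion magnitude incompatible with $B_t<1$, giving $\Theta_I'\subset\Theta_I$. The degenerate endpoints are instructive and worth recording: $B_t=1$ makes the survey weight collapse onto the macro block (no refinement), and $B_t=0$ makes the restriction vacuous, so the open interval $(0,1)$ is precisely the informative regime — this matches the partial-equilibrium intuition that observing responses helps exactly as long as the agent is sometimes unconstrained.

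For part (2) I would argue that whenever $B_t\ne 0$ the surviving restriction is an inequality with slack and hence cannot collapse $\Theta_I$ to a point. As noted after Proposition 6, the bound is strict unless $B_t=0$, because the activation region $(\mathcal X^\star_t,\mathcal Z^\star_t)$ depends on aggregate states through the vector $\hat{\mathbf R}_t$, which Assumption S-4 only disciplines by monotonicity rather than fixing at a point. I would then check that, for every $\theta$ in a neighbourhood of an arbitrary candidate point, one can choose $\lambda^1$ together with an $\hat{\mathbf R}_t$ compatible with the monotonicity bound so that the inequality still holds: since $\mathbb{E}(B_t\mid X_{t-1})>0$ the coefficient on $\lambda^1$ is bounded away from zero, so $\lambda^1$ absorbs local perturbations of $C(\theta)X_{t-1}$. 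Combined with the fact that the macro moment conditions are themselves inequalities and, for continuously distributed data, cannot point-identify $\theta$ (the remark preceding Proposition 3), this produces a continuum of admissible $\theta$, so $\Theta_I$ is not a singleton; the boundary case $B_t=0$ — zero mass of constrained agents, $U=0$, equalities restored, Proposition 3 delivering a sharp set — is the contrasting limit. The main obstacle I anticipate is the span argument in part (1): ruling out the knife-edge possibility that along $\Theta_I$ the product $X_{t-1}\odot\mathbb{E}(B_t\mid X_{t-1})$ is after all a deterministic transform of $\hat m_{\alpha,t}(\theta)$, which would let $Q_\alpha^\bot$ kill the new term. I would close this by leaning on the structure of Assumption S (truthful reporting, conditional linearity), which confines $B_t$ to the proportion-of-constrained-agents channel absent from the macro forecast errors, so that $Q_\alpha^\bot\bigl(W_{\beta}^{\frac{1}{2}T}\hat m_{\beta,t}(\theta)-\hat U_t\bigr)\neq 0$ unless the extra restriction is already in force — which is the desired conclusion.
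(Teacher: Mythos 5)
Your proof of part (1) is correct and is in substance the paper's own argument: the paper defines the conditional errors $\nu_t\equiv Y_t^o-P(\theta_1,0)X_{t-1}$ from the macro inequality and $\eta_t$ from the survey inequality and shows refinement by noting that $\eta_t-\mathbb{E}(\eta_t\mid\nu_t,\mathbf{Y})\neq 0$ unless $B_t=0$, which is exactly your claim that $\hat m_{\beta,t}$ is not $\sigma(\hat m_{\alpha,t})$-measurable, so that $Q_\alpha^{\bot}$ is non-degenerate and Proposition 5 yields $\Theta_I'\subset\Theta_I$; both treatments dispose of the endpoints identically ($B_t=0$ vacuous, $B_t=1$ collapsing onto the macro inequality), and both leave the ``knife-edge'' non-measurability step at the same level of rigour --- the paper simply asserts $p(\eta_t\mid\nu_t,\mathbf{Y})\neq 1$, so you are not missing anything it supplies. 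For part (2) your route differs: you argue constructively that slack in the survey inequality together with the free nuisance $\lambda^1$ absorbing local perturbations of $C(\theta)X_{t-1}$ produces a continuum of admissible $\theta$, whereas the paper gives a two-line contrapositive (a singleton $\Theta_I$ would force $B=0$, or unit mass on a single unconstrained agent and hence again $B=0$, contradicting $B_t\neq 0$). Your version supplies the mechanism the paper leaves implicit, at the cost of a mild additional regularity requirement on $\lambda^1$ and $\hat{\mathbf{R}}_t$; the paper's contrapositive is shorter but does not explain why the implication holds. There is no gap in either direction.
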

\begin{proof}
	See Appendix
\end{proof}

Corollary 7 has several implications. First, the set of admissible structures becomes smaller, and we can therefore make more precise statements regarding parameters and conditional predictions. Second, using identified set and the definition of $\lambda (Y_{t},\theta )$, a plug-in set estimate of the average
distortion (wedge) in a macroeconomic variable $Y_{t}$ is $\mathbb{E}\lambda(Y_{t},\theta _{I}(Y))$. These estimates of distortions can be used in several ways, which we explore in the next section.  

\section{Inference based on $\lambda(\Theta_{I})$}  
 We first discuss briefly how $\Theta_{I}$ and $\lambda(\Theta_{I})$  can be obtained in practice. Given the quasi-structural framework,  performing inference for $\Theta_{I}$ rather than $\theta_{0}$ seems a natural thing to do. Many of the parameters in dynamic macroeconomic models are semi-structural, and therefore $\theta_{0}$ itself does not have a very specific economic interpretation. We therefore use methods appropriate for constructing consistent estimators for the identified set (IdS) and confidence sets (CS) for the IdS. In the frequentist literature, several methods have been proposed for a general criterion function like subsampling \citet{CHT,ECTA:ECTA998} or the bootstrap for moment inequality models \citet{ECTA:ECTA1025}. Nevertheless, in macroeconomic models parameter dimensions are high and pointwise testing is a vastly inefficient way to construct  confidence sets, so we focus on computationally tractable methods to perform statistical inference using Markov Chain Monte Carlo algorithms (MCMC). 
 
When $\theta_{0}$ is point identified and root-$n$ estimable, \citet{Chernozhukov2003293} have proposed to use simulation methods to do inference on $\theta$ using  the quasi-posterior distribution, which is constructed using the relevant loss function $L_{n}(\theta)$ that defines $\theta_{0}$. Generalized
Method of Moment (GMM) class of estimators can be easily embedded
in this case. Denoting by $\pi(\theta)$ the prior distribution, simulation draws $\left\{\theta^{j}\right\}_{j\leq L}$ are obtained from  \vspace{-0.2 cm} \small \begin{eqnarray*} \Pi_{n}(\theta|\mathbf{Y}) &\equiv& \frac{\exp({TL_{n}(\theta)})}{\int_{\Theta}\exp{(TL_{n}(\theta))}d\pi(\theta)} \end{eqnarray*} \normalsize and upper and lower $100(1-\alpha)/2$ quantiles are used to conduct inference that has valid frequentist properties.  Nevertheless, once point identification fails, one has to consider adjusting the method to accommodate for this. 

More particularly, let $L_{n}(\theta)=n^{\frac{1}{2}}q_{n}(\theta)_{+}'W_{n}n^{\frac{1}{2}}q_{n}(\theta)_{+}$
be the criterion function to be minimized, where $q_{n}(\theta)$
are the moment functions to be used and $q_{n}(\theta)_{+}\equiv\max(q_{n}(\theta),0)=\min(-q_{n}(\theta),0)$. If $L_{n}(\theta)$  is stochastically equicontinuous, that is, there exists a $\Delta_{n}(\theta_{0})$ and $J_{n}(\theta_{0})$
	such that $L_{n}(\theta)$ admits a quadratic expansion \footnote{\begin{eqnarray*}
		L_{n}(\theta) & =L_{n}(\theta_{0})+ & (\theta-\theta_{0})'\Delta_{n}(\theta_{0})-\frac{1}{2}(\theta-\theta_{0})'nJ{}_{n}(\theta_{0})(\theta-\theta_{0})+R_{n}(\theta)
\end{eqnarray*}}  
then this allows us to assume a Central Limit Theorem (CLT) on $\Delta_{n}(\theta_{0})$. Redefining $Q_{n}(\theta)\equiv q_{n}(\theta)-b$
where $b$ is the bias term and $\tilde{L}_{n}(\theta,b)=n^{\frac{1}{2}}Q_{n}(\theta)'W_{n}n^{\frac{1}{2}}Q_{n}(\theta)$, we assume that the following CLT holds \vspace{-0.2 cm}
\begin{eqnarray}
V(\theta_{0},b_{0})^{-\frac{1}{2}}\tilde{\Delta}_{n}(\theta_{0},b_{0}) & \overset{d}{\to} & N(0,I) \label{eq:stoche}
\end{eqnarray} \vspace{-0.2 cm}
Estimating equations arising in DSGE models involve smooth functions. This is particularly true if we focus on determinate equilibria, so that model specification is uniform across the parameter space. Combining them with smooth instrument functions is enough to guarantee stochastic equicontinuity. 

Given \ref{eq:stoche}, \citet{liao2010} show that when using a cutoff rate $\nu_{n}$ such that $ 1\prec\nu_{n} \prec n$, and defining \small $\mathcal{A}_{n}:=\left\{ \max_{\vartheta} ln(\Pi_{n}(\vartheta|\mathbf{Y}))- ln(\Pi_{n}(\theta|\mathbf{Y})) \leq  \nu_{n} \right\}$ \normalsize then $d_{H}(\mathcal{A}_{n},\Theta_{I})\to_{p} 0$ where $d_{H}(.,.)$ is the Hausdorff distance between two sets \footnote{$d_{H}(.,.)=\max\left[\sup_{a\in A}\inf_{b\in B}||b-a||,\sup_{b\in B}\inf_{a\in A}||a-b||\right]$}. Moreover, the rate of convergence of the pseudo-posterior density outside the identified region is exponential. As is evident, in finite samples the identified will depend on $\nu_{n}$, so some robustness checks are required. In addition, since our criterion function vanishes  when $\theta\in\Theta_{I}$, we do not need to adjust $\nu_{n} $ as suggested in \citet{CHT}. 

With regards to constructing confidence sets for the identified set, which is what we do in the empirical application, recent work by \cite{mcmcids} provides a computationally attractive procedure to construct  CS for the IdS and functions of it that have correct coverage from a frequentist perspective. Compared to \citet{Chernozhukov2003293, liao2010}, cutoff values are based on quantiles of draws of the loss function $L_{n}(\theta)$ rather than draws from the quasi-posterior of the parameter vector and there is no maximization involved.  Also, results extend to models with singularities, that is models in which a local quadratic approximation involves a non vanishing singular component, and parameters are not root-$n$ estimable. This is useful in case one wants to derive bounds based on extensions to non-parametric treatments of the survey based bounds.  Although we find the method suggested very appealing, we apply it only in the case of estimating frictions. Proving validity for the testing procedure we will propose in this section is not trivial as it involves combining two independent MCMC chains, and we leave this interesting research for the immediate future. 

We next motivate the proposed test that utilizes $\lambda_{Y}(\Theta_{I})$. When a \textit{complete model} is estimated,  the pseudo true vector $\theta_{0}$ is point identified. Given its value, we can estimate the predicted level of friction in each of the macroeconomic variable,  $\mathbb{E}\lambda_{Y}(\theta_{0})$. Misspecification of this model may imply that the predicted friction does not lie in the identified set of distortions. Therefore, the distance from $\mathbb{E}\lambda_{Y}(\Theta_{I})$ becomes a sufficient statistic to judge whether the suggested model is properly specified. We propose a Wald statistic that tests whether the expected distance from the point
estimate for the wedge from the parametric model to the identified set
of wedges is different than zero for all (or some of) the observables. Since survey data provide more information on the wedge, this increases power to reject non-local alternatives. In addition, survey data regularizes the test, as in the absence of additional data the distribution degenerates and requires non-standard inference. \vspace{-0.2 cm}
\subsubsection{\textbf{ A Test for Parametric Models of Frictions}}
Letting $H_{0}:\theta_{p}\in\Theta_{s}$ and $H_{1}:\theta_{p}\notin\Theta_{s}$, the proposed statistic is as follows:  \vspace{-0.2 cm} \small
\begin{equation*}
\mathcal{W}_{t}=\left( \sqrt{T}\inf_{\lambda _{s}\in \lambda (\hat{\Theta}%
	_{s})}||\mathcal{V}^{-\frac{1}{2}}(\lambda _{s}-\lambda _{p}^{\star
})||\right) ^{2}
\end{equation*} \normalsize
where $\lambda $ is the estimated friction obtained using either the
identified point in the parametric model case, $\lambda _{p}$, or the
identified set in the robust case, $\lambda _{s}$. Individual frictions are
weighted by their respective estimate of standard deviation. The statistic
measures the Euclidean distance between the wedge that arises in the
parametric model, and the set of admissible wedges, adjusted for estimation
uncertainty. 
Under the following conditions, the test is consistent and
has asymptotic power equal to one against fixed alternatives.  \vspace{-0.2 cm}
\begin{defn*}{\textsl{\textbf{ASSUMPTION -R} (Regularity conditions)}}
	Let $D$ and $q$ be $Y-$ measurable functions, continuous in $\theta$ w.p.1
	and $\tilde{q}(.;\theta)\equiv \tilde{q}(.;\theta) - \bar{q}(.;\theta)$ such
	that:
	\begin{enumerate}
		\item For any $\mathbf{Y}$, and any $\theta \in \Theta$, $\sqrt T {\tilde{q}(%
			\mathbf{{Y};\theta)}} \to_d \mathcal{N}(0,\Omega)$
		\item $\sup_{\theta\in\Theta} D_n(\theta) \to_p D$ where $D$ is positive
		definite
	\end{enumerate}
\end{defn*} \vspace{-0.3 cm}
\begin{prop}{Consistency and Power}. Under assumption \textbf{R}:
	\begin{enumerate}
	\item $T\mathcal{W}(\theta_{p},\Theta_{s}) \overset{d}{\to} ||\sum_{j=1..p}\omega_{j}\mathcal{N}(0,I_{p})||^{2}$
    \item  Given critical value $c_\alpha$ with $\alpha\in(0,1)$, 
    	\begin{enumerate}
    	\item Under $H_{0}$: $\underset{T\to\infty}{\lim}\mathbb{P}(T\mathcal{W}(\theta_{p},\Theta_{s})\leq c_{\alpha})= 1-\alpha $
    	\item Under $H_{1}$: $ \underset{T\to\infty}{\lim}\mathbb{P}(T\mathcal{W}(\theta_{p},\Theta_{s})\leq c_{\alpha})=0 $ 
   \end{enumerate}
\end{enumerate}
\end{prop}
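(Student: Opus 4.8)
\emph{Overview.} The plan is to read $\mathcal{W}_{t}=T\,\mathcal{W}(\theta_{p},\Theta_{s})$ as the squared value of a projection program whose data (the point $\lambda_{p}^{\star}$ and the feasible set $\lambda(\hat{\Theta}_{s})$) are smooth, root-$T$ asymptotically linear transformations of the moment estimators, and then to push the CLT of Assumption~\textbf{R.1} through via a delta method for optimal value functions. \emph{Stage~1 (linearization).} Both the complete-model point estimator $\hat{\theta}_{p}$ and the identified-set estimator $\hat{\Theta}_{s}$ are generalized extremum estimators for $L_{n}$; using the quadratic expansion assumed before \eqref{eq:stoche} together with $\sup_{\theta}D_{n}(\theta)\to_{p}D\succ0$ (Assumption~\textbf{R.2}), standard arguments give $\sqrt{T}(\hat{\theta}_{p}-\theta_{p})=O_{p}(1)$ with a Gaussian limit driven by Assumption~\textbf{R.1}, while the Section~7 rate results for the MCMC procedure give $d_{H}(\hat{\Theta}_{s},\Theta_{s})\to_{p}0$ with the active part of $\partial\hat{\Theta}_{s}$ fluctuating at the $\sqrt{T}$ scale. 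Since $\theta\mapsto\lambda(Y_{t},\theta)$ is $Y$-measurable and continuously differentiable on the determinate region, the delta method yields $\sqrt{T}(\lambda_{p}^{\star}-\lambda_{p})\to_{d}\mathcal{N}(0,\Sigma_{p})$ and $\lambda(\hat{\Theta}_{s})\to_{d_{H}}\lambda(\Theta_{s})$ with a $\sqrt{T}$ Gaussian boundary perturbation; as the two devices run on independent MCMC chains these limits are jointly Gaussian with block-diagonal covariance, and $\mathcal{V}\to_{p}V_{0}\succ0$ by Slutsky.

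\emph{Stage~2 (delta method for the projection and the limit law).} Under $H_{0}$, $\theta_{p}\in\Theta_{s}\Rightarrow\lambda_{p}\in\lambda(\Theta_{s})$, so the population projection distance is zero and its minimiser is $\pi_{0}=\lambda_{p}$. Recentering inside the infimum, $\sqrt{T}\inf_{\lambda_{s}\in\lambda(\hat{\Theta}_{s})}\|\mathcal{V}^{-1/2}(\lambda_{s}-\lambda_{p}^{\star})\|=\inf_{u\in\sqrt{T}(\lambda(\hat{\Theta}_{s})-\pi_{0})}\|\mathcal{V}^{-1/2}\big(u-\sqrt{T}(\lambda_{p}^{\star}-\pi_{0})\big)\|$; the blown-up feasible set $\sqrt{T}(\lambda(\hat{\Theta}_{s})-\pi_{0})$ converges, as a set (Painlev\'{e}--Kuratowski/local Hausdorff), to the tangent cone $\mathcal{C}$ of $\lambda(\Theta_{s})$ at $\pi_{0}$ shifted by the Gaussian boundary term, and directional differentiability of the optimal value (Danskin--Shapiro) plus the continuous mapping theorem give $T\mathcal{W}(\theta_{p},\Theta_{s})\to_{d}\inf_{u\in\mathcal{C}}\|\mathcal{V}^{-1/2}(u-G)\|^{2}$, with $G$ the joint Gaussian limit of $\sqrt{T}(\lambda_{p}^{\star}-\pi_{0})$ and of $\partial\big(\sqrt{T}(\lambda(\hat{\Theta}_{s})-\pi_{0})\big)$. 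This is where the survey restrictions enter: by Corollary~7 they collapse $\lambda(\Theta_{s})$ onto a lower-dimensional object, so (generically) $\mathcal{C}$ is a proper linear subspace $\mathcal{L}\subsetneq\mathbb{R}^{p}$, the projection becomes linear, and $T\mathcal{W}\to_{d}\|(I-\Pi)\mathcal{V}^{-1/2}G\|^{2}$ for the relevant projector $\Pi$ onto $\mathcal{V}^{-1/2}\mathcal{L}$ -- a nondegenerate quadratic form in a Gaussian (without surveys $\mathcal{L}=\mathbb{R}^{p}$ and the limit is $0$, the degeneracy the paper flags). Diagonalising the covariance of $(I-\Pi)\mathcal{V}^{-1/2}G$ produces the representation $\big\|\sum_{j=1}^{p}\omega_{j}\mathcal{N}(0,I_{p})\big\|^{2}$ with $\omega_{j}\ge0$ the eigenvalues, which is claim~(1).

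\emph{Stage~3 (coverage and power).} For (2a), the limit in (1) is a finite weighted sum of independent $\chi^{2}_{1}$'s with at least one strictly positive weight, hence has a c.d.f.\ that is continuous and strictly increasing on $(0,\infty)$; letting $c_{\alpha}$ be its $(1-\alpha)$-quantile, continuity of this c.d.f.\ at $c_{\alpha}$ and the convergence in (1) give $\mathbb{P}\big(T\mathcal{W}(\theta_{p},\Theta_{s})\le c_{\alpha}\big)\to1-\alpha$. For (2b), under $H_{1}$ one has $\theta_{p}\notin\Theta_{s}$, which -- by the separation content of the test, i.e.\ distinct admissible frictions are at positive $\mathcal{V}$-distance from $\lambda_{p}$ -- gives $d^{\star}:=\inf_{\lambda_{s}\in\lambda(\Theta_{s})}\|\mathcal{V}^{-1/2}(\lambda_{s}-\lambda_{p})\|>0$; uniform (Hausdorff) convergence of $\lambda(\hat{\Theta}_{s})$ to $\lambda(\Theta_{s})$, $\lambda_{p}^{\star}\to_{p}\lambda_{p}$ and continuity of the projection functional give $\inf_{\lambda_{s}\in\lambda(\hat{\Theta}_{s})}\|\mathcal{V}^{-1/2}(\lambda_{s}-\lambda_{p}^{\star})\|\to_{p}d^{\star}$, so $\mathcal{W}_{t}=T(d^{\star}+o_{p}(1))^{2}\to_{p}\infty$ while $c_{\alpha}$ is fixed, whence $\mathbb{P}\big(T\mathcal{W}(\theta_{p},\Theta_{s})\le c_{\alpha}\big)\to0$.

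\emph{Main obstacle.} The delicate step is Stage~2: justifying the delta method for the value of a projection program whose feasible set $\lambda(\hat{\Theta}_{s})$ is an estimated, generically non-convex set, and whose local geometry (the dimension of the tangent cone $\mathcal{C}$) is precisely what decides whether the limit is degenerate. This requires (i) a tangent-cone/optimal-value differentiability argument valid at a possibly boundary minimiser, (ii) control of the set estimator's sampling fluctuation at the $\sqrt{T}$ rate so that the recentered set converges jointly with $\lambda_{p}^{\star}$, and (iii) an explicit invocation of Corollary~7 to certify $\dim\mathcal{L}<p$, which is what makes the weighted-$\chi^{2}$ limit nondegenerate and the test non-conservative; the independence of the two MCMC chains keeps the covariance of $G$ block-diagonal and is what makes (ii) tractable.
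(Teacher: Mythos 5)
Your route is genuinely different from the paper's. The paper does not set up a projection program over an estimated set with tangent-cone/Danskin machinery; it works directly with the moment functions, writing $\hat f(\hat\theta_l)=A_T(\theta_l)+\gamma_{T,l}$ where $A_T$ is the recentered sample moment (Gaussian at rate $\sqrt T$ by a mean-value expansion plus Assumption \textbf{R.1}--\textbf{R.2}) and $\gamma_{T,l}$ the population mean. Under $H_0$ it observes $\inf_{\theta_s\in\Theta_s}(\gamma_{T,p}-\gamma_{T,s})=0$, so the scaled statistic converges to $\|\mathcal V_d^{-1/2}\mathcal V^{1/2}\mathcal N(0,I_p)\|^2$, a weighted quadratic form in a Gaussian; under $H_1$ the same term is $O(1)$ and bounded away from zero, so $T\mathcal W=\|O_p(1)+O_p(T^{1/2})\|^2=O_p(T)$ and power goes to one. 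Your Stage 3 matches the paper's $H_1$ argument in substance, and your diagonalization at the end of Stage 2 recovers the same weighted-$\chi^2$ representation. What your approach buys is an honest treatment of the randomness of the feasible set and of the location of the minimiser, which the paper's elementary decomposition glosses over (it implicitly evaluates the stochastic difference $A_T(\theta_p)-A_T(\theta_s)$ at a non-trivial infimizer without establishing that the inf and the limit can be interchanged).

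However, your Stage 2 contains two concrete gaps that are not repaired by anything in the paper. First, you need the boundary of $\lambda(\hat\Theta_s)$ to fluctuate at exactly the $\sqrt T$ scale so that $\sqrt T(\lambda(\hat\Theta_s)-\pi_0)$ converges to a shifted tangent cone; the only rate result available is $d_H(\hat\Theta_s,\Theta_s)\to_p 0$ from the MCMC criterion-level set with cutoff $\nu_n$ satisfying $1\prec\nu_n\prec n$, which does not deliver a $\sqrt T$ boundary rate (indeed for generic $\nu_n$ it will not). Second, your nondegeneracy argument rests on reading Corollary~7 as saying that the survey restrictions collapse $\lambda(\Theta_s)$ onto a lower-dimensional set; Corollary~7 only gives $\Theta_I'\subset\Theta_I$ (a strictly smaller set), not a drop in dimension, so the claim that the tangent cone $\mathcal C$ is a proper linear subspace $\mathcal L\subsetneq\mathbb R^p$ is unsupported. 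Since the entire distinction between a nondegenerate weighted-$\chi^2$ limit and the degenerate case hinges on $\dim\mathcal L<p$, this is not a technicality you can defer: as written, your argument establishes claim~(1) only conditionally on an unverified geometric hypothesis about $\lambda(\Theta_s)$ at $\lambda_p$.
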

\begin{proof}
	See Appendix
\end{proof}
\normalsize
In the supplemental material we show that using the non-parametric block bootstrap to compute $c_{\alpha}$ is valid, and we illustrate through a measurement error example that the bootstrap distribution
coincides with the asymptotic distribution. Thorough examination of the performance of this test in small samples is a very interesting avenue of research, which deserves separate treatment. Another equally interesting topic for future research is to investigate which criterion function would deliver robustness of the testing procedure to possible misspecification of the benchmark model, which we have assumed away. See \citet{ECTJ:ECTJ332} for the issue of misspecification in moment inequality models. 
\vspace{-0.2 cm}

Finally, we briefly discuss the \textit{refutability} of the candidate model, in the sense of \cite{Breusch}. The Null hypothesis, tests whether the point identifying restrictions on $\lambda_{Y}$ implied by the candidate model are satisfied on $\lambda_{Y,s}$, where the latter is only set identified. Since $(\lambda_{Y,s},\lambda_{Y,p})$ is constructed using $\theta_{1}$  as identified using the frictionless and complete model respectively (See Proposition 1 for the definition of $\theta_{1}$), $\lambda_{Y,p}$ is a function of the additional parameters indexing the CM, $\theta_{2}$. In the absence of cross parameter (equation) restrictions on $\Theta_{1}\times\Theta_{2}$, $\lambda_{Y,p}$ would necessarily lie within $\lambda_{Y,s}$ as the latter would be consistent with any $\theta_{2}\in\Theta_{2}$. However, in the presence of cross equation restrictions, $(\theta_{1},\theta_{2})$ lie in a strict subset of $\Theta_{1}\times\Theta_{2}$. Thus, the completion $\lambda_{Y,p}$ may not necessarily lie within the family of completions, $\lambda_{Y,s}$. Equivalently, $\exists \lambda\in\lambda_{s}$ that is not observationally equivalent to $\lambda_{p}$.  In this sense, the candidate model is \textit{refutable} only when it imposes restrictions on $\lambda_{y}:=U_{y}(\theta_1,\theta_2)$ for any observable $Y$, which implies restrictions on the reduced form. 
This is indeed true in the context of incomplete general equilibrium models; taking the unconditional expectation in condition \eqref{eq:repres} of  Proposition 1, implies that ($\Theta_{1},\Theta_{2}$) is not variation free and therefore $U_{y}(\theta_1,\theta_2)$ is restricted. With additional information, the set on which $H_{0}$ and $H_{1}$ are defined is reduced, and therefore the set of alternatives is narrower.

\section{Application: Estimating Frictions in the Spanish Economy}
We apply the methodology to the case of Spain, where financial frictions have arguably played a significant role during the last decade. The benchmark economy we use features \textit{some }frictions and since it
is standard, we will directly introduce the log-linearized conditions. We
consider a small open economy with capital accumulation, along the lines
of \citet{10.1257/aer.97.3.586} and \citet{Gali01072005}. There are
households, intermediate good firms, final good firms, government
expenditure, and a foreign sector which is composed by infinitesimal
symmetric economies.

The type of frictions we allow in the baseline model are those we do not have
sufficiently informative survey data to implement our methodology. Thus, we
keep the parametric Calvo type of friction in the wage setting by labor unions and
in the price setting behavior of firms. All other frictions are
going to be semi-parametrically characterized. We thus remove capital
adjustment costs, and therefore Tobin's q becomes constant. This implies that
the arbitrage condition between capital and bonds has no dynamics in the benchmark model.

Let $X_{1}^{o}$ denote the vector of variables that enter the moment
equalities, $X_{2}^{o}$ the vector of variables used in the moment
inequalities and $Z$ the vector of instruments.   Model predictions are denoted with superscript '$m$'. The conditions we use are therefore:
\begin{eqnarray*}
\mathbb{E}((X_{1,t}^{o}-X_{1,t}^{m})\otimes Z_{t}) &=&0 \\
\mathbb{E}((X_{2,t}^{o}-X_{2,t}^{m})\otimes Z_{t}) &\leq &0
\end{eqnarray*}
The
variables employed in estimation are Non Government Consumption expenditure
(C), Hours (H), Inflation ($\pi $), Investment (I), Gross Domestic Product
(Y), Wages (W), and the EONIA rate (R). Real variables are in per capita terms.W use lagged
values of output and consumption as instruments, as both positively depend on the capital stock which is unobserved. We estimate the model using
two different subsets of survey data from Spain, collected by the European
Commission. Our sample period covers $1999Q_{1}$ to $2013Q_{4}$. Detailed
information on this survey data can be found at: %
\url{http://ec.europa.eu/economy_finance/db_indicators/surveys/index_en.htm}

In the set of survey responses, we include responses to quarterly questions
1,2 and 11 from the Consumer Survey, that relate to the financial position
of the household. We plot the time series in Appendix B. Credit constraints imply
negative distortions to household consumption, and given that hours worked are complementary to consumption,
they also imply negative distortions to labor supply and output.  In addition, we use business survey data, in particular questions 8F4
and 8F6, relating to capital adjustment costs and financial constraints to
production capacity. For the case of capital adjustment costs, we
have restrictions on $Y$ and $I$ similar to those of the general example in Section 3. Financial
constraints to productive capacity imply similar restrictions and lead to
lower aggregate investment and output. We therefore choose $X_{1,t}^{o}\equiv (W,\pi ,R)$ and $%
X_{2,t}^{o}\equiv (Y,I,H,C)$. We estimate the model using the 2-block RW-MCMC with uniform priors for all parameters\footnote{We keep the last 200000-300000 draws for inference.}. Since the survey based moment restrictions require estimating the nuisance parameters $\lambda^{1}$ (see Proposition 6), for a given draw of $\Theta_{1}$, we obtain $\hat{\lambda}^{1}$ by ordinary least squares. This approximation can, in theory, weaken the informativeness of the additional restrictions but we do not observe this in our results. 

\subsection{\textbf{Details of the exercise}} We obtain the $95\%$ confidence sets of wedges,
that is, a range for the standardized wedge in each observable that is consistent with
both macroeconomic and survey data, $\lambda_{y}\equiv (\mathbb{E}\mathbb{V}_{y|\hat{x}}\mathbb{E}\mathbb{V}_{\hat{x}})^{-1}\mathbb{E}\hat{X}_{t,t-1}(Y_{t}-C(\Theta_{I}))$, which under linearity is equivalent to $(\mathbb{E}\mathbb{V}_{y|\hat{x}})^{-1}\lambda_{y}$\footnote{We actually compute $\lambda_{t}$ using quadratic programming as in the second section of the online Appendix and then take the average. The result is equivalent to a standardized wedge.}.  These estimates are the empirically relevant wedges that any model featuring financial frictions and adjustment costs should produce over the whole sample. We plot them (in red) in the left panel of Figures 1-6, in Appendix A.  

We employ a complete model (\textbf{CM}) featuring \citet{Bernanke_Gertler} (BGG) type of frictions (similar to \cite{DEGRAEVE20083415}) and estimate it using the same macroeconomic aggregates as the \text{IM} using full information MCMC, where we also rely on \cite{mcmcids} to obtain $\Theta^{cs}_{CM}$.

 Despite that the model features financial frictions on the firm side (the "entrepreneurial sector"),  and not borrowing constrained consumers, its implications can still be tested against the incomplete model (\textbf{IM}), which is robust to both types of frictions.   

We plot (in blue) the corresponding confidence sets for the wedges that are consistent with the complete model. To facilitate the comparison, we also produce an alternative comparison based on QQ plots, which gives information on the support of each of these estimates. Note that in some of the cases the support is completely different. 

In Table 1 (Appendix A), we display the corresponding confidence sets for $\Theta$ using survey data. Confidence sets for $\Theta_{IM}$ without using survey data and the corresponding wedges can be found in Appendix B.

A key observation is the fact that the sign of the wedge confidence sets that correspond to the restrictions imposed to identify the IM are in accordance with the sign of the distortions identified with the CM. While both the CM and IM identify negative distortions to output, the former identifies a significantly lower level of distortions. As is typical, empirical versions of accelerator type of models ignore the output costs (bankruptcy costs) as they are deemed to be numerically unimportant. Some of the neglected variation is indeed captured by the high estimates of $\phi_{p,CM}$ relative to $\phi_{p,IM}$. 

Regarding inflation, using the CM we cannot reject zero frictions to $\pi_t$, while the IM identifies significantly negative distortions. Moreover, the CM identifies a much higher slope of the Phillips curve\footnote{\small $\kappa:=\frac{(1-\iota_{p}\xi_{p})(1-\xi_{p})}{(1+\beta\iota_{p})(\xi_{p}(1+(\phi_{c}-1)\epsilon_{p}))}$ where $\epsilon_{p}$ (Kimball aggregator) is calibrated to 10. \normalsize}, which is consistent with its "inability" to generate negative distortions to inflation.

With regard to the nominal interest rate, both the CM and the IM identify negative distortions (with no evidence of different magnitudes). Mechanically, this is due to the Taylor rule that pins down $r_t$. Nevertheless, there is also a deeper insight which comes from steady state reasoning. In an economy with uninsured idiosyncratic risk on both firms and consumers, the steady state interest rate is lower than in the case of complete markets, and as shown by \cite{ANGELETOS20071}, this does not contradict negative distortions to capital accumulation due to the presence of a risk premium.  

Furthermore, we cannot reject the hypothesis that the CM is consistent with zero distortions to $c_t$, while the IM identifies negative distortions. Moreover, the estimates of risk aversion are much lower in the IM compared to the CM. One of the reasons is the fact that the CM ignores credit constraints to consumers, and therefore requires much higher estimates in risk aversion to match the corresponding negative distortions to consumption. 

With regard to hours worked, both the IM and the CM identify  negative distortions which are statistically different. Since  $\sigma_{c,IM}$ is much smaller than $\sigma_{c,CM}$, negative distortions to consumption are consistent with less distortions to hours in the IM. Moreover, the CM identifies a slightly higher (inverse) Frisch elasticity ($\sigma_{l,CM}$) and does not reject zero distortions to $w_t$, while the IM identifies positive distortions. The latter observation is not robust to excluding the survey data as shown in Table 2.

Finally, the IM and CM identify the same negative distortions to investment. Since no marginal $\Theta_{cs}$ has an extremely narrow support, this result can be possibly explained by the mapping from parameters to investment that is nearly not injective. This is not a problem per se, as the MCMC method we have used to obtain confidence sets in both the IM and the CM is robust to lack of identification.

The non overlapping standardized output, inflation and labour wedges provide enough evidence to reject the hypothesis that the SW-BGG model implies similar distortions to those identified by the IM. This finding also corroborates existing evidence of the lack of uniform fit for such a model over the entire business cycle ( \cite{DELNEGRO2016391} for the case of the US).

\section{Conclusion}

In this paper we propose a new inferential methodology which is robust to
misspecification of the mechanism generating frictions in dynamic
stochastic economies. We characterize wedges in equilibrium decision rules in a
way which is consistent with a variety of generating mechanisms. We use the implied restrictions to partially identify the
parameters of the benchmark model, and to obtain a set of admissible economic
relationships. Moreover, we formally characterize set identification given the number of observables. Regarding the latter, beyond macroeconomic data, we show how qualitative survey data can provide additional distributional information, which is crucial in economies with ex post heterogeneous and frictions. Survey data provide a sufficient statistic for this information that would otherwise have to be generated by a model. We show how to exploit this additional information to test parametrically specified
models of frictions. We apply our methodology to estimate the distortions in the Spanish economy
due to financial frictions and adjustment costs using an small open economy version of %
\citet{10.1257/aer.97.3.586} model and qualitative survey data collected by
the European Commission. We investigate the adequacy of \citet{Bernanke_Gertler} type of financial frictions. 

In general, our work shows that adopting a robust approach to inference and
using the information present in surveys is a fruitful way of dealing with
lack of knowledge about the exact mechanisms generating frictions. 

Our work is limited to dynamic \textit{linear} economies and has focused on deviations of the representative agent
approximation from the underlying heterogeneous agent economy. Possible extensions could study deviations based on a benchmark heterogeneous model. \citet{buera_moll} have recently shown that the distortions generated in an
aggregate equilibrium condition can depend on the type of heterogeneity
present in the economy. Our approach is robust to this criticism. The
methodology does not rest on \textit{observing} residuals from
representative agent frictionless economies - we just impose moment
inequality restrictions theoretically motivated by deviations from the
frictionless economy. In addition, if heterogeneity has additional
implications for the form of these restrictions, they can be taken on board.
In addition, we impose weak moment inequalities. This is important when
heterogeneous distortions in decision rules cancel out. Finally, we impose
restrictions implied by $\mu _{t}$ on all of the variables and thus take
general equilibrium effects into account. Nevertheless, future work could focus on investigating the
robustness of our methodology in environments where some heterogeneity is
ignored when imposing the identifying restrictions.

Finally, we have assumed that aggregated survey data are not systematically biased and strategy proof. Analysing the consequence for inference would be another fruitful avenue of research.

\newpage
\section{Appendix A}
\begin{proof}
\textbf{of Proposition 1}.

Recall the representation for the model with frictions, that is, 
\begin{eqnarray*}
G(\theta_1,0)X_{t}&=&F(\theta_1,0)\mathbb{E}_{t}(X_{t+1}|X_{t})+L(\theta_1,0)Z_{t}+\tilde{\mu _{t}}
\end{eqnarray*}
 Plugging in the candidate distorted decision rule: $X_{t}^{\ast }=X_{t}^{f,RE}+\tilde{\lambda}_{t}$ and using that $\underset{n_{x}\times n_{x}}{F(\theta_1,0)}\underset{n_{x}\times n_{x}}{P^{\ast }(\theta_1,0)}+\underset{n_{x}\times n_{x}}{G^{\ast }(\theta_1,0)}=0$ and $(\underset{n_{z}\times n_{z}}{R(\theta_1,0)^{T}}\otimes F(\theta_1,0) + I_z\otimes(F(\theta_1,0)P^{\ast }(\theta_1,0)+G^{\ast }(\theta_1,0)))vec(Q(\theta_1,0))=-vec(L(\theta_1,0))$
we have the following condition:
\begin{eqnarray*}
\mathbb{E}_{t}(F(\theta_1,0)\lambda_{t+1}-G(\theta_1,0)\lambda_{t} + \tilde{\mu _{t}})&=&0
\end{eqnarray*}
Note that in this proposition we let the econometician's model variables (observables $Y_t$ and unobservables $Z_t$ coincide with a proper subset of $X_t$ and $Z_t$. That is, setting $\theta_2=0$ essentially eliminates some of the elements of $(X_t,Z_t)$. Furthermore, in the proposition we state that ${\lambda}_{t}={\lambda}_{t-1}\Gamma +\nu_{t} $ for some real-valued  $\Gamma\neq \mathbf{0}$. Substituting for $\lambda_t$ we get the condition stated in Proposition 1, that is:
\begin{eqnarray}
\mathbb{E}_{t}(F(\theta_1,0 )\Gamma-G(\theta_1,0 )){\lambda}_{t}+\mu_{t})&=&0 
\end{eqnarray}
To motivate the assumption on the random variable $\lambda_t$ notice that, the condition above essentially links the conditional mean of $\lambda_t$ with that of $\mu_t$. Since $\mu_t$ is by construction a linear function of $X_t$, then $\mu_{t}$ is correlated with subsets of the information set of the agent, $\sigma(\mathcal{F}_{t})$. Then, there exists a projection operator $\mathbf{P}$ such that for any element $H{}_{t}\in\mathcal{F}_{t}$, $\mu_{t}=\mathbf{P}H_{t}+\mathbf{P}_{\perp}H_{t}=\mathbf{P}H_{t}+v_{t}$.
Projecting $\mu_{t+1}$ on $\mu_{t}$:
\small
\begin{eqnarray*}
\mathbf{P}(\mu_{t+1}|\mu_{t}) & = & \mathbf{P}(\mathbf{P}H_{t+1}+v_{t+1}|\mathbf{P}H_{t}+v_{t})\\
 & = & \mathbf{P}(\mathbf{P}H_{t+1}+v_{t+1}|\mathbf{P}H_{t})+\mathbf{P}(\mathbf{P}H_{t+1}+v_{t+1}|v_{t})\\
 & = & \mathbf{P}(H_{t+1}|H_{t})
\end{eqnarray*}
\normalsize

This implies that, there exists
a real valued matrix $\tilde{\Gamma}$ such that $\mu_{t+1}=\tilde{\Gamma}\mu_{t}+u_{t+1}$.
Substituting in the condition 9.1 and collecting all the errors in $w_{t+1}$ (we can do this as they are not uniquely defined) we have that:
\begin{eqnarray*}
(F(\theta_1,0 )\Gamma-G(\theta_1,0 )){\lambda}_{t+1}&=& \tilde{\Gamma}(F(\theta_1,0 )\Gamma-G(\theta_1,0 )){\lambda}_{t} +w_{t+1} 
\end{eqnarray*}
 Denoting by $\tilde{C}$ the generalized inverse of $C:=F(\theta_1,0 )\Gamma-G(\theta_1,0 )$ we have that ${\lambda}_{t+1}= \tilde{\Gamma}\lambda_t+C\tilde{C}w_{t+1}$. Comparing with the proposed representation for $\lambda_t$ we have that $\Gamma = \tilde{\Gamma}$ and a non uniquely defined $\nu_{t}$ that nevertheless satisfies $\mathbb{E}(\nu_t)=0$

\textbf{Note}: The assumption that the system can be casted in the expectational
form to which we apply the method of undetermined coefficients is with no loss of generality. More elaborate methods like a canonical or Schur decomposition,
can be used to obtain forward and backward solutions, and the existence of
 an incomplete rule would require arguments similar to the non zero determinant
stability conditions for $\Gamma(\theta)$ and $\mu_{t}$. 
\end{proof}

\begin{proof}
\textbf{of Lemma 3}\\
Given $\mathcal{V}_{i}(\mathbf{Y}_{t-1}) $,  the condition in 5.1 can be rewritten as $\mathbb{E}(\tilde{Y}_{t}^{o}  -{C(\theta)}\hat{X}_{t|t-1})\phi(\mathbf{Y}_{t-1})=0$
where $\tilde{Y}_{t}^{o}\equiv Y_{t}^{o}- \mathcal{V}(\mathbf{Y}_{t-1})$. Given \textbf{LCI}, the Proposition 2-NS in \citet{ECTA1171} can be applied. Moreover, local identification holds for generic $i$, and therefore holds for any $\theta_0$ in $\Theta_I$ as \textbf{LCI} guarantees a unique map from $\mathcal{V}_{i}$ to $\theta_0$.
\end{proof}
\begin{proof}
	\textbf{of Proposition 4 ($n_{y}=r$)}\\
\begin{enumerate} 
	\item For every observable $Y_{t}$ we can construct $n_{\theta}$ moment
	conditions.	Then,  $r$ is the minimum number of observables such
	that $J(\theta)\equiv\frac{\partial}{\partial\theta}\Lambda(\theta)$, is of full column rank. 	
	Thus, for $\mathcal{H}\equiv J_{r\times n_{\theta}}(\theta){}^{T}$, $Z_{t}\in \mathbf{Y_{t-1}}$ and 
	$m(Z_{t},\theta)\equiv\mathcal{H}\otimes\phi(Z_{t})$, the first order condition becomes $\mathbb{E}(m(Z_{t},\theta)-V(Z_{t})\phi(Z_{t})) = 0$ or more compactly $\mathbb{E}m(Z_{t},\theta) \equiv\mathcal{G}(\theta) = \underset{r\times 1}{U}\in  (-\infty,\bar{U}]$.
   Given Lemma 4, $\mathcal{G}(\theta)$ is well behaved, and since $n_{y}=r$, it is invertible in $U$, which guarantees a unique solution $\theta^{*}(U)$ for all admissible $U$. Thus, $\Theta_{I}=\Theta \cap \theta^{*}(U)$.
	\item Given that $\Theta_{I}=\Theta \cap \mathcal{G}^{-1}(U)$, sharpness of $\Theta_{I}$ is determined by the set of admissible structures $S$ such that $\forall s\in S$, $F^{s}_{(Y_{t},Y_{t-1},Z_{t})}: U \equiv \int U_{s,t} dF^{s}\in (-\infty,0]$.
\begin{enumerate} 
	\item Existence  of joint distribution for some $s$ : 
 With no frictions, that is, for $\lambda_{s}=0,\mathbb{P}_{Y_{t-1}} a.s.$, a joint distribution for $(Y_{t},Y_{t-1},Z_{t})$ exists by construction (parametric benchmark model) and thus $F_{(Y_{t},Y_{t-1},Z_{t})}$ is equal to $F_{(Y_{t}|Y_{t-1},Z_{t})}F_{Y_{t-1}}F_{Z_{t}}$. Given additive separability of $\lambda_{s}(Y_{t-1},Z_{t})$, $F^{s}_{(Y_{t},Y_{t-1},Z_{t})}$ also exists as $F^{s}_{(Y_{t},Y_{t-1},Z_{t})}=F_{(Y_{t}|Y_{t-1},Z_{t})}(v-\lambda_{s}(Y_{t-1},Z_{t}))F_{Y_{t-1}}F_{Z_{t}}$. 
 \item The set of admissible joint distributions: Since a joint distribution exists, the set of admissible structures are then determined by:   $U_{s,t}=\lambda_{s}(Y_{t-1},Z_{t})+CZ_{t}$ for any constant matrix $C$. Under Correct specification, $U\neq0$ iff there are frictions in the economy. This is true for the set of admissible densities $\mathbf{F^{s,C}}\equiv\left\{ F^{s}_{(Y_{t},Y_{t-1},Z_{t})}(v):\int U_{s,t} dF^{s}< 0\right\} \supset\mathbf{F^{s,\lambda}}\equiv\left\{F^{s}_{(Y_{t},Y_{t-1},Z_{t})}:\int \lambda_{s}(Y_{t-1},Z_{t}) dF^{s}<0\right\}$. But,$U=0, \forall F^{s}_{(Y_{t},Y_{t-1},Z_{t})}\in \mathbf{F^{s,C}} \setminus \mathbf{F^{s,\lambda}}$. The set of admissible $U$ is the smallest set compatible with $\lambda_{s}\neq 0$ and thus $\Theta_{I}$ is sharp. 
\end{enumerate}
\end{enumerate}
\end{proof}

\begin{proof}
\textbf{of Proposition 5} 

This is a generalization of the proof in Proposition 5. We denote the moment conditions using $n_{y}$ macroeconomic variables by $q_{1}(\theta,\mathbf{Y}_{t-1})$ and moment conditions using $m-n_{y}$ survey variables by $q_{2}(\theta,\mathbf{Y}_{t-1})$:
 \small
\begin{eqnarray*}
\underset{n_{y}\times1}{q_{1}(\theta,\mathbf{Y}_{t-1})} & = & \mathbb{\mathcal{V}}_{1}(\mathbf{Y}_{t-1})\in[\underbar{\ensuremath{\mathcal{V}}}(\mathbf{Y}_{t})_{1},\bar{\mathcal{V}}(\mathbf{Y}_{t})_{1}]\\
\underset{(n_{y}-k)\times1}{q_{2}(\theta,\mathbf{Y}_{t-1})} & = & \mathbb{\mathcal{V}}_{2}(\mathbf{Y}_{t-1})\in[\underbar{\ensuremath{\mathcal{V}}}(\mathbf{Y}_{t})_{2},\bar{\mathcal{V}}(\mathbf{Y}_{t})_{2}]
\end{eqnarray*}
which again imply the following unconditional moment restrictions:
\begin{eqnarray*}
\mathbb{E}(\phi(Z_{t})(q_{1}(\theta,\mathbf{Y}_{t-1})-\mathbb{\mathcal{V}}_{1}(\mathbf{Y}_{t-1})) & = & 0\\
\mathbb{E}(\phi(Z_{t})(q_{2}(\theta,\mathbf{Y}_{t-1})-\mathbb{\mathcal{V}}_{2}(\mathbf{Y}_{t-1})) & = & 0
\end{eqnarray*}
\normalsize
Recall that  $r$ is the minimum number of observables such
that $J(\theta)\equiv\frac{\partial}{\partial\theta}\Lambda(\theta)$, is of full column rank. Rewrite the moment conditions
such that the first $n_{\theta}$ rows satisfy
the rank condition. Given the total number of moment conditions used,
$m$, the rest of the system has $m-n_{\theta}$ equations. We partition
$J(\theta)\equiv(J_{r}(\theta),J_{n_{y}-r}(\theta))$ and let $\mathcal{H}$
be an $m\times n_{\theta}$matrix where $\mathcal{H}\equiv(J_{r\times n_{\theta}}(\theta){}^{T},J_{(n_{y}-r)\times n_{\theta}}(\theta){}^{T},\Delta_{\dim q_{2}\times n_{\theta}}(\theta){}^{T})^{T}$.
Let $m\equiv\mathcal{H}\otimes\phi(Z_{t})$ and partition $m\equiv(m_{\alpha}^{T},m_{\beta}^{T})^{T}$
where $m_{\alpha}$ contains the first $r$ elements. . Since $m>n_{\theta}$,
and given a general weighting matrix $W$, we have the following first
order condition: 
\begin{eqnarray*}
\mathbb{E}(W_{\alpha}^{\frac{1}{2}T}m_{\alpha}+W_{\beta}^{\frac{1}{2}T}m_{\beta}-V(Z_{t})\phi(Z_{t})) & = & 0\\
\mathbb{E}(W_{\alpha}^{\frac{1}{2}T}m_{\alpha}+W_{\beta}^{\frac{1}{2}T}m_{\beta}-\underset{n_{\theta\times1}}{U}) & = & 0
\end{eqnarray*} 
This is a projection of $m$ on a lower dimensional subspace. Since
$W$ is an arbitrary matrix, and $(m_{\alpha},m_{\beta})$ are possibly
correlated, we reproject the sum onto the space spanned by $W_{\alpha}^{\frac{1}{2}T}\mathbf{m}_{\alpha}$.
Define $Q_{\alpha}:=W_{\alpha}^{\frac{1}{2}T}\mathbf{m}_{\alpha}(\mathbf{m}_{\alpha}^{T}W_{\alpha}^{T}\mathbf{m}_{\alpha})^{-1}\mathbf{m}_{\alpha}^{T}W_{\alpha}^{\frac{1}{2}}$, the
projection, and $Q_{\alpha}^{\perp}$ the orthogonal projection.
Since the original sum satisfies the moment condition, then the two
orthogonal complements will also satisfy it: Therefore, 

\begin{eqnarray*}	
Q_{\alpha}\left(W_{\alpha}^{\frac{1}{2}T}\mathbf{m}_{\alpha}+W_{\beta}^{\frac{1}{2}T}\mathbf{m}_{\beta}-U\right) & = & W_{\alpha}^{\frac{1}{2}T}\mathbf{m}_{\alpha}+Q_{\alpha}\left(W_{\beta}^{\frac{1}{2}T}\mathbf{m}_{\beta}-U\right)=0\\
Q_{\alpha}^{\bot}\left(W_{\alpha}^{\frac{1}{2}T}\mathbf{m}_{\alpha}+W_{\beta}^{\frac{1}{2}T}\mathbf{m}_{\beta}-U\right) & = & Q_{\alpha}^{\bot}\left(W_{\beta}^{\frac{1}{2}T}\mathbf{m}_{\beta}-U\right)=0
\end{eqnarray*}
where $U\in[W_{\alpha}^{\frac{1}{2}T}\underbar{\ensuremath{\mathcal{V}}}(\mathbf{Y}_{t})_{\alpha}+W_{\beta}^{\frac{1}{2}T}
\underbar{\ensuremath{\mathcal{V}}}(\mathbf{Y}_{t})_{\beta},\quad W_{\alpha}^{\frac{1}{2}T}\bar{\mathcal{V}}(\mathbf{Y}_{t})_{\alpha}
+W_{\beta}^{\frac{1}{2}T}\bar{\mathcal{V}}(\mathbf{Y}_{t})_{\beta}]\otimes\phi(\mathbf{Y}_t)$.

As in Proposition 4, the first
set of restrictions identifies a one to one mapping from $U$ to $\Theta_{I}$, and therefore $\theta^{*}(U)\equiv \mathcal{G}^{-1}(U)$. Plugging this in the second set of restrictions eliminates dependence on  $m_{\alpha}$ and imposes further restrictions on the domain of variation of $U$. Thus 
\begin{eqnarray*}
	Q_{\alpha}^{\bot}\left(W_{\beta}^{\frac{1}{2}T}\mathbf{m}_{\beta}(U)-U\right) & = & 0 
\end{eqnarray*}
The admissible set for $U$ is now $\left\{U \in (-\inf,\bar{U}): Q_{\alpha}^{\bot}\left(W_{\beta}^{\frac{1}{2}T}\mathbf{m}_{\beta}(U)-U\right)=0\right\}$. Therefore,  $\exists \theta\in \theta(U): \theta\notin \Theta_{I}(U')$ and consequently $\Theta_{I}'\subset \Theta_{I}$. 
\subparagraph{The case of non-linear moment conditions:}The above proof carries through if we replace the linear projection with conditional expectations. In this case, $Q_{\alpha}$ is the conditional expectation operator, which also implies that any integrable moment function $m$ can be decomposed as $m=Q_{\alpha}m+Q^{\bot}_{\alpha}m$ such that $Q_{\alpha}(m-Q_{\alpha}m)=0$. This is important for the corollary regarding survey based conditions as conditional expectations are no longer linear.
\end{proof}
\begin{proof}
	\textbf{of Proposition 6}:  Recall that agent $i$ has the
	following behavioral equation $G(\theta_1,\theta_2)x_{i,t} = F(\theta_1,\theta_2)\mathcal{E}_{t,i}(x_{i,t+1}|\mathcal{F}_{t-1,i})+L(\theta_1,\theta_2)z_{i,t}$, which can be rewritten as :
	\begin{eqnarray}
	G(\theta_1,0)x_{i,t} & = & F(\theta_1,0)\mathbb{E}_{t}(x_{i,t+1}|\mathcal{F}_{t-1})+L(\theta_1,0)z_{i,t} +\mu_{i,t}
	\end{eqnarray}
	Before formally deriving the bounds, we need to establish some facts which will be used in the derivations: 
	\begin{enumerate}
		\item \textbf{$\mu_t$  is a continuous function of the state variables}\\
		The model with frictions reduces to the frictionless model by either by setting $\theta_2=0$ or by eliminating differences in subjective expectations from rational expectations, $\mathcal{E}_{i,t}(X_{t}|\mathcal{F}_{i,t}),\mathbb{E}(X_{t}|\mathcal{F}_{t,i}\forall i) $. We show that the latter is equivalent to a change in a component of the model. Since, $\mathcal{P}(.|\mathcal{F}_{t,i})$ is absolutely continuous to the Rational Expectations measure $\mathbb{P}(.|\mathcal{F}_{t,i}\forall i)$, there exists a Radon Nikodym derivative $\mathcal{M}_{t,i}:=\frac{d\mathcal{P}_{i,t}}{d\mathbb{P}_t}$ such that $\mathcal{P}_{i,t}(\chi)=\underset{\chi}{\int}\mathcal{M}_{t,i}\mathbb{P}_{t}$ for $\chi \subset X$. For $\mathcal{M}_{t,i}=1$ for all $t,i$ we get the frictionless model. Then given $\theta_2$ and $\mathcal{M}_{t,i}$, $\mu_t$ is a continuous w.r.t state variables.
		\item \textbf{Any probability statement about $\mu_t$ translates to a probability statement on $\lambda_t$.}\\
		Since $x_{i,t}^{\star}$ solves the behavioral functional equation of the agent uniquely, there is a map $h:(G,F,L)\to (P,Q)$ which is a continuous bijection, and by the implicit function theorem, any perturbation to the first order conditions (change in $(G,F,L)$) maps to perturbations of the solution, ($P,Q$). Therefore,  for every univariate decision variable, $\mathbb{P}(\mu_{i,t}\in  [\underline{\mu}_{i},\bar{\mu}_{i}])=\mathbb{P}(h(\lambda_{i,t})\in  [\underline{\mu}_{i},\bar{\mu}_{i}]) = \mathbb{P}(\lambda_{i,t}\in  [\underline{\lambda}_{i},\bar{\lambda}_{i}]) $. Given Proposition 1, same statement holds also for $\mathbb{E}_{t}\lambda_{i,t}$ and $\mathbb{E}_{t}\mu_{i,t}$ .
		\item\textbf{Any probability statement on the subjective conditional expectations translates to a probability statement on $\mu_{i,t}$}\\
		Recall that the enlarged state vector contains also past states, $X_t\equiv(X_t,\tilde{X}_t)$ where $\tilde{X}_t=X_{t-1}$. Observing $\mathcal{E}_{i,t}(x_{i,t+1})\in B_l$ therefore implies observing either an expectation about the future or the present. Given the behavioral equation of the agent, $\mathcal{E}_{i,t}(x_{i,t+1})$ maps deterministically to $x_{i,t}$ and therefore $\mu_{i,t}$. 
	\end{enumerate}
	Given the above, we consider the expected value of the statistic $\mathcal{\hat{B}}_{k,t}$. 
	Given the (joint) measure $\mathbb{P}=\mathbb{P}(t)\times\Lambda(i)$, taking the expectation we have that 
	
	\begin{eqnarray*}
		\mathbb{E}_{t}\hat{\mathcal{B}}_{k,t} & = & \sum_{i\leq N}w_{i}\int\int1(\mathcal{E}_{i,t}(x_{i,t+1}|\mathcal{F}_{t-1,i})\in B)d(P(\mathcal{F}_t|\mathcal{F}_{t-1})\times \Lambda_{t}(i))\\
		& = & \sum_{i\leq N}w_{i}\bar{\mathbb{P}_{t}}(\mathcal{E}_{i,t}(x_{i,t+1}|\mathcal{F}_{t-1,i})\in B)\\
		& = &\bar{\mathbb{P}_{t}}(\mathcal{E}_{i,t}(x_{i,t+1}|\mathcal{F}_{t-1,i})\in B)\\
		& = &\bar{\mathbb{P}_{t}}(\mathbb{E}_t\mu_{i,t}\in [\mathbb{E}_t\underline{\mu}_{i,t},\mathbb{E}\bar{\mu}_{i,t}])\\
		& = &\bar{\mathbb{P}_{t}}(\mathbb{E}_t\lambda_{i,t}\in [\mathbb{E}_t\underline{\lambda}_{i,t},\mathbb{E}\bar{\lambda}_{i,t}])
	\end{eqnarray*}

	In the second to last equality, we use fact \textbf{3} and in the last equality we have used fact \textbf{2} adapted to conditional means. We next derive the main result. 
	
	Let $Y_{t}^{o}=X_{t}$, $Y_{t}^{m}=P(\theta_{1},0)X_{t-1}$. Denote by $(\mathcal{X}^{\star}_{t},\mathcal{Z}^{\star}_{t})$ the time dependent subset of the support such that $\lambda_{i,t}<0 ,\forall (x_{i,t-1},z_{i,t})\in (\mathcal{X}^{\star}_{t},\mathcal{Z}^{\star}_{t})$ and by $(\mathcal{X}^{\star,C}_{t},\mathcal{Z}^{\star,C}_{t})$ their complements in $(\mathcal{X},\mathcal{Z})$.
	Given that $\lambda^{1,2}_{i}$ can possibly differ across $i$, let\footnote{The typical example in this case would be economies with ex post heterogeneity i.e. across net asset positions.} 
	\begin{eqnarray}
	\lambda^{j}_{i,t}&\equiv& 
	\begin{cases}
	\lambda^{1}x_{i,t-1}\text{ or }\lambda^{2}z_{i,t} &\text{    if }(x_{i,t-1},z_{i,t})\in (\mathcal{X}^{\star}_{t},\mathcal{Z}^{\star}_{t})\\
	0 & \text{otherwise}
	\end{cases}
	\end{eqnarray} 	
	\small
	\begin{eqnarray*}
		Y_{t}^{o}	&=& \int \lambda^{1}_{t}d\Lambda_{t}(i)+\int\lambda^{2}_{t}d\Lambda_{t}(i)+Q(\theta_1,0)Z_{t}+Y_{t}^{m} \\
		& = & \lambda^{1}\int_{\mathcal{X}^{\star}_{t}} x_{i,t-1}d\Lambda_{t}(i)+\lambda^{2}\int_{\mathcal{Z}^{\star}_{t}} z_{i,t}d\Lambda_{t}(i)+Q(\theta_1,0)Z_{t}+Y_{t}^{m} 
		\end{eqnarray*}
	Pre-multiplying and post dividing every element of $\int_{\mathcal{X}^{\star}_{t}}x_{i,t-1}d\Lambda_{t}(i)$ and ${\int_{\mathcal{Z}^{\star}_{t}} z_{i,t}d\Lambda_{t}(i)}$ with the corresponding aggregates:
		\begin{eqnarray}
		&&\quad{}Y_{t}^{o}= \lambda^{1}X_{t-1}\odot{\int_{\mathcal{X}^{\star}_{t}}x_{i,t-1}d\Lambda_{t}(i)}\oslash X_{t-1}+\lambda^{2}\odot Z_{t}{\int_{\mathcal{Z}^{\star}_{t}} z_{i,t}d\Lambda_{t}(i)}\oslash{Z_{t}}+Q(\theta_1,0)Z_{t}+Y_{t}^{m} \label{eq:had}
		\end{eqnarray} where $\odot$ and $\oslash$ are 
	the Hadamard product and division respectively, and $\mathbb{P}$ is the probability over vectors $z_{i,t}$ and $x_{i,t-1}$ using $\Lambda_{i}$.
	Without loss of generality, looking at the first element of the vector  ${\int_{\mathcal{X}^{\star}_{t}}x_{i,t-1}d\Lambda_{t}(i)}\oslash X_{t-1}$, 
	\begin{eqnarray*}
		&& {\int_{\mathcal{X}^{\star}_{t}}x^{1}_{i,t-1}d\Lambda_{t}(i)}\oslash X^{1}_{t-1}\\
	&=&\frac{\mathbb{E}(x^{1}_{i,t-1}|x_{i,t}\in\mathcal{X}^{\star}_{t})\mathbb{P}(x_{i,t}\in \mathcal{X}^{\star}_{t})}{\mathbb{E}(x^{1}_{i,t-1}|x_{i,t}\in\mathcal{X}^{\star}_{t})\mathbb{P}(x_{i,t}\in \mathcal{X}^{\star}_{t})+\mathbb{E}(x^{1}_{i,t-1}|x_{i,t}\in\mathcal{X}^{\star,C}_{t})\mathbb{P}(x_{i,t}\in \mathcal{X}^{\star,C}_{t})}\\
	&=&\frac{\mathbb{P}(x_{i,t}\in \mathcal{X}^{\star}_{t})}{\mathbb{P}(x_{i,t}\in \mathcal{X}^{\star}_{t})+\mathbb{P}(x_{i,t}\in \mathcal{X}^{\star,C}_{t})+\left(\frac{\mathbb{E}(x^{1}_{i,t-1}|x_{i,t}\in\mathcal{X}^{\star,C}_{t})}{\mathbb{E}(x^{1}_{i,t-1}|x_{i,t}\in\mathcal{X}^{\star}_{t})}-1\right)\mathbb{P}(x_{i,t}\in \mathcal{X}^{\star,C}_{t})}\\
	&:=&\frac{\mathbb{P}(x_{i,t}\in \mathcal{X}^{\star}_{t})}{1+\hat{R}^{1}_{t}}\frac{B_{t}}{1+\hat{R}^{1}_{t}}\leq \frac{B_{t}}{1+\inf_{\mathcal{X}^{\star}_{t}} \hat{R}^{1}_{t}} = B_{t}	\end{eqnarray*}
        The last inequality holds since $\inf_{\mathcal{X}^{\star}_{t}}\hat{R}^{1}_{t}=0$,  $\sup_{\mathcal{X}^{\star}_{t}}\hat{R}^{1}_{t}=\infty$ and by the monotonicity assumption ($\mathbf{S-4}$). The inequality holds for every element of $x_{i,t}$. Finally, taking conditional expectations of \eqref{eq:had} (using $\mathbb{P}(.|X_{t-1})$:\vspace{-0.5 cm}
    	\begin{eqnarray*}
    \mathbb{E}(Y_{t}^{o}|X_{t-1})& \leq & \lambda^{1}X_{t-1}\odot \mathbb{E}( B_{t}|X_{t-1})+Y_{t}^{m}
    \end{eqnarray*}    
\end{proof} \vspace{-1.3 cm}
\begin{proof}
\textbf{of Corollary 7}\\
\small	(1): When $B_{t}=0$, $\mathbb{P}(x_{i,t}\in \mathcal{X}^{\star}_{t})=0$ and therefore the model has no frictions, which trivially restores point identification. When  $B_{t}=1$, $\int_{\mathcal{X}^{\star}_{t}}x^{1}_{i,t-1}d\Lambda_{t}(i)=X_{t-1}$, and this results to the first type of moment inequality using macroeconomic data. We now show the possibility of refinement of $\Theta$ for $B\in(0,1)$. Defining the conditional errors from the first and second types of inequalities as $\nu_{t}\equiv Y^{o}_{t}-P(\theta_{1},0)X_{t-1}$  and  $\eta_{t}\equiv \hat{Y}^{o}_{t}-P(\theta_{1},0)X_{t-1}-\lambda^{1}X_{t-1}\odot\mathbb{E}( B_{t}|X_{t-1})$ respectively,  $\mathbb{E}(\eta_{t}|\nu_{t},\mathbf{Y})=\int\eta_{t}p(\eta_{t}|\nu_{t},\mathbf{Y})\neq\eta_{t}$ since $p(\eta_{t}|\nu_{t},\mathbf{Y})\neq 1$ and $\eta_{t}-\mathbb{E}(\eta_{t}|\nu_{t},\mathbf{Y})\neq 0$ unless $B_{t}=0$. \\
		(2): Suppose that $\Theta_{I}$ is a singleton. This implies that $B=0$, or $\Lambda_{t}(i)$ has unit mass on one agent which is unconstrained, and thus $B=0$ too. 
\end{proof}

\begin{proof}
	\textbf{of Proposition 8}\\
Consider the quantity $T^{-\frac{1}{2}}\inf_{\theta\in\hat{\Theta}_{s}}||\mathcal{V}^{-\frac{1}{2}}(\hat{f}(\hat{\theta}_{p})-\hat{f}(\hat{\theta}_{s}))||$. 
Given that $\Theta$ is a connected set and $\hat{\Theta}_{s}\in\Theta$,
then $\hat{\Theta}_{s}$ is also connected. For any $\theta,\theta'\in cl(\hat{\Theta}_{s})$,
$d(\theta,\theta')<\epsilon$ for arbitrarily small $\epsilon>0$.
This implies that if $\theta_{p}\in\Theta_{s}$ then there exists
a $\theta_{s}\in\hat{\Theta}_{s}$ such that $||\mathcal{V}^{-\frac{1}{2}}(f(\theta_{p})-f(\theta_{s}))||<\epsilon$.
For every estimating equation, redefine $\hat{f}(\hat{\theta}_{p})=T^{-1}\sum_{t}q(Y;\theta)-T^{-1}\sum_{t}\mathbb{E}q(Y;\theta)+\gamma_{T}=A_{T}(\theta)+\gamma_{T}$
where $A_{T}(\theta)\equiv(A_{T,1}(\theta),A_{T,2}(\theta),...A_{t,p}(\theta))^{T}$
and $\gamma_{T}=(\gamma_{1,T},\gamma_{2,T},...\gamma_{p,T},)^{T}$. By element-wise mean value expansion around $\theta_{l} \in \Theta_{l},l\in{s,p}$, $\hat{f}(\hat{\theta}) = \mathbb{E}q(Y_{t},\theta_{l}) + (\hat{f}(\theta_{l})-\mathbb{E}q(Y_{t},\theta_{l})) + D(\tilde{\theta})(\theta - \theta_{l}))$. Given mild assumptions on the $(2+\delta)$ boundedness of each moment, the second component scaled by $\sqrt{T}$, $\sqrt{T}(\hat{f}(\theta_{l})-\mathbb{E}\hat{f}(\theta_{l}))\to_{d}\mathcal{N}(0,\Omega_{1,l})$ while the scaled third component, $D(\tilde{\theta})\sqrt{T}(\theta - \theta_{l}))\to_{d}\mathcal{N}(0,\Omega_{2,l})$. Consequently, for each $j,j\leq p$, $A_{T,j}\to_{d}\mathcal{N}(0,\Omega^{*}_{j})$. Let $\mathcal{V}(\theta)=AsyVar(T^{\frac{1}{2}}f(\theta_{p})-T^{\frac{1}{2}}f(\theta_{s}))$
and denote by $\mathcal{V}_{d}(\theta)$ be the matrix containing
only the diagonal elements of $\mathcal{V}(\theta)$.  \vspace{-0.2 cm}
\begin{eqnarray*}
T^{\frac{1}{2}}\mathcal{V}_{d}^{-\frac{1}{2}}(\hat{f}(\hat{\theta}_{p})-\hat{f}(\hat{\theta}_{s})) & = & \mathcal{V}_{d}^{-\frac{1}{2}}T^{\frac{1}{2}}(A_{T}(\theta_{p})-A_{T}(\theta_{s})+\gamma_{T,p}-\gamma_{T,s})
\end{eqnarray*} \vspace{-0.2cm}
Under $H_{0}$, $\theta_{p}\in\Theta_{s}$, or $f(\theta_{p})\in f(\Theta_{s})$ and thus $\inf_{\theta \in \Theta_{s}}(\gamma_{T,p}-\gamma_{T,s})=0 $
\begin{eqnarray*}
T\mathcal{W}(\theta_{p},\Theta_{s}) & = & \inf_{f(\theta_{s})\in f(\hat{\Theta}_{s})}||\hat{\mathcal{V}_{d}}^{-\frac{1}{2}}(T^{\frac{1}{2}}A_{T}(\theta_{p})-T^{\frac{1}{2}}A_{T}(\theta_{s})+T^{\frac{1}{2}}(\gamma_{T,p}-\gamma_{T,s}))||^{2}\\
 & = & \inf_{\theta_{s}\in\hat{\Theta}_{s}}||\hat{\mathcal{V}_{d}}^{-\frac{1}{2}}(T^{\frac{1}{2}}A_{T}(\theta_{p})-T^{\frac{1}{2}}A_{T}(\theta_{s})+T^{\frac{1}{2}}(\gamma_{T,p}-\gamma_{T,s}))||^{2}\\
 &\overset{d}{\to}& ||\mathcal{V}_{d}^{-\frac{1}{2}}\mathcal{V}^{\frac{1}{2}}\mathcal{N}(0,I_{p})||^{2} 
 \end{eqnarray*}\vspace{-0.2 cm}
Under $H_{1}$, $\theta_{p}\neq\Theta_{s}$, and
therefore $\underset{\theta_{s}\in\Theta_{s}}{\inf}(\gamma_{T,p}-\gamma_{T,s})=O(1)$. \vspace{-0.2cm}
\begin{eqnarray*}
	T\mathcal{W}(\theta_{p},\Theta_{s}) & = & \inf_{f(\theta_{s})\in f(\hat{\Theta}_{s})}||\hat{\mathcal{V}}^{-\frac{1}{2}}T^{\frac{1}{2}}(A_{T}(\theta_{p})-A_{T}(\theta_{s})+\gamma_{T,p}-\gamma_{T,s})||^{2}\\
	& = & ||\hat{\mathcal{V}}^{-\frac{1}{2}}(T^{\frac{1}{2}}A_{T}(\theta_{p})-T^{\frac{1}{2}}A_{T}(\theta_{s})+\inf_{f(\theta_{s})\in f(\hat{\Theta}_{s})}T^{\frac{1}{2}}(\gamma_{T,p}-\gamma_{T,s})||^{2}\\
	& = & ||O_{p}(1)+O_{p}(T^{\frac{1}{2}})||^{2}= O_{p}(T)
\end{eqnarray*} 
\end{proof}

\begin{proof} \textbf{of $\boldsymbol{\mu_{ols}\notin \mu_{ID,2}}$} in Partial Equilibrium
	
	We first solve for the consumption process. Let $p_{t}:=\mathbb{P}_{t}(\lambda_{i,t}>0)$ and $p:=\frac{\mathbb{E}(c^{2}_{i,t}\mathbb{P}_{t}(\lambda_{i,t}>0))}{\mathbb{E}c^{2}_{i,t}}$.	Moreover, let $\beta(1+r)$ and $\rho\to\infty$, $\beta(1+r)^{\frac{1}{\rho}}=1$ and therefore $\tilde{\mu}_{0}=0$.
	Denote by $\varphi(u)$ the Normal density at $u$ and by $\mathfrak{s}(.)$ the marginal density of $c_{i,t}$, Assuming that $c_{i,t}$ started $\tau$ periods ago, \small
	\begin{eqnarray*} 
		c_{i,t+1}&=&\sum^{\tau}_{j=0}\epsilon_{t+1-j}(1+\chi_{i,t-j}\lambda_2)\Pi^{j-1}_{k=0}(1+\chi_{i,t-k}\lambda_1):=\sum^{\tau}_{j=0}\epsilon_{t+1-j}z_{t+1-j} :=\sum^{\tau}_{j=0}\xi_{t+1-j} 
	\end{eqnarray*}  	 	\normalsize	 	
	where $\mathbb{P}(z_{t+1-j}=0)=0$. Consumption is a convolution of $\tau+1$ independent variables $\xi_{t+1-j}$ whose marginal density is a mixture of $N(0,\omega(\lambda_1,\lambda_2,j))$ with weights that depend on $p$ and $j$. For example, $\xi_{t+1-j}=\epsilon_{t+1-j}(1+\lambda_2)\Pi^{j-1}_{k=0}(1+\lambda_1)$ w.p.$(1-p)^{j+1}$. For each $\xi_{t+1-j}$, symmetry around zero is maintained. Next, we prove the result for $\tau=1$. Evaluating the LHS of \eqref{eq:probit4} at the true $\tilde{\mu}_{0}$ and the RHS at $\mu_{ols}$, setting $u=0$  and taking their difference,\small \begin{eqnarray*}
		\Delta :=  \big(\frac{1}{2} -\Phi_{0,\sigma_{\epsilon}^{2}}( -(\tilde{\mu}_{ols}c_{i,t}))\big)(1-p_{t})+(\Phi_{0,\sigma_{\epsilon}^{2}(1+\lambda_{2})^{2}}(-(\tilde{\mu}+\lambda_1)c_{i,t}) -\Phi_{0,\sigma^{2}_{ols}}( -\tilde{\mu}_{ols}c_{i,t}) )p_{t}&&
	\end{eqnarray*}	
	\normalsize	 	
	For $\mu_{ols}$ to be admissible, $\Delta$ should be weakly positive for all $p_{t}$ such that 
	
	$(\Phi_{0,\sigma_{\epsilon}^{2}(1+\lambda_{2})^{2}}(-(\tilde{\mu}+\lambda_1)c_{i,t}) -\Phi_{0,\sigma_{ols}^{2}}( -\tilde{\mu}_{ols}c_{i,t}) )\geq 0$.  Sufficient conditions for the latter, are $\sigma^{2}_{ols}=\sigma_{\epsilon}^{2}(1+\lambda_{2})^{2}$, which is true for all $(\lambda_{1},\lambda_{2}):\lambda^{2}+2\lambda_{2}-\lambda_{1}\frac{\sigma^{2}_{c}}{\sigma^{2}_{\epsilon}}p(1-p)=0$, and for probability functions  $B_{t}:\lambda_{1}c_{i,t}(1-p)\geq 0$.  We choose $B_{t}=1(c_{i,t}<0)$.
	
	Substituting for the test point $\tilde{\mu}=\mu_{ols}$ and $B_{t}=1(c_{i,t}<0)$, $\Delta>0$ for $c_{i,t}>0$ by construction, while for $c_{i,t}<0$ we have a contradiction: \small
	\begin{eqnarray}
	\Delta= \frac{1}{2} -\Phi_{0,\sigma_{\epsilon}^{2}}( -p\lambda_1c_{i,t})=\frac{1}{2} -\Phi_{0,\sigma_{\epsilon}^{2}}( p\lambda_1|c_{i,t}|)<0&& \label{eq:Probit5}
	\end{eqnarray}	 \normalsize	 
	
\end{proof}
\newpage
\subsubsection{Small Open Economy model for Spain}In what follows variables
with $\ast $ denote the "rest of the world", $y_{t}$ is real output, $c_{t}$
is consumption, $i_{t}$ investment, $q_{t}$ the value of capital, $k_{t}$ is
productive capital, $k_{t}^{s}$ capital services, $z_{t}$ is capital
utilization, $\mu _{t}^{p}$ is the price markup, $\pi _{t}$ is domestic
inflation, $\pi _{cpi}$ is CPI inflation, $r_{t}^{k}$ is the rental rate of
capital, $w_{t}$ is the real wage and  $r_{t}$ is the interest rate
\small
\begin{eqnarray*}
y_{t} & = & c_{y}c_{t}+i_{y}i_{t}+z_{y}z_{t}+nx_{y}s_{t}+\epsilon_{t}^{g} \\
c_{t} & = & c_1\mathbb{E}_{t}c_{t+1}+c_{2}(l_{t}-\mathbb{E%
}l_{t+1})-c_{3}(r_{t}-\mathbb{E}_{t}\pi_{t+1}+\epsilon_{t}^{b}) \\
\mathbb{E}_{t}r_{t+1}^{k} & = & r_{1}(r_{t}-\mathbb{E}_{t}\pi_{t+1}+%
\epsilon_{t}^{b}) \\
y_{t} & = & \phi_{p}(\alpha k_{t}^{s}+(1-\alpha)l_{t}+\epsilon_{t}^{\alpha})
\\
c_{t} & = & c_{t}^{*}+\frac{1}{\sigma_{a}}s_{t} \\
k_{t}^{s} & = & k_{t-1}+z_{t} \\
z_{t} & = & z_{1}r_{t}^{k} \\
k_{t} & = & k_{1}k_{t-1}+(1-k_{1})i_{t} \\
\mu_{t}^{p} & = & \alpha(k_{t}^{s}-l_{t})+\epsilon_{t}^{\alpha}-w_{t} \\
\pi_{t} & = & \pi_{1}\pi_{t-1}+\pi_{2}\mathbb{E}_{t}\pi_{t+1}-%
\epsilon_{t}^{p} \\
\pi_{cpi} & = & \pi_{t}+\nu\Delta s_{t} \\
r_{t}^{k} & = & -(k_{t}-l_{t})+w_{t} \\
\mu_{t}^{w} & = & w_{t}-\sigma_{l}l_{t}+c_{t} \\
w_{t} & = & w_{1}w_{t-1}+(1-w_{1})(\mathbb{E}_{t}w_{t+1}+\mathbb{E}%
\pi_{t+1})-w_{2}\pi_{t}+w_{3}\pi_{t-1}-w_{4}\mu_{t}^{w}+\epsilon_{t}^{w} \\
r_{t} & = & \rho
r_{t-1}+(1-\rho)[r_{\pi}\pi_{cpi,t}+r_{y}(y_{t}-y_{t}^{^{p}})]+r_{\Delta
Y}[(y_{t}-y_{t}^{^{p}})-(y_{t-1}-y_{t-1}^{^{p}})]+\epsilon_{t}^{r}
\end{eqnarray*} \normalsize

Financial Frictions and Capital adjustment costs imply the following changes to the model, where $Q_{t}$ is the price of capital and  $R^{k}_{t}$ the return to capital :
\begin{eqnarray*}
	\mathbb{E}_{t}R_{t+1}^{k} & = & -\chi_{pr}(N_{t}-Q_{t}-k_{t})+r_{t}-\mathbb{E}_{t}\pi_{t+1} \\
	N_{t} & = & \gamma\bar{r}_{k}\left(\left(\frac{\bar{K}}{\bar{N}}\right)\left( R^{k}_{t}-r_{t-1}\right)+r^{k}_{t}+N_{t-1}\right)\\
    i_{t}&=& i_{1}i_{t-1}+(1-i_{1})\mathbb{E}_{t}i_{t+1} + \frac{1}{\phi_{adj}}Q_{t} +\epsilon_{q,t}\\
    Q_{t}&=& q_{1}\mathbb{E}_{t}Q_{t+1}+(1-q_{1})\mathbb{E}_{t}r^{k}_{t+1} -(R^{k}_{t}+\epsilon_{b,t})
     \end{eqnarray*} \normalsize

 \begin{table}[H]
	\begin{centering}
		\caption{Confidence Set for $\Theta_{IM}$ and $\Theta_{CM}$-with survey data}\bigskip
		\begin{tabular}{|c||c|c||c|c||}
			\hline
			&  \textbf{Incomplete Model}&& \textbf{Complete Model}\\\hline 
			Parameter & $q_{2.5\%}$ & $ q_{97.5\%}$ & $q_{5\%}$ & $ q_{95\%}$ \\ \hline 

$\sigma_c$&  12.460 & 13.320&28.6900 & 30.0000 \\ \hline 
$\phi_{p}$& 5.1840& 5.3430&21.5700 & 21.7000\\ \hline
$g_{y}$&0.9668 & 1.0000&0.2432 & 0.4827\\  \hline
$\lambda_{h}$ &0.7550 & 0.7673&0.0009& 0.0009\\  \hline
$\sigma_{l}$& 8.5140 & 8.5760&8.892 & 9.104\\ \hline
$\xi_{p}$&0.3654 & 0.3846&0.0000& 0.0000 \\ \hline
$\xi_{w}$&0.0945 & 0.0958&0.8173 & 0.8628 \\  \hline
$\iota_{w}$& 0.5631 & 0.575 &0.9748 & 0.9999\\ \hline
$\iota_{p}$& 0.4952 & 0.498&0.0004 & 0.0004\\ \hline
$z_{1}$	&0.5360 & 0.5429&0.0009 & 0.0009\\ \hline
$\rho_{a}$	& 0.6447 & 0.6504&0.7521 & 0.9092 \\ \hline
$\rho_{b}$ &0.6729 & 0.6789&\textit{0.0000}&\textit{0.0000}\\ \hline
$\rho_{g}$ & 0.0000 & 0.0023&0.9999 & 0.9999\\ \hline
$\rho_{qs}$&0.3394 & 0.3468&0.3242 & 0.4079\\ \hline
$\rho_{ms}$	& 0.4044 & 0.4137&0.7347 & 0.8015\\ \hline
$\rho_{p}$&0.4802 & 0.4843&0.6329 & 0.7077\\ \hline
$\rho_{w}$ &0.2690 & 0.2719&0.0003 & 0.0003\\\hline
$ma_{p}$&0.9958 & 1.000&0.0129 & 0.0561\\\hline
$ma_{w}$&0.5466 & 0.5487&0.0010 & 0.0011\\\hline
$\nu$& 0.6084 & 0.6104&0.0053 & 0.0205\\\hline
$\rho_y\star$&	0.6080 & 0.6080&0.1683 & 0.2251\\ \hline
$\sigma_a$&9.8980 & 20.000&0.0245 & 0.0339\\ \hline
$\sigma_b$&3.8910 & 7.3910&3.4480 & 3.4800\\ \hline
$\sigma_g$&13.530 & 13.830&4.5820 & 4.6340\\\hline
$\sigma_{qs}$& 9.2380 & 15.23&0.1831 & 0.2064\\\hline
$\sigma_m$& 9.1220 & 10.240&1.0250 & 1.0670\\\hline
$\sigma_p$&	0.9563 & 1.9220&4.2260 & 4.2660\\\hline
$\sigma_w$ &-&-&- & -\\\hline
$\sigma_f$ &- & - &- & -\\\hline
$\phi_{adj}$&-&-&5.9210 & 5.9550\\\hline
$\chi_{pr}$&-&-&0.9862 & 0.9999\\\hline
$\gamma$&-&-& 0.9944 & 0.9950\\\hline
$K/N$&-&-&4.3330 & 4.3490\\\hline
 		\end{tabular}
 		\par\end{centering}
 \end{table}

Note: $\sigma_w$ and $\sigma_f$ are set to 1. Parameters in italic font reached the boundary and thus calibrated to the boundary value.
\newpage

\newpage

\section*{\textbf{Estimated Wedges - with Survey Data}- $5\%$ level} \begin{figure}[H]
	\begin{centering}
		\includegraphics[scale=0.47]{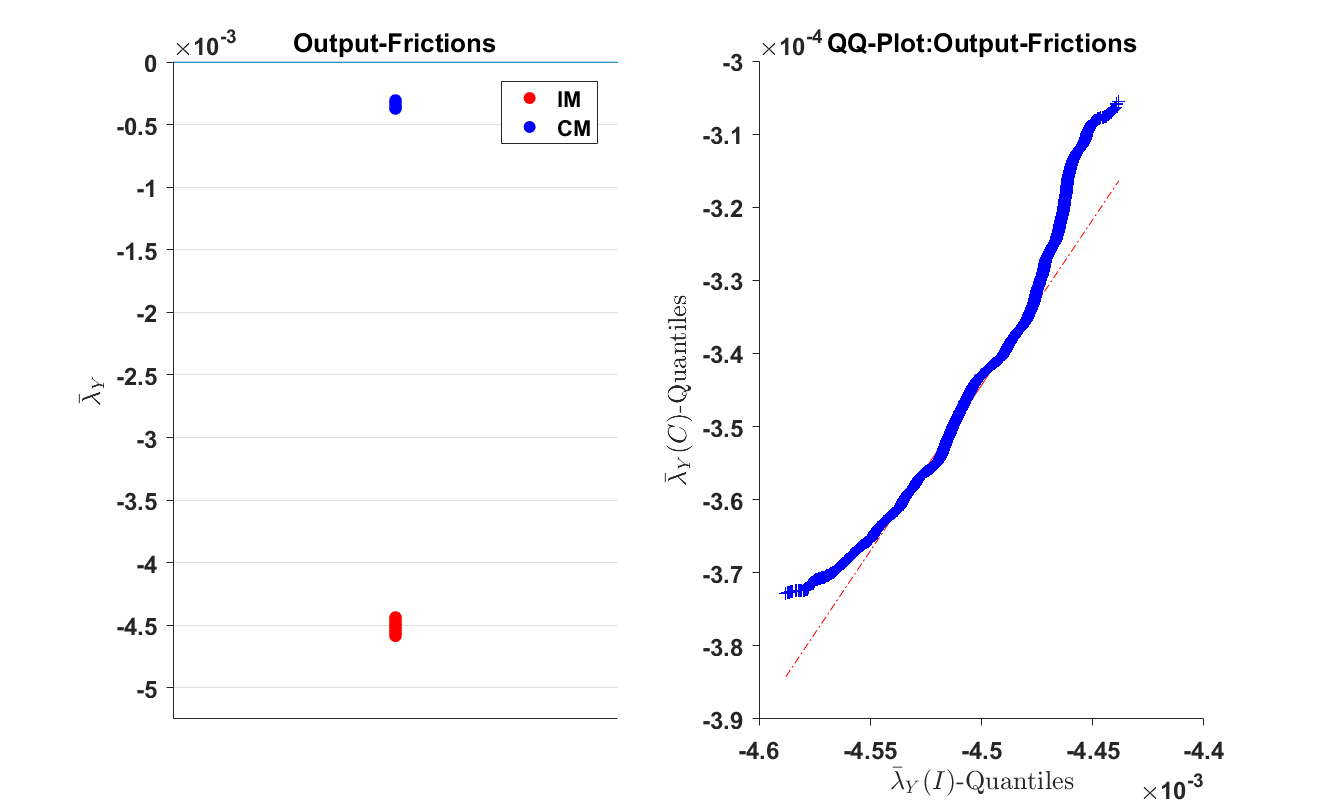}
		\includegraphics[scale=0.47]{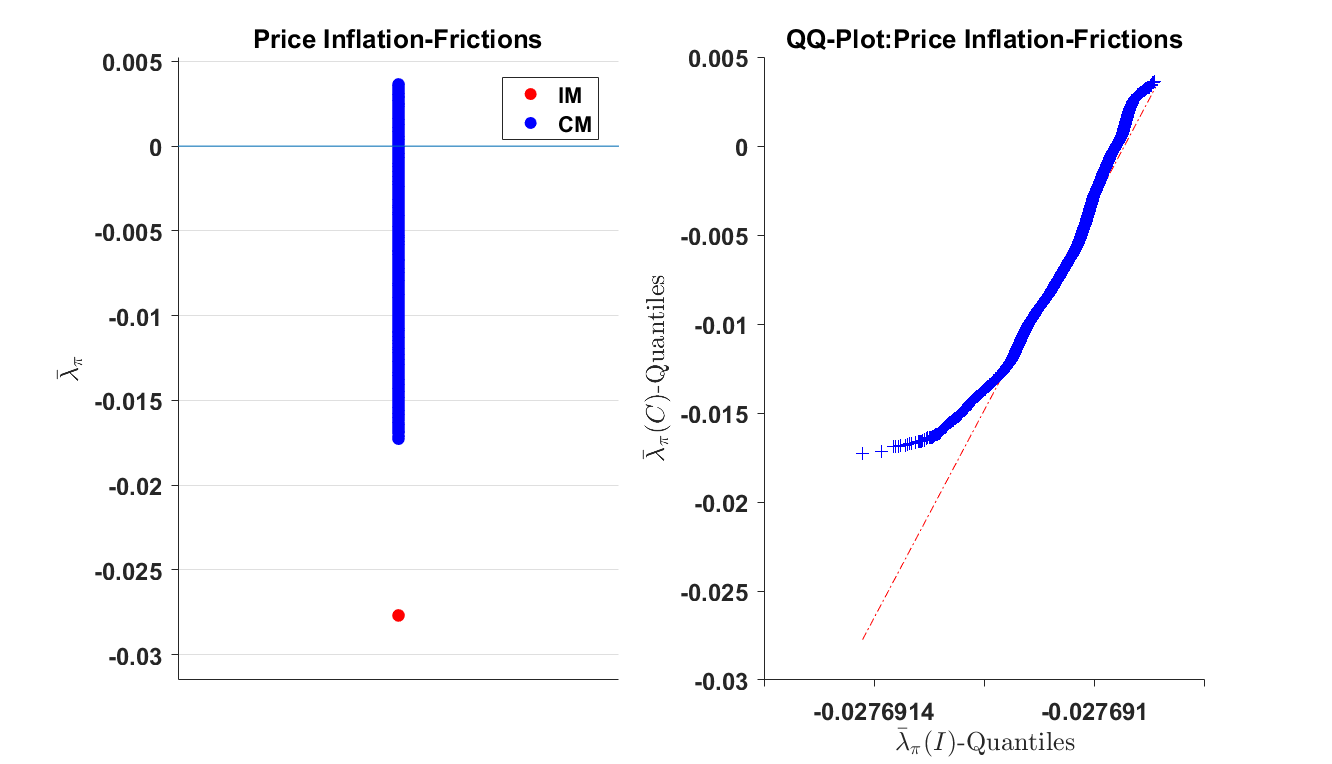}
		\protect\caption{\small Full Sample Wedges (due to Financial Frictions and Capital Adjustment Costs)}
	\end{centering}
\end{figure}

\begin{figure}[H]
	\begin{centering}
		\includegraphics[scale=0.47]{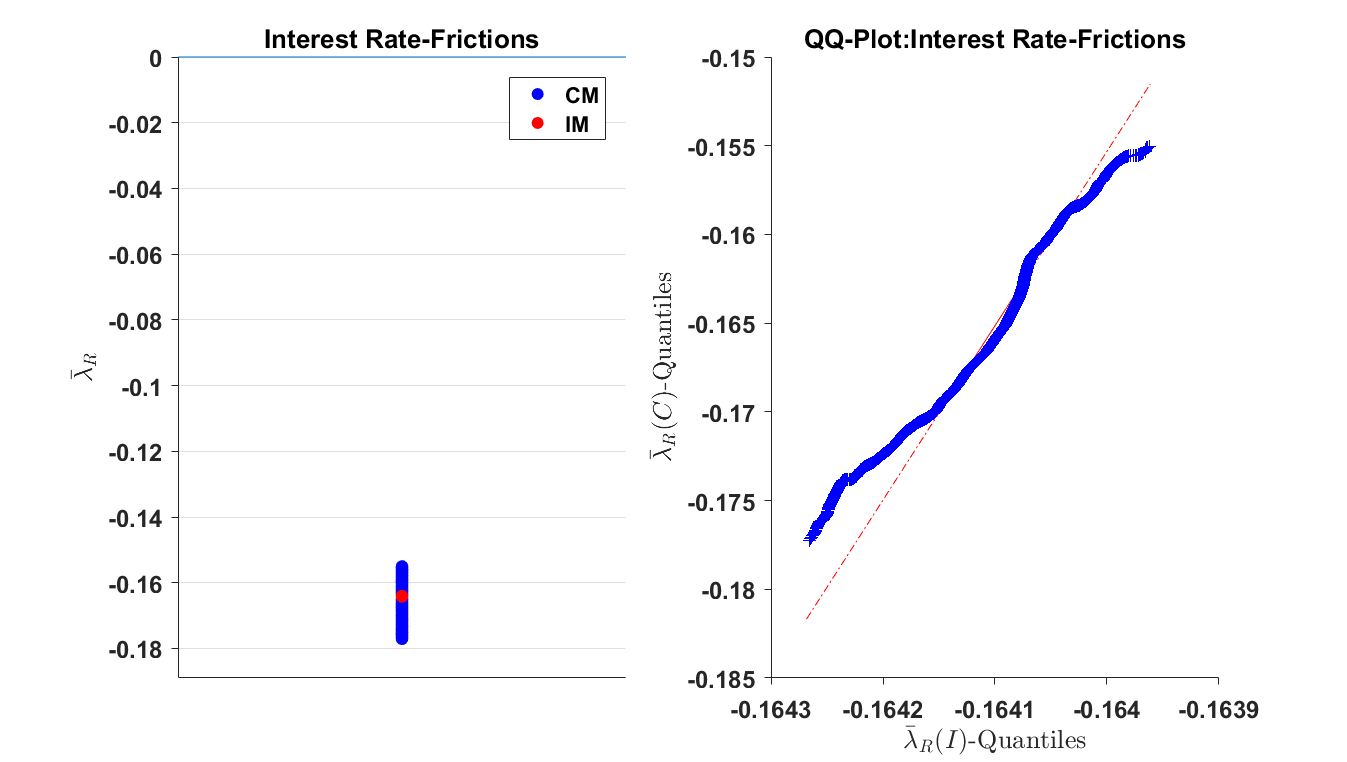}
		\includegraphics[scale=0.47]{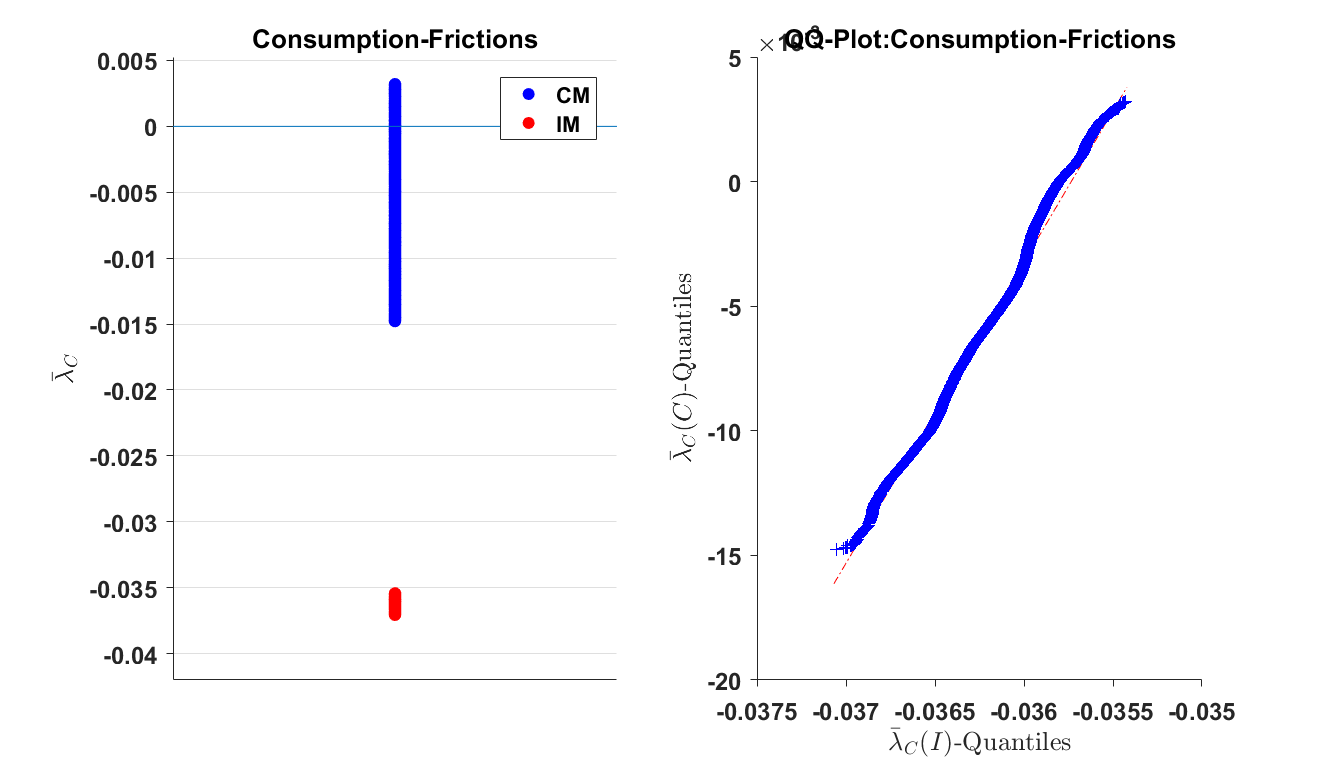}
	 		\protect\caption{Full Sample Wedges (due to Financial Frictions and Capital Adjustment Costs)}
	\end{centering}
\end{figure}
\begin{figure} [H]
	\begin{centering}
	\includegraphics[scale=0.47]{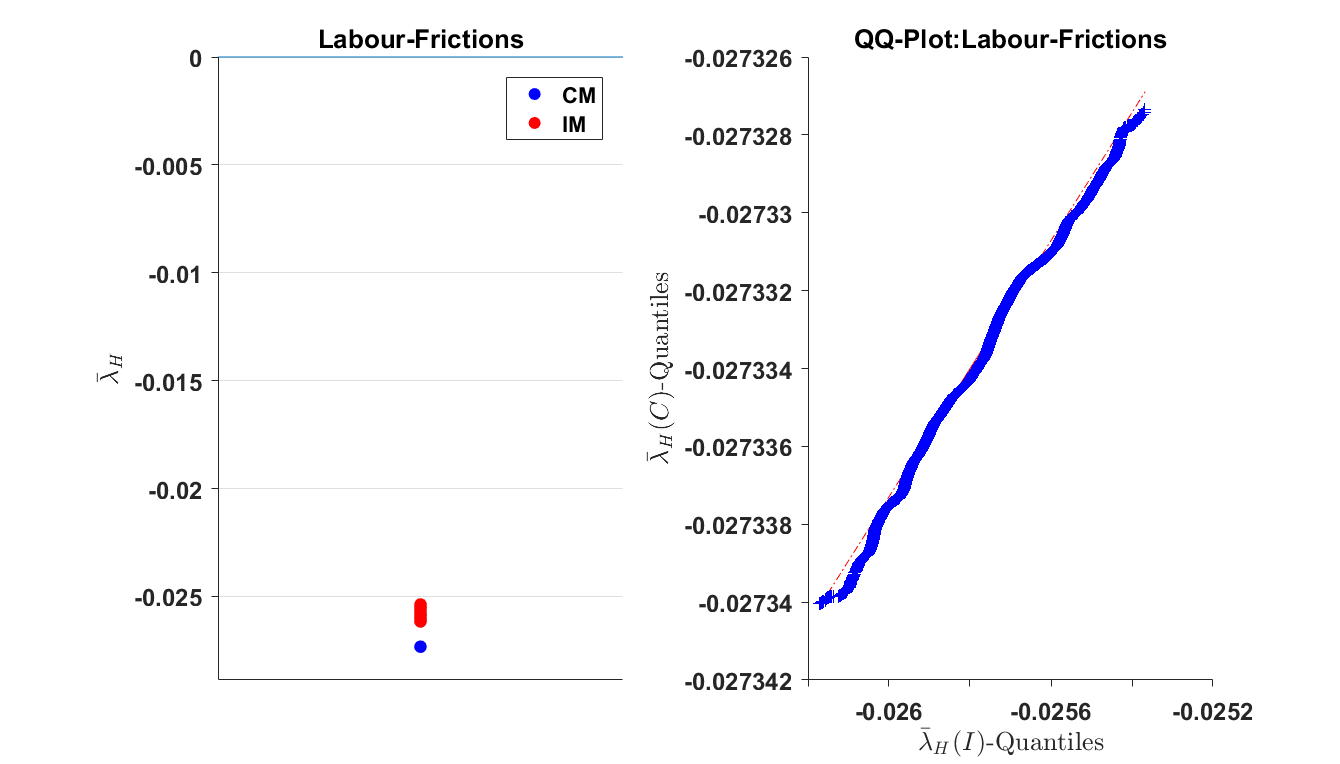}
	\includegraphics[scale=0.47]{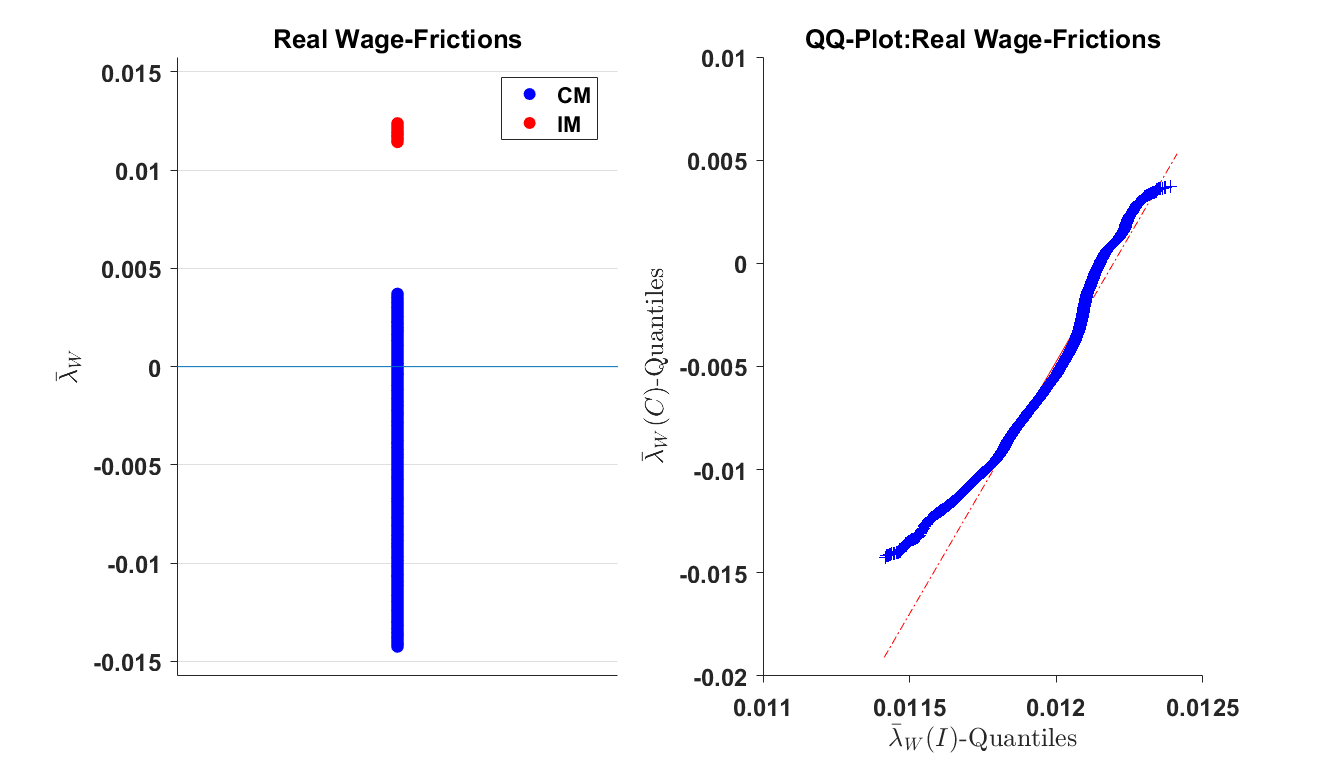}
	\protect\caption{Full Sample Wedges (due to Financial Frictions and Capital Adjustment Costs)}
\end{centering}
\end{figure}

\begin{figure}[H]
	\begin{centering}

		\includegraphics[scale=0.47]{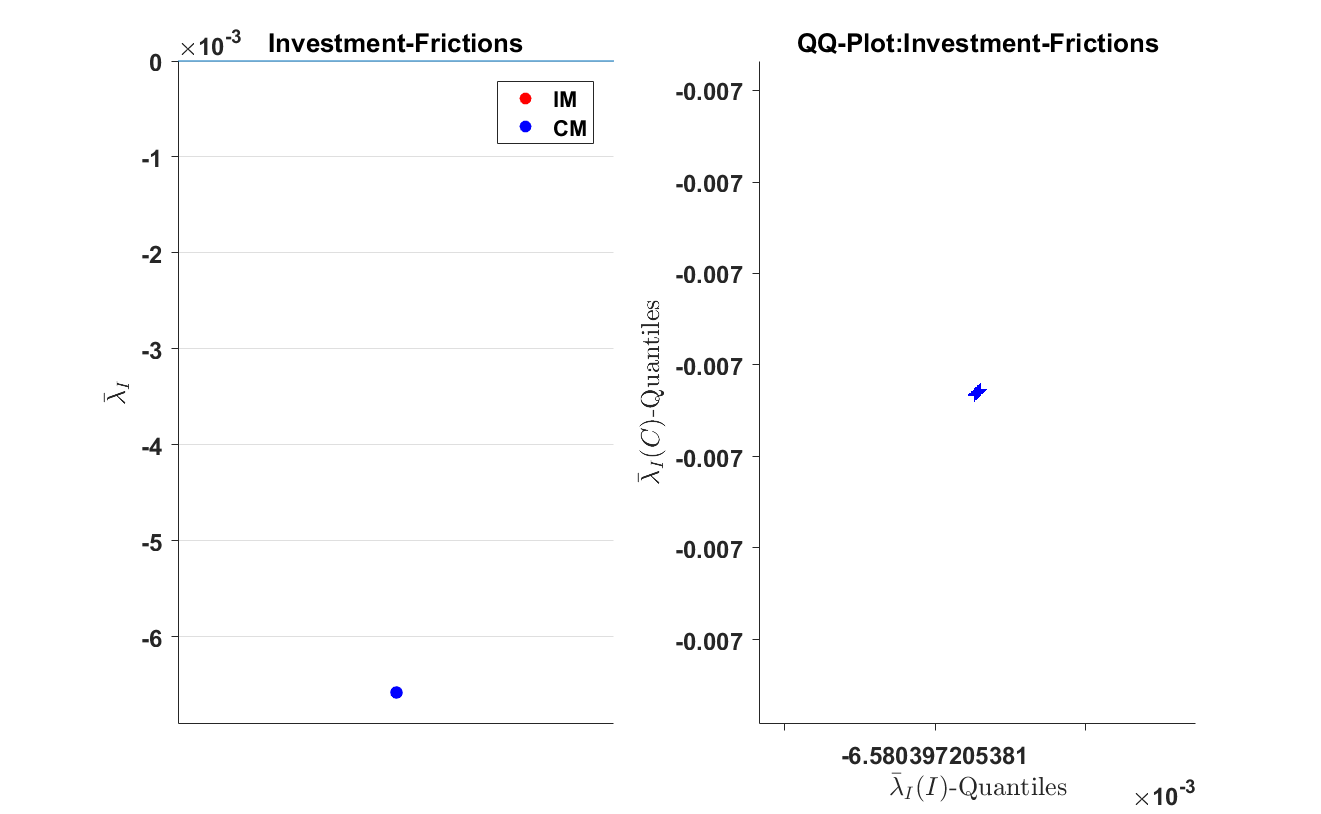}
			\protect\caption{Full Sample Wedges (due to Financial Frictions and Capital Adjustment Costs)}
	\end{centering}
\end{figure}

\newpage
\section*{Appendix B: Supplemental Material}

\section{Towards General Equilibrium - More examples}

\begin{example}
	\textbf{Capital Adjustment Costs}\newline
	Assuming full depreciation, the capital accumulation equation of the
	representative firm is distorted as follows: $K_{t+1}=I_{t}-\frac{\phi }{2}(%
	\frac{K_{t+1}}{K_{t}}-1)^{2}K_{t}$ for $\phi \in (0,1)$.
	
	Using the capital accumulation equation in the linearised Euler equation and
	imposing $\tilde{R}_{t}=\tilde{Z_{t}}-(1-\alpha )\tilde{K_{t}}$ we have: 
	\begin{eqnarray*}
		(\omega +\phi (1+\beta (1-\alpha ))+1-\alpha )\kappa _{t} &=&(\alpha -\phi
		(1-\beta \alpha ))(\alpha -1)\tilde{K}_{t}+(\omega +\beta \phi )\kappa
		_{t+1}-\phi \tilde{Z_{t}} \\
		&=&(\alpha -\phi (1-\beta \alpha ))\tilde{R}_{t}+(\omega +\beta \phi )%
		\mathbb{E}_{t}\kappa _{t+1}+... \\
		&&...-\alpha (1-\beta \phi )\tilde{Z_{t}}
	\end{eqnarray*}%
	where $\kappa _{t}$ is the Lagrange multiplier on the capital accumulation
	equation. Assuming that productivity is iid, and iterating forward we get: 
	\begin{equation*}
	\kappa _{t}=\frac{\gamma _{1}}{1-\gamma _{2}L^{-1}}\mathbb{E}_{t}\tilde{R_{t}%
	}+\gamma _{3}\tilde{Z}_{t}=\zeta (L,\gamma _{1},\gamma _{2})\mathbb{E}_t%
	\tilde{R_{t}}+\gamma _{3}\tilde{Z}_{t}
	\end{equation*}%
	where $\gamma _{1}\equiv \frac{\alpha -\phi (1-\beta \alpha )}{\omega +\phi
		(1+\beta (1-\alpha ))+1-\alpha }$, $\gamma _{2}=\frac{\omega +\beta \phi }{%
		\omega +\phi (1+\beta (1-\alpha ))+1-\alpha }$ and $\gamma _{3}=\frac{%
		-\alpha (1-\beta \phi )}{\omega +\phi (1+\beta (1-\alpha ))+1-\alpha }$ and $%
	L$ the lag operator. Using $\kappa _{t}=-\omega \tilde{C}_{t}$, and letting $%
	{s}:=\frac{C_{ss}}{I_{ss}}$, aggregate investment is: 
	\begin{equation*}
	\tilde{I}_{con,t}=(1+s)\tilde{Y_{t}}+\frac{s}{\omega }\zeta (L,\gamma
	_{1},\gamma _{2})\mathbb{E}_t\tilde{R_{t}}+\frac{s}{\omega }\gamma _{3}%
	\tilde{Z}_{t}
	\end{equation*}%
	When $\phi =0$, 
	\begin{equation}
	\tilde{I}_{t}^{\star }=(1+s)\tilde{Y}_{t}+\frac{s}{\omega }\zeta (L,\gamma
	_{1,\phi =0},\gamma _{2,\phi =0})\mathbb{E}_t\tilde{R_{t}}+\frac{s}{\omega }%
	\gamma _{3,\phi =0}\tilde{Z}_{t}  \label{eq:phi0}
	\end{equation}%
	Notice that $\zeta (L,\gamma _{1},\gamma _{2})$ is increasing in both $%
	\gamma _{1}$ and $\gamma _{2}$ and $\frac{d\gamma _{1}}{d\phi }<0$, $\frac{%
		d\gamma _{2}}{d\phi }<0$ for $\omega>1$ and $\frac{d\gamma _{3}}{d\phi }>0$.
	Therefore, $\zeta (L,\gamma _{1,\phi =0},\gamma _{2,\phi =0})-\zeta
	(L,\gamma _{1},\gamma _{2})>0,\forall \phi $ and the difference the
	investment rules with and without adjustment costs is: 
	\begin{eqnarray*}
		\tilde{\lambda}_{t} &\equiv &\tilde{I}_{con,t}-\tilde{I}_{t}^{\star } \\
		&=&\frac{s}{\omega }(\zeta (L,\gamma _{1,\phi =0},\gamma _{2,\phi =0})-\zeta
		(L,\gamma _{1},\gamma _{2}))\mathbb{E}_t\tilde{R_{t}}+\frac{s}{\omega }%
		(\gamma _{3,\phi =0}-\gamma _{3})\tilde{Z}_{t} \\
		&=&-\frac{s}{\omega }(\zeta (L,\gamma _{1,\phi =0},\gamma _{2,\phi
			=0})-\zeta (L,\gamma _{1},\gamma _{2}))(1-\alpha )\tilde{K}_{t}... \\
		&&...+\frac{s}{\omega }(\zeta (L,\gamma _{1,\phi =0},\gamma _{2,\phi
			=0})-\zeta (L,\gamma _{1},\gamma _{2})+(\gamma _{3,\phi=0}-\gamma _{3}))%
		\tilde{Z}_{t}
	\end{eqnarray*}%
	Hence, $\tilde{\lambda}_{t}$ is negatively related to $\tilde{K}_{t}$.
	Moreover, after some algebra it can be shown that the coefficient of $\tilde{%
		Z}_t$ is also bounded below by a positive number if $\alpha < \frac{1}{2\beta%
	}$. Nevertheless, the sign of the conditional mean of $\tilde{\lambda}_{t}$
	is determined and the following moment inequality holds: $\mathbb{E}\tilde{%
		\lambda}_{t}\tilde{K}_{t-j}\leq 0,\forall j\geq 0$.
\end{example}

\begin{example}
	\textbf{Occasionally binding constraints}\newline
	The case of occasionally binding constraints can be best motivated by
	attaching an aggregate marginal efficiency shock, $\epsilon _{t}$ to
	investment, that is, $k_{i,t}=(1-\delta )k_{i,t-1}+\epsilon _{t}i_{i,t}$.
	For simplicity we assume that this shock is iid and takes two values, $%
	\epsilon _{H}$ and $\epsilon _{L}$. We analyse the case of constraints on dis-investing (capital
	irreversibility), which can be thought of as a restriction on how much
	capital households can withdraw from the firm every period. 
	The optimization problem now includes a new constraint of the form $%
	K_{t}\geq \rho (1-\delta )K_{t-1}$ which is equivalent to $I_{t}\geq -\frac{%
		\tilde{\rho}}{\epsilon _{t}}K_{t}$ where $\tilde{\rho}\equiv {(1-\rho
		)(1-\delta )}$. Denoting by $\nu _{t}$ the Lagrange multiplier on this
	constraint, and $\kappa _{t}$ the Lagrange multiplier on the law of motion
	of capital, the relevant optimality conditions are: 
	\begin{eqnarray*}
		\kappa _{t}+\beta \mathbb{E}\kappa _{t+1}(\tilde{\rho}-(1-\delta ))+\mathbb{E%
		}C_{t+1}^{-\omega }(R_{t+1}+\frac{\tilde{\rho}}{\epsilon _{t+1}}) &=&0 \\
		\nu _{t}-\kappa _{t}\epsilon _{t}-C_{t}^{-\omega } &=&0 \\
		\nu _{t}(I_{t}+\frac{\tilde{\rho}}{\epsilon _{t}}K_{t}) &=&0
	\end{eqnarray*}%
	When $\epsilon _{t}=\epsilon _{H}$ the representative household will choose $%
	I_{H,t}^{\star }$ according to the following Euler equation, which we get by
	setting $\nu _{t}=0$: 
	\begin{equation*}
	C_{t}^{-\omega }=\mathbb{E}\beta C_{t+1}^{-\omega }\frac{\epsilon _{H,t}}{%
		\epsilon _{t+1}}(1-\delta +R_{t+1}\epsilon _{t+1})
	\end{equation*}%
	and linearizing we have: 
	\begin{equation*}
	-\omega \tilde{C}_{t}=-\mathbb{E}\tilde{C}_{t+1}+\tilde{\epsilon}_{H,t}+%
	\mathbb{E}\tilde{R}_{t+1}
	\end{equation*}%
	Solving the Euler equation forward, we get $\tilde{I}_{H,t}^{\star }=\tilde{I%
	}_{t}^{\star }+\tilde{\epsilon}_{H,t}$. When $\epsilon _{t}=\epsilon _{L,t}$%
	, the household dis-invests up to the irreversibility level, that is $%
	I_{L,t}=-\frac{\tilde{\rho}}{\epsilon _{t}}K_{t}$. The corresponding
	linearized rule is $\tilde{I}_{L,t}=-(\tilde{K}_{t}-\tilde{\epsilon}_{t})$
	where we have imposed that $\mathbb{E}\epsilon _{t}=1$. Therefore aggregate
	investment evolves as: 
	\begin{eqnarray*}
		\tilde{I}_{con,t} &=&\tilde{I}_{H,t}^{\star }\mathbb{P}(\epsilon
		_{t}=\epsilon _{H})+I_{L,t}\mathbb{P}(\epsilon _{t}=\epsilon _{L}) \\
		&=&\tilde{I}_{t}^{\star }-(1-\mathbb{P}(\epsilon _{t}=\epsilon _{H}))(\tilde{%
			I}_{H,t}^{\star }-\tilde{I}_{L,t})
	\end{eqnarray*}%
	and the distortion produced by the occasionally binding constraint is: 
	\begin{equation*}
	\tilde{\lambda}_{t}=\tilde{I}_{con,t}-\tilde{I}_{t}^{\star }=-(1-\mathbb{P}%
	(\epsilon _{t}=\epsilon _{H}))(\tilde{I}_{t}^{\star }+\tilde{K}_{t}-\tilde{%
		\epsilon}_{L}+\tilde{\epsilon}_{H})
	\end{equation*}%
	Here $\tilde{\lambda}_{t}$ is negative as by definition $\tilde{\epsilon}_{H}>\tilde{\epsilon}_{L}$. The moment inequality implied by this friction is $%
	\mathbb{E}\tilde{\lambda}_{t}\tilde{K}_{t-j}\leq 0,\quad \forall j\geq 0$.
\end{example}

\begin{example}
	\textbf{Non Rational Expectations}\newline
	When agents employ different models to make predictions, have misperceptions
	or sentiments, the sign of $\mathbb{E}\tilde{\lambda}_{t}\tilde{K}_{i,t-j}$ depends
	on how the model used by the agent relates to the objective probability
	measure. Suppose, for illustration, that agents are unaware and unable to
	estimate the stochastic process for productivity. Suppose that the true
	process is $\tilde{Z}_{t}=\varepsilon _{t}$ where $\varepsilon _{t}\sim
	N(0,1)$. Agents use output realizations to predict future productivity, $%
	\mathcal{E}_{t}\tilde{Z}_{t+j}=\rho ^{j}\tilde{Y}_{t}$ for $|\rho |<1$,
	since $Corr(Z_{t},Y_{t})=\frac{1}{1+\alpha ^{2}\mathbb{V}(K_{t})}>0$. Using
	this conditional expectation in the investment rule 2.1, aggregate
	investment is: 
	\begin{equation*}
	\tilde{I}_{con,t}=A_{1}(\theta )\tilde{K}_{t}+A_{2}(\theta )(1-A_{3}(\theta
	)\rho )^{-1}(\alpha \tilde{K}_{t}+\tilde{Z}_{t})
	\end{equation*}%
	while the investment rule used by the econometrician after substituting the
	true process for $\tilde{Z}_{t}$ in \eqref{eq:agg_inv} is: 
	\begin{equation}
	\tilde{I}_{t}=A_{1}(\theta )\tilde{K}_{t}+A_{2}(\theta )\tilde{Z}_{t}
	\label{eq:agg_inv_ex3}
	\end{equation}%
	The difference between the investment rule under bounded rationality and the
	one used by the econometrician is therefore: 
	\begin{equation*}
	\tilde{\lambda}_{t}=A_{2}(\theta )(1-A_{3}(\theta )\rho )^{-1}(\alpha \tilde{%
		K}_{t}+\rho A_{3}(\theta )\ \tilde{Z}_{t})
	\end{equation*}%
	Assuming full capital depreciation and using the true process for
	productivity in equation \eqref{eq:agg_inv_ex3}, which is identical to
	equation \eqref{eq:phi0} once we substitute for the production function and
	the return to capital, we get that $A_{2}(\theta )=1+\frac{C_{ss}}{I_{ss}}>0$
	and $A_{3}(\theta )=\frac{s\alpha }{(1+s)(\omega +1-\alpha )}<1$. Therefore, 
	$\tilde{\lambda}_{t}$ is a positive function of $\tilde{K}_{t}$ and $\tilde{Z%
	}_{t}$, and the moment inequality in this case is, $\mathbb{E}\tilde{\lambda}%
	_{t}K_{t-j}\geq 0,\quad \forall j\geq 0$.
	
\end{example}

\section{Perturbing the frictionless model}
\subsection{Obtaining unique conditional models}
For every $%
\theta \in \Theta $, there is a multiplicity of corresponding conditional
models which can be constructed with different distributions of shocks but
give rise to the same $\mathbb{E}\lambda _{t}$.

To construct this family of models, we use the fact that the process $%
\lambda _{t}$ causes a change in the measure implied by the frictionless
model. As \citet{Hansen_nobel}, we define a perturbation $\mathcal{M}_{t}$
such that for any measurable random variable $W_{t}$: $\mathbb{E}_{t}%
\mathcal{M}W_{t}=\mathbb{E}_{t}(W_{t}|M_{f})$ (and vice versa, that is $%
\mathbb{E}_{t}W_{t}=\mathbb{E}_{t}(\tilde{\tilde{\mathcal{M}_t}}W_{t}|M_{f})$
). This representation is useful for two reasons. First, as we show in the
proof of Proposition 6, interpreting the distortions as a change of measure
provides a unified way of looking at frictions. Second, we can compute $%
\mathcal{M}_{t}$, for all t, and this can give us estimates of the wedges at
any point of time.

We briefly explain how one can compute $\mathcal{M}_{t}$ - a full
description is in the Appendix. Recall that in the linearised model, $\tilde{%
	\lambda}_{t}$ measures the distance between the prediction of the
frictionless model and the data, where the latter is assumed to be produced
by a model with frictions. As shown in section 2, $\tilde{\lambda}_{t}$ is a
function of the endogenous variables and the shocks. Without loss of
generality assume that $\mathbb{E}\tilde{\lambda}_{t}>0$. We look for a $%
\mathcal{M}_{t}$ that makes this expectation zero. By finding a $\mathcal{M}%
_{t}$ such that $\mathbb{E}\mathcal{M}_{t}\tilde{\lambda}_{t}=0$ we are
identifying $\mathcal{M}_{t}d\mathbb{P}(.)$, which is the density of the
frictionless model that can be derived from the data by distorting the
objective distribution, $\mathbb{P}$. Given $\mathcal{M}_{t}$, we can
decompose $\mathbb{E}\tilde{\lambda}_{t}$ as: $\mathbb{E}\tilde{\lambda}%
_{t}\equiv \mathbb{E}\mathcal{M}_{t}\tilde{\lambda}_{t}+\mathbb{E}(1-%
\mathcal{M}_{t})\tilde{\lambda}_{t}$. Therefore, to impose $\mathbb{E}%
\mathcal{M}_{t}\tilde{\lambda}_{t}=0$ it suffices to set $\mathbb{E}(1-%
\mathcal{M}_{t})\tilde{\lambda}_{t}=\mathbb{E}\tilde{\lambda}_{t}$. The term 
$1-\mathcal{M}_{t}$ determines the distortion at time $t$. To understand why
using this decomposition is useful, notice that $\tilde{\lambda}_{t}$ is
related to endogenous frictions but also to the unobservable shocks. As we
show below, $1-\mathcal{M}_{t}$ is a time varying function of $\mathbb{E}%
\tilde{\lambda}_{t}$. Since the latter is an average, unobservable shocks
are eliminated and $1-\mathcal{M}_{t}$ captures only the endogenous
frictions.

In general, $\mathcal{M}_{t}$ is a positive $\mathcal{F}_{t}-$ measurable random variable, unit expectation martingale,  $\mathbb{E}(\mathcal{M}_{t+1}|\mathcal{F}_{t})=\mathcal{M}_{t}$, $\mathbb{E}\mathcal{M}_{t}=1$. Being a martingale is a necessary condition for the distorted conditional expectation to be consistent with the Kolmogorov definition \citep{HansenSargent2005}.  We stack all $\mathcal{M}_t$ in a vector $\mathcal{M}$ and define $\tilde{\mathcal{M}}=\mathbf{1}-\mathcal{M}$ where $%
\mathbf{1}$ is the unit vector, which corresponds to the frictionless steady
state of $\mathcal{M}_t$. The vector $\mathcal{M}$ satisfies the following
program, where bold letters indicate vectors, $\tilde{\lambda}(Y_t;\theta )$
is the matrix containing the distortions for every $t$ for every variable $j$%
, $\tilde{\lambda}_{j,t}$ and $(\pi _{1},\pi _{2},\pi _{3})$ are the
corresponding Lagrange multiplier vectors.
Moreover, there are multiple $\mathcal{M}_{t}$ that satisfy the
restriction $\mathbb{E}(1-\mathcal{M}_{t})\tilde{\lambda}_{t}=\mathbb{E}%
\tilde{\lambda}_{t}$. This is exactly what we mean by a multiplicity of
conditional models. In order to get a unique conditional model, we need to
impose more restrictions on the stochastic behavior of $\mathcal{M}_{t}$ and
therefore $\tilde{\lambda}_{t}$. 

To do this, we introduce a pseudo-distance
metric $d(\mathcal{M}_{t})$, which we minimize subject to the restriction $%
\mathbb{E}(\mathcal{M}_{t}-1)\tilde{\lambda}_{t}=\mathbb{E}\tilde{\lambda}%
_{t}$. The choice of the metric depends on the modellers' beliefs of the
distribution of the shocks. 

We first state the optimization problem for a general distance \small $d(\mathcal{\tilde{M}})$ 
\begin{eqnarray*}
	\max_{\mathcal{M}} && - d(\mathcal{\tilde{M}}) \\
	\text{subject to} &&\mathbf{1}^{\mathbf{T}}\mathcal{\tilde{M}} = 0\quad (\lambda _{1}) \\
	&&\tilde{\mathcal{M}}^{\mathbf{T}}q_{j}(Y;\theta )+\mathbf{1}^{\mathbf{T}}q_{j}(Y;\theta )=0,\quad j=1,..p\quad ,(\lambda _{2,j}) \\
	&&\tilde{\mathcal{M}}{}^{\mathbf{T}}q_{j}(Y;\theta )+\left[ \mathbf{1}^{%
		\mathbf{T}}q_{j}(Y;\theta )\right] _{+}=0,\quad j=p+1,..r \\
	&&\mathcal{M\geq }\mathbf{0} \quad(\lambda _{3,t},\forall t\in(1..T)
\end{eqnarray*} \normalsize

where $q(Y;\theta )$ is the matrix containing the moment functions and $%
(\lambda _{1},\lambda _{2},\lambda _{3})$ the corresponding Lagrange multiplier vectors. The first constraint imposes unit expectation while the
last constraint imposes non negativity of $\mathcal{M}$. The rest of the constraints
impose the moment equalities and inequalities e.g. $[x]_{+}\equiv \max (x,0)$
Denote by $\tilde{d}(\mathcal{M}_{t})$ as the inverse function of $d(\mathcal{M}_{t})$. The Kuhn Tucker first order necessary conditions are the following \small 

\begin{eqnarray*}
	\mathcal{M}^{*} &=&\mathbf{q(Y;\theta)}\lambda_{2}+\lambda_{1}+\mathbf{\lambda_{3}}\\
	\tilde{d}(\mathbf{q(Y;\theta)}\lambda_{2}+\lambda_{1}+\mathbf{\lambda_{3}}){}^{T}\mathbf{q(Y;\theta)}+\mathbf{1}^{T}\mathbf{q(Y;\theta)} & = & 0,\;(\lambda_{2})\\
	\mathbf{1}^{T}\tilde{d}(\mathbf{q(Y;\theta)}\lambda_{2}+\lambda_{1}+\mathbf{\lambda_{3}}) & = & 0,\;(\lambda_{1})\\
	\lambda_{3,t} &\geq& 0\\
	\lambda_{3,t}(\tilde{d}(q_{t}(Y,\theta)\lambda_{2}+\lambda_{1}+\lambda_{3,t})) & = & 0,\;(\lambda_{3,t})
\end{eqnarray*} \normalsize

Below we illustrate what happens when 
$d(\mathcal{M}_{t})\equiv \frac{1}{2}(\mathcal{M}-1)^{T}(\mathcal{M}-1)$ (chi square distance) which is consistent with the shocks having finite second moments 

\footnote{
	For any random variable $x$ and distorting density $\mathcal{M}_{t}$ by
	Cauchy Schwartz we have that $(\int x_{t}\mathcal{M}_{t}d\mathbb{P}%
	)^{2}\leq \int x_{t}^{2}d\mathbb{P}\int \mathcal{M}_{t}^{2}d%
	\mathbb{P}$. Minimizing the variance of the second term assumes that
	the variance of the first term exists.}. Note that this distance metric is
also used in the classic mean-variance frontiers in portfolio choice theory,
or to compute Hansen-Jagganathan bounds. Intuitively, the minimization
implies that we look for $\mathcal{M}_{t}$ that is consistent with our
moment restrictions and has minimum variance \footnote{%
	This is similar to the approach in generalized empirical likelihood settings
	in econometrics (i.e. \citet{ECTA:ECTA482}).}. Restricting the distribution
of the shocks pins down a unique conditional model corresponding to a value
of $\theta $.

Ignoring the non - negativity constraint we get an analytical solution. Solving the dual problem and
concentrating out the first constraint leads to solutions $(\mathcal{M}%
^{\ast },\lambda _{2}^{\ast })$ that satisfy the following system: 
\begin{equation*}
\left[ 
\begin{array}{cc}
I_{T} & \underset{T\times r}{-(q(Y;\theta )-\bar{q}(Y;\theta ))} \\ 
\underset{r\times T}{q(Y;\theta )^{\mathbf{T}}} & \underset{r\times r}{%
	\mathbf{0}}%
\end{array}%
\right] \left[ 
\begin{array}{c}
\underset{T\times 1}{\tilde{\mathcal{M}}} \\ 
\underset{r\times 1}{\lambda _{2}}%
\end{array}%
\right] =\left[ 
\begin{array}{c}
\underset{T\times 1}{0} \\ 
\begin{array}{c}
-\underset{p\times 1}{\mathbf{1}^{\mathbf{T}}q_{1}(Y;\theta )} \\ 
\underset{(r-p)\times 1}{-\left[ \mathbf{1}^{\mathbf{T}}q_{2}(Y;\theta )%
	\right] _{+}}%
\end{array}%
\end{array}%
\right] 
\end{equation*}%
We therefore have that for $\tilde{q}(Y;\theta )\equiv q(Y;\theta )-\bar{q}%
(Y;\theta )$ and $\mathcal{V}(q(Y_{t};\theta ))\equiv T^{-1}\tilde{q}%
(Y;\theta )^{\mathbf{T}}q(Y;\theta )$ 
\begin{eqnarray*}
	\left[ 
	\begin{array}{c}
		\tilde{\mathcal{M}} \\ 
		\underset{r\times 1}{\lambda _{2}}%
	\end{array}%
	\right] &=&\left[ 
	\begin{array}{c}
		\tilde{q}(Y;\theta )\mathcal{V}(q(Y;\theta )){}^{-1}\left[ 
		\begin{array}{c}
			\bar{q}_{1}(Y;\theta ) \\ 
			{}[\bar{q}_{2}(Y;\theta )]_{+}%
		\end{array}%
		\right]  \\ 
		-\mathcal{V}(q(Y;\theta )){}^{-1}\left[ 
		\begin{array}{c}
			\bar{q}_{1}(Y;\theta ) \\ 
			{}[\bar{q}_{2}(Y;\theta )]_{+}%
		\end{array}%
		\right] 
	\end{array}%
	\right] 
\end{eqnarray*}%

Therefore, 

\begin{eqnarray}  \label{eq:Mtilde}
\tilde{\mathcal{M}}& =& w_{t}\left[ 
\begin{array}{c}
0 \\ 
{}[\bar{q}_{2}(Y;\theta )]_{+} \\ 
\end{array}%
\right]
\end{eqnarray}%

where $w_t\equiv\frac{\hat{Z}^2_t+{\hat\lambda}_t(\hat{\lambda}_t-\mathbb{E}%
	\hat{\lambda}_t)}{\mathbb{V}(\hat{Z}_t)+\mathbb{V}(\hat{\lambda}_t))}$. The
optimal $\mathcal{M}_t$ is a time varying function $w_t$ of the average
distortion over the sample, $ \bar{q}_{2}(Y;\theta) $. The weight $w_t$
is a function of the relative variability of $\hat\lambda$, which is a
function of the endogenous variables, and of $\hat{Z}_t$, which is function
of the shocks $Z_t$.

The solution above has been derived ignoring the non negativity constraint. Looking at $\tilde{\mathcal{M}_t}$
the constraint is violated with positive probability, since $\tilde{q}(Y;\theta)$ can take values lower than minus one. 
Taking into account the non-negativity constraint implies a non analytical solution. There is a variety of algorithms in 
quadratic optimization to deal with this issue. An alternative way is to use a penalty function that penalizes negative values
of $\mathcal{M}_t$. This also typically implies non-closed form solutions. For the adjustment cost example, we re-computed the estimated  $\mathcal{M}_t$ and we report it in Appendix B. As is evident, violations of the constraint can be minimal.
The existence of a unique function $\mathcal{\tilde{M}}\ast (\theta )$
implies that the set of models consistent with moment inequalities should
have a corresponding one-to-one relation to the identified set, the subset of $\Theta$ that satisfies 
those inequalities: {$\theta\in\Theta:\mathbb{E}q_{1}(Y,\theta)=0, \mathbb{E}q_{2}(Y,\theta)\geq 0$}.
This is made clear in Figure 9.1, which
depicts two theory-based moment inequality restrictions on the Euclidean
parameter space. The darker area is the identified set, and for the sake of
illustration, the point of intersection of the two lines is the combination $%
(\theta _{1},\theta _{2})$ that corresponds to the pseudo-true parameter
values of the case of no perturbation $(\mathcal{M}_{t}=1)$. The identified
set contains the true value, which maps one-to-one to the set of admissible
perturbations $\tilde{\mathbb{M}}_{2}$.

Moreover, choices of objective
functional other than $\frac{1}{T}\sum_{t\leq T}\mathcal{\tilde{M}}_{t}^{2}$
leads to different sorts of distortions. A general family of distances that
can account for non-linearities or non-normalities  is the Cressie -
Read divergence, of which Chi square is a special case (%
\citet{almeidagarcia2014,cressie_read}). As in the case of non-negative constraints, 
computing the multipliers might involve numerical optimization.  It is also important
to stress that for any choice of distance functional, the moment inequality constraints are
satisfied. Therefore, the choice of distance functional does not affect the consistency of the 
parameter estimates.

\section{Identification and Estimation }
\begin{example}
	\textbf{Identification in the case of capital adjustment costs}
	
	We assume that the representative firm faces adjustment costs with iid
	probability $B_{t}$ which is a random variable with mean $B$. As long as $B$
	is positive, the conditional mean of $\tilde{\lambda}_{t}$ is the same as
	the one derived in example 1. We focus on identification using the aggregate
	capital accumulation equation. Denote the solution of the frictionless model
	as: $K_{t}=\varphi _{k}(\theta )K_{t-1}+\varphi _{z}(\theta )Z_{t}$.
	Therefore, $\mathbb{E}_{t-1}K_{t}=\varphi _{k}(\theta )K_{t-1}$. Let $\zeta
	_{t}$ denote an instrument and $(\zeta ,K,K_{-1},B)$ the vectors containing
	data on $(\zeta _{t},K_{t},K_{t-1},B_{t})$. The two identifying conditions
	are: 	
	\begin{eqnarray}
	\mathbb{E}\mathbf{\zeta }^{T}(K-\varphi _{k}(\theta )K_{-1})
	&=&v_{1} \\
	\mathbb{E}\mathbf{\zeta }^{T}(K-\varphi _{k}(\theta )K_{-1}+|\lambda_{1}|K_{-1}\odot\mathcal{B})
	&=&v_{2}
	\end{eqnarray}
	\normalsize
	where $v_{1}\leq 0, v_{2}\leq 0$. Rearranging the first equation and letting $\hat{\varphi }_{k}\equiv \mathbb{E}%
	(\zeta ^{T}K_{-1}){}^{-1}\mathbb{E}\zeta ^{T}K,$ we get a lower bound for $%
	\varphi _{k}(\theta )$: 
	\begin{equation*}
	\varphi _{k}(\theta )=\hat{\varphi }_{k}-\mathbb{E}(\zeta
	^{T}K_{-1}){}^{-1}v_{1}\geq \hat{\varphi }_{k}
	\end{equation*}%
	Similarly, from the second equation, 
	Letting: \\ $\hat{{\varphi }}_{s}\equiv |\lambda_{1}|\mathbb{E}(\zeta ^{T}K_{-1})^{-1}\mathbb{E}(\zeta ^{T}(K_{-1}\odot\hat{\mathcal{B}}))$, we have: 
	\begin{equation}
	\varphi _{k}(\theta )=\hat{\varphi }_{k}+\hat{\varphi }_{s}-\mathbb{E}(\zeta ^{T}K_{-1})^{-1}v_{2}\geq 
	\hat{\varphi }_{k}+\hat{\varphi }_{s}  \label{eq:super_id_phi}
	\end{equation}%
	Clearly, as long as $B\in(0,1)$, \eqref{eq:super_id_phi} is more informative, for any $\lambda_{1}$. 	
	\end{example}

\subsection{Using Bootstrap to do Model Validation}
To get a better approximation to the finite sample distribution of
the test statistic, we propose the use of a suitable version of bootstrap. Given
a bootstrap sample $\{Y_{t,l}^{*}\}_{t\leq T,l\leq B}$ obtained with
a block bootstrap scheme we can compute the wedges to each equation,
using the plug-in estimate of $\theta$ under the survey-robust case
and the full model. We consider the re-centered bootstrapped moments,
$(\tilde{f}_{1}(Y*;\theta),\tilde{f}_{2}(Y*;\theta)..\tilde{f}_{k}(Y*;\theta))$
where $\tilde{f}_{j}(Y*;\theta)\equiv{f}_{j}(Y*;\theta)-\bar{f}_{j}(Y*;\theta)$.
We choose to recenter the moments since we deal with an over-identified
case, and therefore sample moments, $\bar{q}(Y;\theta)$, are not
exactly equal to zero. We obtain critical values by computing the
$(1-\alpha)-$quantile of $T\mathcal{W}*(\theta_{p},\Theta_{s})$.
That is, $c_{\alpha}$ is chosen such that $\mathbb{P}_{T}(T\mathcal{W}*(\theta_{p},\Theta_{s})<c_{\alpha})  =  1-\alpha$.
We therefore have that, given the uniform consistency of the bootstrap :
\begin{enumerate}
	\item Under $H_{0}:$ $\underset{T,B\to\infty}{p\lim}\mathbb{P}_{T}(T\mathcal{W}*(\theta_{p},\Theta_{s})<c_{\alpha}|Y_{t,t\leq T})= \mathbb{P}(T\mathcal{W}(\theta_{p},\Theta_{s})<c_{\alpha})= 1-\alpha$
	\item Under $H_{1}:$ $\underset{T,B\to\infty}{p\lim}\mathbb{P}_{T}(T\mathcal{W}*(\theta_{p},\Theta_{s})<c_{\alpha}|Y_{t,t\leq T})=\mathbb{P}(T\mathcal{W}*(\theta_{p},\Theta_{s})<c_{\alpha})=0$
\end{enumerate}
We illustrate below an example with which we show how the bootstrap behaves in large samples. Small sample distortions is an 
interesting topic to pursue, but is the subject af another paper. We use a regression based example, which is unrelated to survey data as 
such, but has the same econometric structure. 
\begin{example}\textbf{Measurement error in Regressors}
	Suppose there one independent measurement of a regressor and the model for the measurement error is $X_{1,t}=X_t^{*}+\nu_{1,t}$. Furthermore assume that $\nu_{1,t}\sim\mathcal{N}(0,0.2^2) $ and $\epsilon_t\sim\mathcal{N}(0,0.1)$. 
	The important assumption in this case is that she  "knows" all parameters apart from $\beta$  and that she \textbf{mistakenly assumes} that $\sigma_{\nu_{1,t}}=0.5$. 
	She uses Simulated Maximum likelihood to estimate $\beta_{m}$ where $Y_t = 0.2 +\beta X_t + \epsilon_t$. 
	A robust approach would be to be agnostic about the distribution of the errors and use the well known fact that
	$B_{0}=\left\{ \beta\in B: \beta_{ols}\leq \beta \right\}$. We can in principle use also the reverse regression to derive an upper bound but it is not necessary for our purposes. We therefore test $H_{0}:\beta_{m}\in B_0$.  
	
	To see the equivalence of this test to the test we propose, notice that under the Null \\$T^{-\frac{1}{2}}\inf_{\beta_0 \in B_0} (X^{T}(Y-\hat\beta_{0} X)-X^{T}(Y-\hat\beta_m X)$ is equal to $T^{-\frac{1}{2}}X^{T}X(\hat\beta_m -\beta_m -(\hat\beta^{*}_{0}-\beta^{*}_{0}))$. The residual in this case, $\sum_t \tilde{q}_2(.)$ is equal to $(\frac{X^{T}X}{T})^{-1}X^{T}\epsilon+\beta_{0}\frac{X^{T}X^{*}}{X^{T}X} + \frac{\tilde{X}^{T}\tilde{X}}{T})^{-1}\tilde{X}^{T}\epsilon+\beta_{0}\frac{\tilde{X}^{T}X^{*}}{\tilde{X}^{T}\tilde{X}}$. We plot below the bootstrap distribution versus the simulated test statistic, 
	which in this case according to our theoretical result is, for $\lambda_{x_1,x_2}\equiv\frac{\sigma_{x_1}^2}{\sigma_{x_2}^2}$, $TW\sim(\beta^{2}((1-\lambda_{x,x^{*}})(\lambda_{x,x^{*}}-2\lambda_{x,\tilde{x}})+\lambda_{x,\tilde{x}}(1-\lambda_{x,\tilde{x}}))+\lambda_{\epsilon,x}(\lambda_{x,x^{*}}-\lambda_{x,\tilde{x}})\chi^{2}(1)$
	
	\begin{figure}[H]
		\begin{centering}
			\includegraphics[scale=0.4]{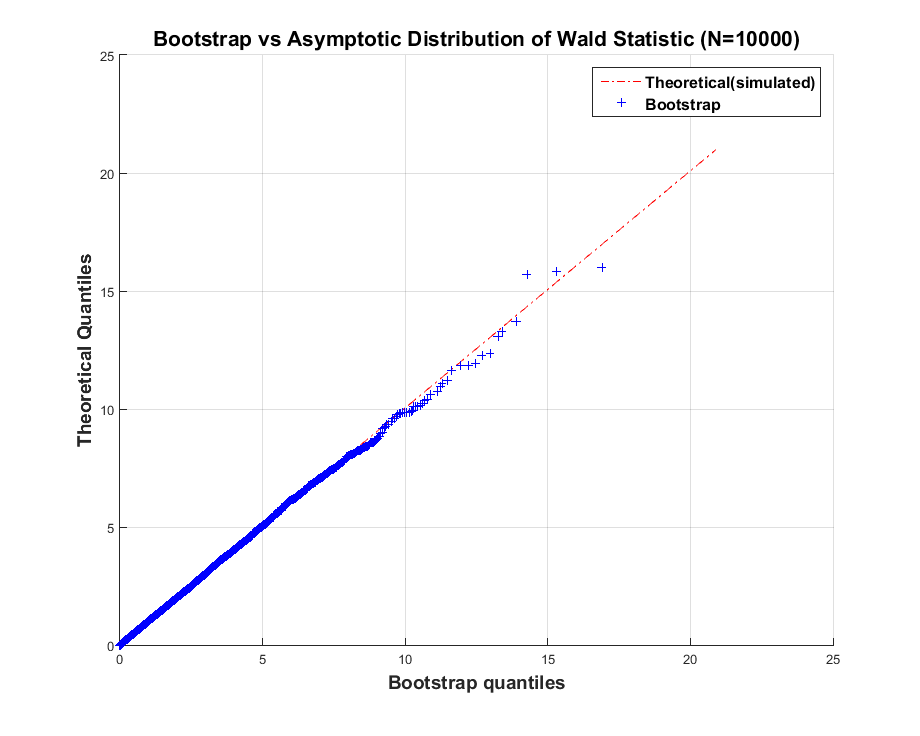}
			\par\end{centering}
		\begin{centering}
			\protect\caption{\small Q-Q plot of Bootstrap versus Asymptotic distribution of $W^{*}$}
		\end{centering}
	\end{figure}
\end{example}

\newpage
\subsection{\textbf{Spanish Survey Data}}

\small Below, we plot the aggregate response to production constraints due to financial issues in the industrial sector. 
\begin{figure}[H]
	\begin{centering}
		\includegraphics[scale=0.45]{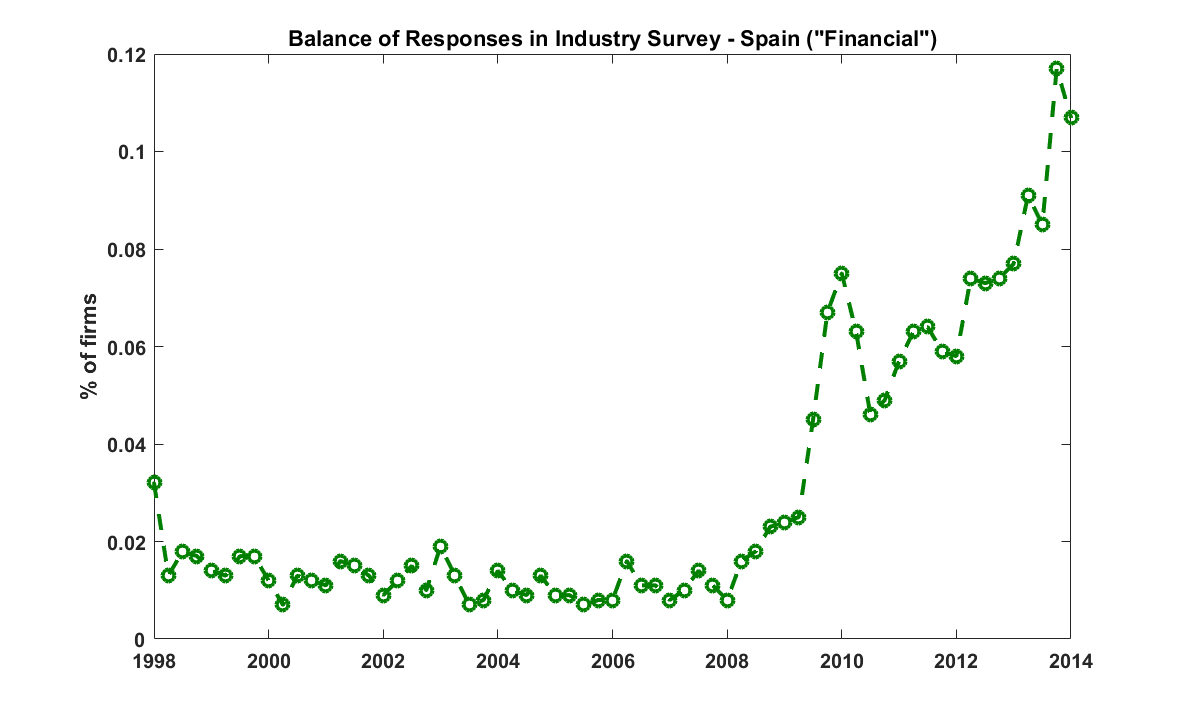}
		\par
	\end{centering}
	\protect\caption{}
\end{figure}
\newpage
\begin{table}[H]
	\begin{centering}
		\caption{Confidence Set for $\Theta_{IM}$ and $\Theta_{CM}$ - without survey data}\bigskip
		\begin{tabular}{|c||c|c||c|c||}
			\hline
			&  \textbf{Incomplete Model}&& \textbf{Complete Model}\\\hline 
			Parameter & $q_{2.5\%}$ & $ q_{97.5\%}$ & $q_{5\%}$ & $ q_{95\%}$ \\ \hline 
			$\sigma_c$&2.1350 & 2.3280&28.6900 & 30.0000 \\ \hline 			
			$\phi_{p}$& 6.4240 & 6.7350&21.5700 & 21.7000\\ \hline			
			$g_{y}$&0.3112 & 0.4583&0.2432 & 0.4827\\  \hline
						
			$\lambda_{h}$ &0.6327 & 0.6674&0.0009& 0.0009\\  \hline	
					
			$\sigma_{l}$& \textit{0.0000} & 0.0461&8.892 & 9.104\\ \hline	
					
			$\xi_{p}$&0.9913 & 0.9999&0.0000& 0.0000 \\ \hline		
				
			$\xi_{w}$& 0.9504 & 0.9883&0.8173 & 0.8628 \\  \hline
			
			$\iota_{w}$& \textit{0.0000} &  \textit{0.0000} &0.9748 & 0.9999\\ \hline
			
			$\iota_{p}$&  0.9949 & 0.9999&0.0004 & 0.0004\\ \hline	
					
			$z_{1}$	&0.7953 & 0.8228&0.0009 & 0.0009\\ \hline	
					
			$\rho_{a}$	&0.3973 & 0.4602&0.7521 & 0.9092 \\ \hline	
					
			$\rho_{b}$ &\textit{0.0000}&\textit{0.0000}&\textit{0.0000}&\textit{0.0000}\\ \hline		
				
			$\rho_{g}$ &0.2649 & 0.3728&0.9999 & 0.9999\\ \hline
						
			$\rho_{qs}$&\textit{0.0000} & 0.05588&0.3242 & 0.4079\\ \hline	
					
			$\rho_{ms}$	&0.0567 & 0.2612&0.7347 & 0.8015\\ \hline
			
			$\rho_{p}$&0.5991 & 0.6485&0.6329 & 0.7077\\ \hline
						
			$\rho_{w}$ &\textit{0.0000}& 0.1314&0.0003 & 0.0003\\\hline	
			
			$ma_{p}$&0.9753 & 0.9999&0.0129 & 0.0561\\\hline	
								
			$ma_{w}$& 0.4935 & 0.4935 &0.0010 & 0.0011\\\hline
			
			$\nu$&0.8040 & 0.8190&0.0053 & 0.0205\\\hline
						
			$\rho_y\star$&0.4162 & 0.5462&0.1683 & 0.2251\\ \hline		
				
			$\sigma_a$&8.8820 & 8.9960&0.0245 & 0.0339\\ \hline	
					
			$\sigma_b$& 0.0952 & 0.1342&3.4480 & 3.4800\\ \hline
			
			$\sigma_g$&19.930 & 19.999&4.5820 & 4.6340\\\hline
			
			$\sigma_{qs}$&7.0010 & 7.1120&0.1831 & 0.2064\\\hline
			
			$\sigma_m$&0.4271 & 0.4933&1.0250 & 1.0670\\\hline
			
			$\sigma_p$&0.0730 & 0.1555&4.2260 & 4.2660\\\hline
			$\sigma_w$ &-&-&- & -\\\hline
			$\sigma_f$ &- & - &- & -\\\hline
			$\phi_{adj}$&-&-&5.9210 & 5.9550\\\hline
			$\chi_{pr}$&-&-&0.9862 & 0.9999\\\hline
			$\gamma$&-&-& 0.9944 & 0.9950\\\hline
			$K/N$&-&-&4.3330 & 4.3490\\\hline
		\end{tabular}
		\par\end{centering}
\end{table}

Note: $\sigma_w$ and $\sigma_f$ are set to 1. Parameters in italic font reached the boundary and thus calibrated to the boundary value.

\newpage

\section*{\textbf{Estimated Wedges-No survey data}- $5\%$ level} \begin{figure}[H]
	\begin{centering}
		\includegraphics[scale=0.47]{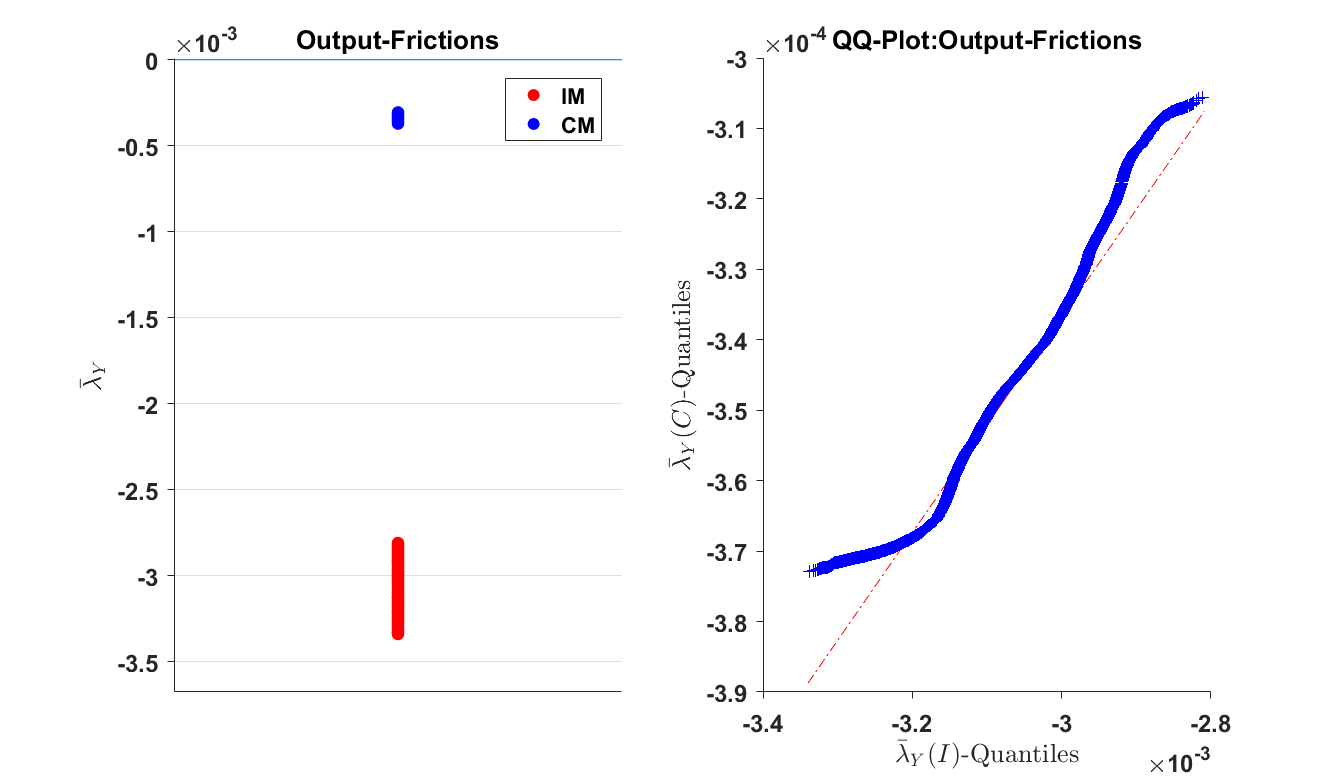}
		\includegraphics[scale=0.47]{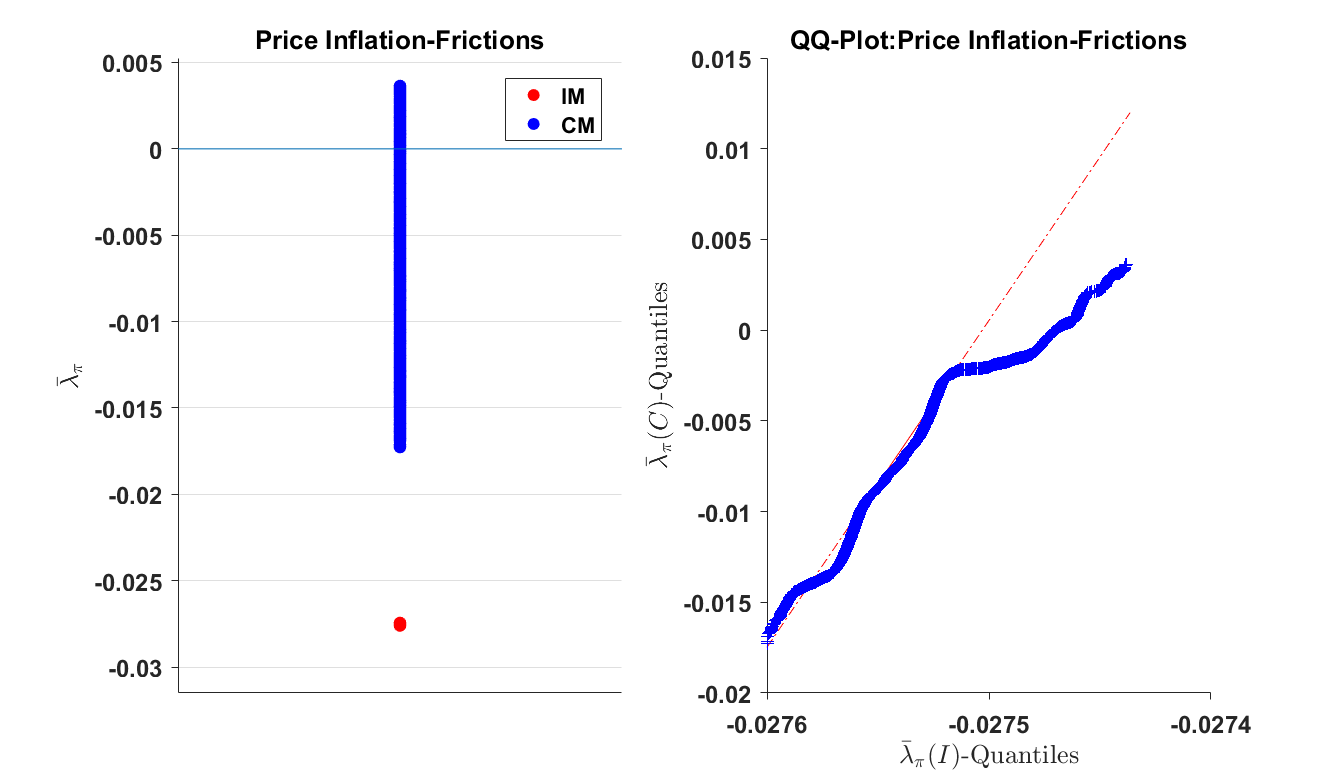}
		\protect\caption{\small Full Sample Wedges (due to Financial Frictions and Capital Adjustment Costs)}
	\end{centering}
\end{figure}

\begin{figure}[H]
	\begin{centering}
		\includegraphics[scale=0.47]{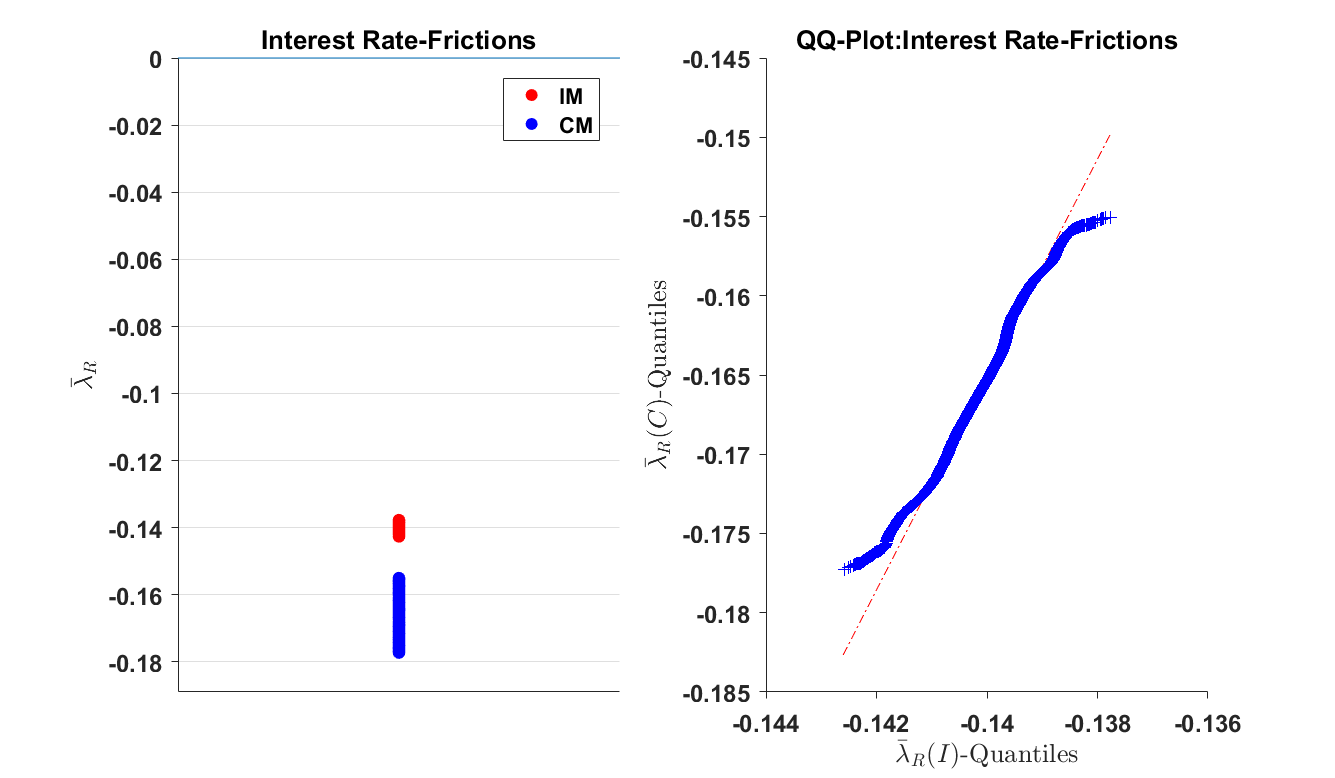}
		\includegraphics[scale=0.47]{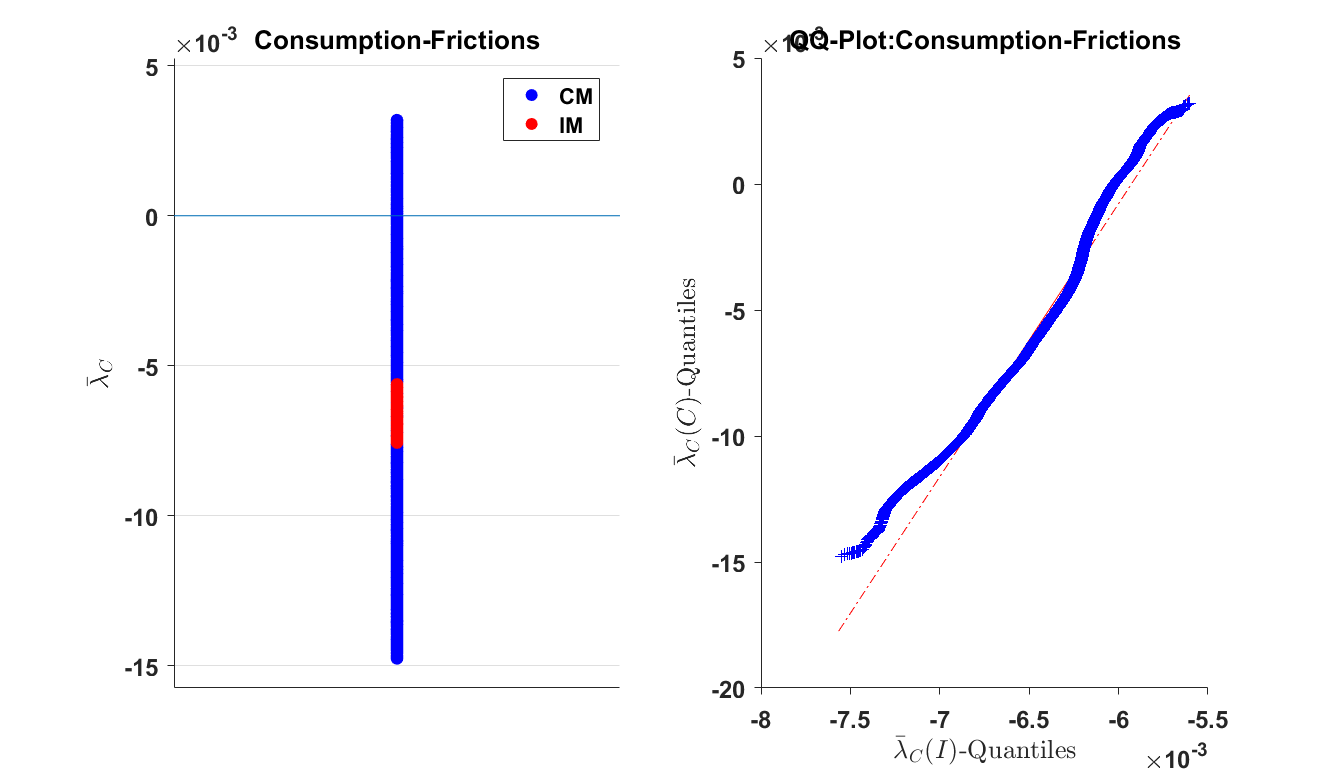}
		\protect\caption{Full Sample Wedges (due to Financial Frictions and Capital Adjustment Costs)}
	\end{centering}
\end{figure}
\begin{figure} [H]
	\begin{centering}
		\includegraphics[scale=0.47]{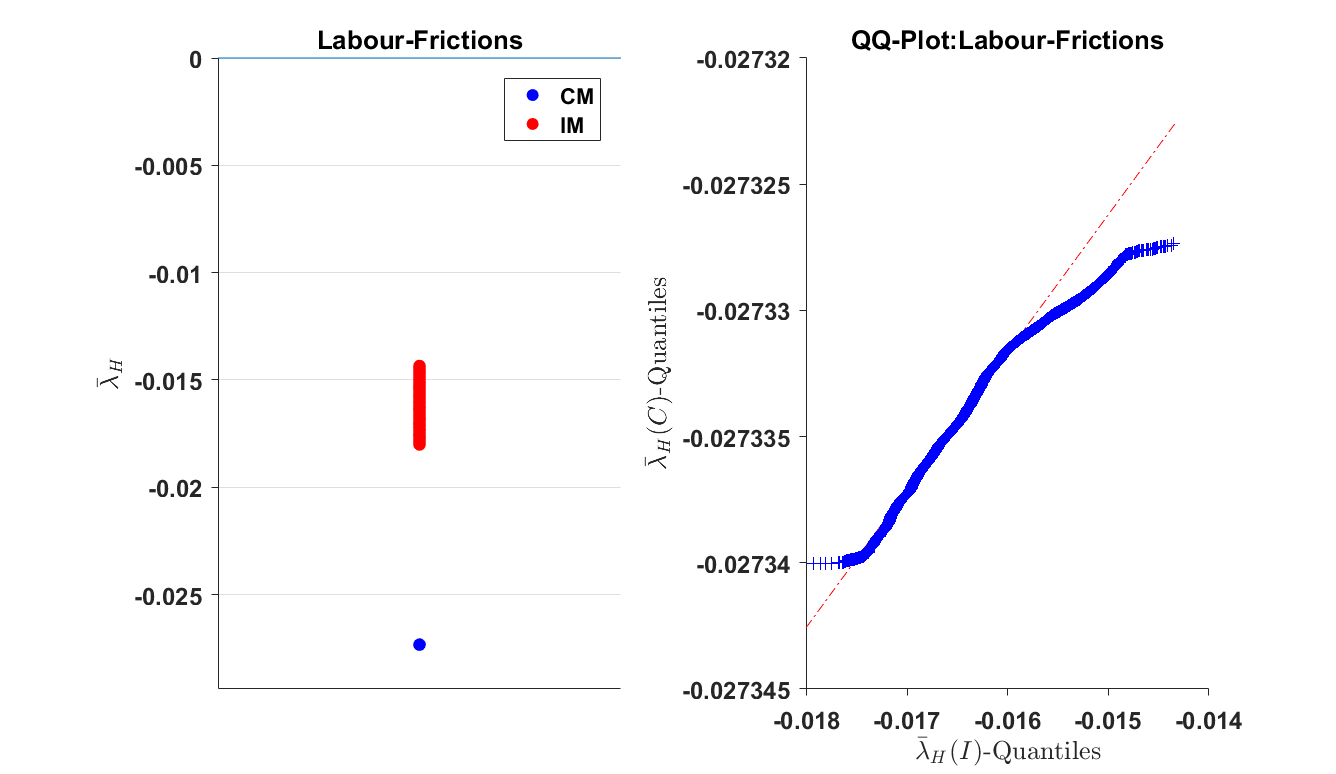}
		\includegraphics[scale=0.47]{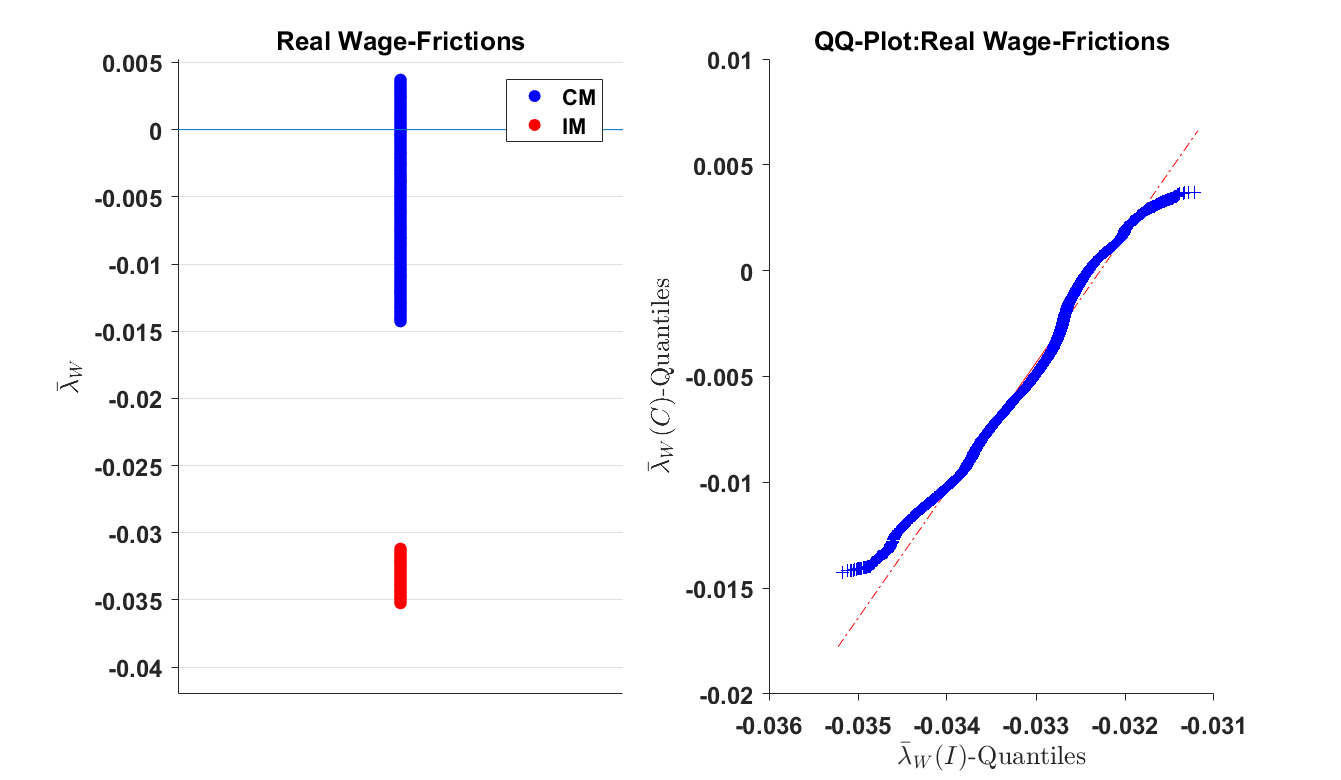}
		\protect\caption{Full Sample Wedges (due to Financial Frictions and Capital Adjustment Costs)}
	\end{centering}
\end{figure}

\begin{figure}[H]
	\begin{centering}

		\includegraphics[scale=0.47]{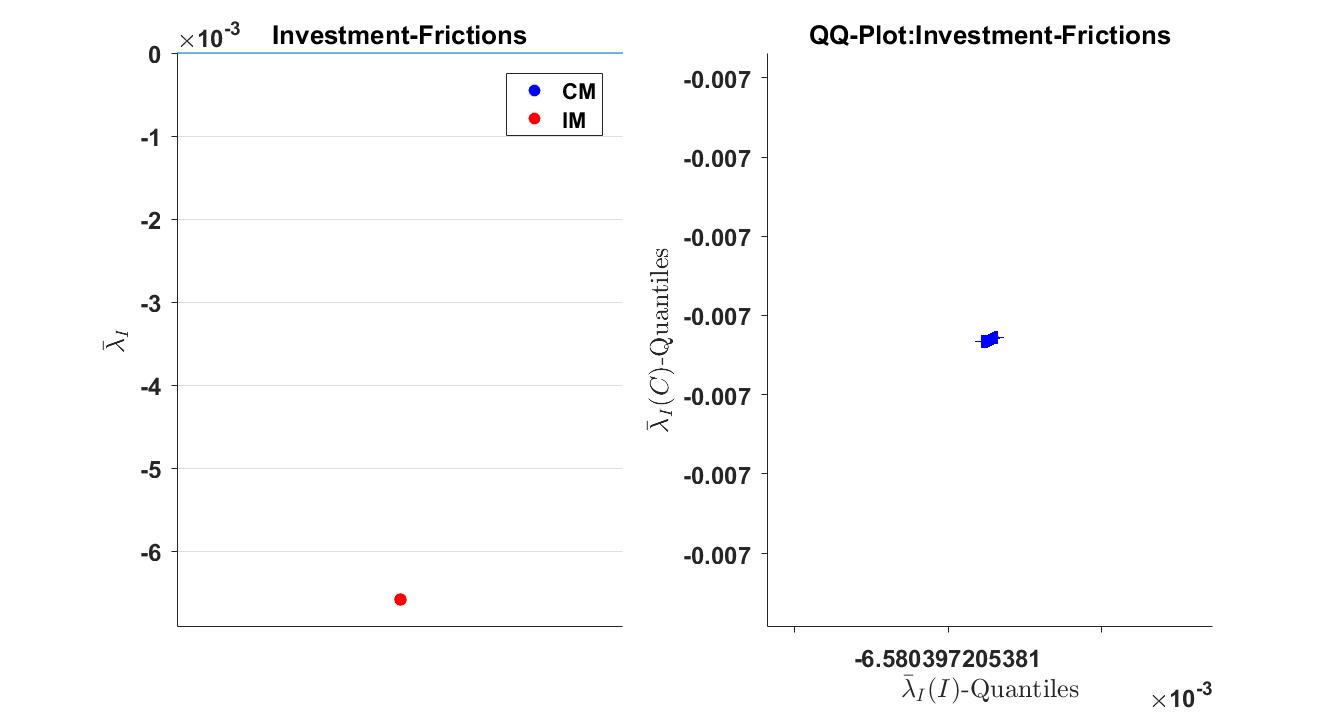}
		\protect\caption{Full Sample Wedges (due to Financial Frictions and Capital Adjustment Costs)}
	\end{centering}
\end{figure}

\bibliographystyle{econometrica}
\bibliography{biblio_thesis}

\end{document}